\newcounter{CONT}
\newtheorem{theorem}{Theorem}
\newtheorem{definition}{Definition}
\newtheorem{remark}{Remark}
\newtheorem{proposition}{Proposition}
\newtheorem{lemma}{Lemma}
\newtheorem{corollary}{Corollary}
\newenvironment{proof}{\noindent{\bf Proof.}}
{\hspace*{\fill}$\Box$\par\vspace{4mm}}
\newcommand{\qed} {}
\newcommand{\claimbegin}[1] {\par\noindent\underline{Proof of #1:}}
\newcommand{\claimend}[1] {\hfill\underline{End proof of #1.}\par\vspace*{2mm}}
\newcommand{\Left} {\text{Left}}
\newcommand{\Right} {\text{Right}}
\newcommand{\dist}{\textrm{dist}}
\newcommand{\cc} {w}
\newcommand{\MF} {\textit{MF}}
\begin{document}
\pagestyle{plain}

\title{How Vulnerable is an Undirected Planar Graph \\with respect to Max Flow}


\author{Lorenzo Balzotti\footnote{Dipartimento di Scienze Statistiche, Sapienza Universit\`a di Roma, p.le Aldo Moro 5, 00185 Roma, Italy. \texttt{lorenzo.balzotti@uniroma1.it}.}
\and{Paolo G. Franciosa\footnote{Dipartimento di Scienze Statistiche, Sapienza Universit\`a di Roma, p.le Aldo Moro 5, 00185 Roma, Italy. \texttt{paolo.franciosa@uniroma1.it}}}}

\date{}
\maketitle
%
%

\textbf{Abstract}
We study the problem of computing the vitality of edges and vertices with respect to the $st$-max flow in undirected planar graphs, where the vitality of an edge/vertex is the $st$-max flow decrease when the edge/vertex is removed from the graph. This allows us to establish the vulnerability of the graph with respect to the $st$-max flow.

We give efficient algorithms to compute an additive guaranteed approximation of the vitality of edges and vertices in planar undirected graphs. We show that in the general case high vitality values are well approximated in time close to the time currently required to compute $st$-max flow $O(n\log \log n)$. We also give improved, and sometimes optimal, results in the case of integer capacities. All our algorithms work in $O(n)$ space.

\vspace{2mm}

\textbf{Keywords} {planar graphs, undirected graphs, max flow, vitality, vulnerability}

\section{Introduction}

Max flow problems have been intensively studied in the last 60 years, we refer to~\cite{ahuja-magnanti_1,ahuja-magnanti_2} for a comprehensive bibliography. Currently, the best known algorithms for general graphs~\cite{king-rao,orlin} compute the max flow between two vertices in $O(mn)$ time, where $m$ is the number of edges and $n$ is the number of vertices.

Italiano \emph{et al.}~\cite{italiano} solved the max flow problem for undirected planar graphs in $O(n\log\log n)$  time. For directed $st$-planar graphs (i.e., graphs allowing a planar embedding with  $s$ and  $t$ on the same face) finding a max flow was reduced by Hassin~\cite{hassin} to the single source shortest path (SSSP) problem, that can be solved in $O(n)$  time by the algorithm in~\cite{henzinger}.  For the planar directed case, Borradaile and Klein~\cite{borradaile-klein} presented an $O(n\log n)$  time algorithm.  In the special case of directed planar unweighted graphs, a linear time algorithm was proposed by Eisenstat and Klein~\cite{eisenstat-klein}.

The effect of edges deletion on the max flow value has been studied since 1963, only a few years after the seminal paper by Ford and Fulkerson~\cite{ford-fulkerson_1} in 1956. Wollmer~\cite{wollmer} presented a method for determining the most vital edge (i.e., the edge whose deletion causes the largest decrease of the max flow value) in a railway network. A more general problem was studied in~\cite{ratliff-sicilia}, where an enumerative approach is proposed for finding the $k$ edges whose simultaneous removal causes the largest decrease
in max flow. Wood~\cite{wood} showed that this problem is NP-hard in the strong sense, while its approximability has been studied in~\cite{alter-ergun,phillis}. 

In this paper we deal with the computation of  \emph{vitality} of edges and vertices with respect to the $st$-max flow, where $s$ and $t$ are two fixed vertices. The vitality of an edge $e$ (resp., of a vertex $v$) measures the $st$-max flow decrease observed after the removal of edge $e$ (resp., all edges incident to $v$) from the graph. 

A reasonable measure of the overall vulnerability of a network can be the number of edges/vertices with high vitality. So, if all edges and vertices have small vitality, then the graph is robust. We stress that verifying the robustness/vulnerability of the graph by using previous algorithms requires to compute the exact vitality of all edges and/or vertices. We refer to~\cite{alderson-brown,mattsson-jenelius,murray} for surveys on several kind of robustness and vulnerability problems discussed by an algorithmic point of view. 

A survey on vitality with respect to the max flow problems can be found in~\cite{ausiello-franciosa_1}. In the same paper, it is shown that for $st$-planar graphs (both directed or undirected) the vitality of all edges and all vertices can be found in optimal $O(n)$ time. Ausiello \emph{et al.}~\cite{ausiello-franciosa_2} proposed a recursive algorithm that computes the vitality of all edges in an undirected unweighted planar graph in $O(n \log n)$  time.

The vitality has not been studied directly in~\cite{italiano,lacki-sankowski}, but their dynamic algorithm leads to the following result.

\begin{theorem}[\cite{italiano,lacki-sankowski}]\label{th:italiano_dynamic}
Let $G$ be a planar graph with positive edge capacities. Then it is possible to compute the vitality of $h$ single edges or the vitality of a set of $h$ edges in $O(\min\{\frac{hn}{\log n}+n\log\log n,$ $hn^{2/3}\log^{8/3} n+n\log n\})$.
\end{theorem}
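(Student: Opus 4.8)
The plan is to reduce the vitality computation to incremental/decremental dynamic max-flow queries. The key observation is that the vitality of an edge $e$ is exactly $\text{val}(G) - \text{val}(G - e)$, where $\text{val}(\cdot)$ denotes the $st$-max flow value; similarly for a set $S$ of $h$ edges the vitality is $\text{val}(G) - \text{val}(G - S)$. So I would first compute $\text{val}(G)$ once using the static algorithm of Italiano \emph{et al.}~\cite{italiano} in $O(n\log\log n)$ time, and then use the dynamic data structure of~\cite{italiano,lacki-sankowski} to maintain the max-flow value under edge deletions. For the single-edge case, I would perform $h$ independent "delete $e$, query, re-insert $e$" operations; for the set case, a single batch of $h$ deletions followed by one query.

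**Extracting the two running-time bounds.** The two terms in the min come from two different tradeoffs in the dynamic planar max-flow machinery. The first bound, $\frac{hn}{\log n} + n\log\log n$, comes from the approach where each individual update costs $O(n/\log n)$ amortized (this is the cost of a single edge-capacity change in the dynamic structure built on an $r$-division with $r = \log^2 n$, after an $O(n\log\log n)$ preprocessing that includes computing the initial max flow). The second bound, $hn^{2/3}\log^{8/3} n + n\log n$, comes from a coarser $r$-division (with a different choice of $r$, roughly $r = n^{2/3}$ up to logarithmic factors) that gives $O(n\log n)$ preprocessing but a higher per-update cost of $O(n^{2/3}\log^{8/3} n)$. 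I would cite the precise statements from~\cite{lacki-sankowski} for the dynamic planar max-flow data structure and plug $h$ updates into each, then take the minimum since we are free to choose whichever data structure is cheaper for the given $h$.

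**The main obstacle.** The delicate point is that the dynamic algorithms of~\cite{italiano,lacki-sankowski} are stated for maintaining max flow under a sequence of updates, not directly for the "what-if" vitality query (delete, query, undo). For the set-of-$h$-edges case this is immediate — one just performs the $h$ deletions and reads off the value. For the single-edge case one needs the dynamic structure to support either undo/rollback or, equivalently, to be able to reset to the initial state cheaply after each probe; I would argue this follows because each query touches only the pieces of the $r$-division containing the edge $e$, so the state can be restored in time proportional to the update cost. Handling vertices reduces to edges by the standard trick of replacing a vertex $v$ with a small gadget (split $v$ into $v_{\text{in}}, v_{\text{out}}$ joined by an edge, or simply note that deleting all edges incident to $v$ is a set deletion of $\deg(v)$ edges), so no new ideas are needed there; I would just remark on it. The bulk of the write-up is therefore bookkeeping: matching the parameter choices in the cited dynamic data structures to the two claimed bounds and verifying that the additive preprocessing terms $n\log\log n$ and $n\log n$ dominate the one-time max-flow computation.
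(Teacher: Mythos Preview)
The paper does not give its own proof of this statement: Theorem~\ref{th:italiano_dynamic} is presented as a direct consequence of the dynamic planar max-flow/min-cut data structures of~\cite{italiano,lacki-sankowski}, with the authors remarking only that ``the vitality has not been studied directly in~\cite{italiano,lacki-sankowski}, but their dynamic algorithm leads to the following result.'' Your proposal is therefore exactly the intended derivation --- preprocess once, then for each edge perform a delete/query/re-insert cycle, and read off the two bounds from the two parameter regimes of the cited dynamic structures.

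One small simplification: your worry about needing an explicit ``undo/rollback'' mechanism is unnecessary. The data structures in~\cite{italiano,lacki-sankowski} are fully dynamic, supporting both deletions and insertions (equivalently, arbitrary capacity changes), so re-inserting the deleted edge is just another ordinary update at the same amortized cost; no special restoration argument via $r$-divisions is required. Also, the remark about vertices is extraneous here --- the theorem as stated concerns only edges, and the paper handles vertex vitality by entirely different means (Proposition~\ref{prop:vitality_vertices} and Theorem~\ref{th:brutto}).
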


\paragraph{Our contribution} 
We propose fast algorithms for computing an additive guaranteed approximation of the vitality of all edges and vertices whose capacity is less than an arbitrary threshold $c$. Later, we explain that these results can be used to obtain a useful approximation of vitality for general distribution of capacities and in the case of power-law distribution. We stress that in real world applications we are usually interested in finding edges and or vertices with high vitality, i.e., edges or vertices whose removal involves relevant decrease on the max flow value.

Our main results are summarized in the following two theorems. For a graph $G$, we denote by $E(G)$ and $V(G)$ its set of edges and vertices, respectively. Let $c:E(G)\to\mathbb{R}^+$ be the \emph{edge capacity function}, we define the capacity $c(v)$ of a vertex $v$ as the sum of the capacities of all edges incident on $v$. Moreover, for every $x\in E(G)\cup V(G)$ we denote by $vit(x)$ its vitality with respect to the $st$-max flow. We show that we can compute a value $vit^\delta(x)$ in $(vit(x)-\delta,vit(x)]$ for any $\delta>0$.

\begin{theorem}\label{th:main_real_edge}
Let $G$ be a planar graph with positive edge capacities. Then for any $c,\delta>0$, we compute a value $vit^\delta(e)\in(vit(e)-\delta,vit(e)]$ for all $e\in E(G)$ satisfying $c(e)\leq c$, in $O(\frac{c}{\delta}n+n\log\log n)$  time.
\end{theorem}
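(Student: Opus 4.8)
The plan is to reduce the computation of $vit(e)$ to a family of minimum $st$-cut / minimum separating-cycle queries in the planar dual, and then to exploit that an edge of small capacity has small vitality, so its vitality only needs to be resolved up to $\delta$ and only while it stays below $c$.

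First I would record the combinatorial identity behind everything. For any $st$-cut $C$ of $G$ and any edge $e$, the set $C\setminus\{e\}$ is an $st$-cut of $G\setminus e$, and conversely every $st$-cut of $G\setminus e$ arises in this way; hence $\text{maxflow}_{G\setminus e}(s,t)=\min_{C}c(C\setminus\{e\})$ over all $st$-cuts $C$ of $G$, which yields
\[
vit(e)=\max\bigl\{0,\ c(e)-(\text{mc}(e)-\text{MF})\bigr\},
\]
where $\text{MF}=\text{maxflow}_G(s,t)$ and $\text{mc}(e)$ is the minimum capacity of an $st$-cut of $G$ containing $e$. By planar duality, $\text{mc}(e)$ equals the minimum weight of an $st$-separating cycle of $G^\ast$ through the dual edge $e^\ast$ (capacities read as lengths) -- exactly the kind of object underlying the $O(n\log\log n)$ max-flow algorithm of Italiano \emph{et al.} So the whole task becomes: approximate $\text{mc}(e)$ for every $e$ with $c(e)\le c$.

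Next I would exploit smallness. Since $vit(e)\le c(e)\le c$, and $vit(e)>0$ precisely when $\text{mc}(e)<\text{MF}+c(e)$, we only ever need $\text{mc}(e)$ inside the window $[\text{MF},\text{MF}+c)$ and only to additive precision $\delta$. Bucket the relevant edges by capacity into the $O(c/\delta)$ classes $B_j=\{e:(j-1)\delta<c(e)\le j\delta\}$; an edge of $B_j$ has $vit(e)\le j\delta$, so a $\delta$-resolution of $\text{mc}(e)$ in $[\text{MF},\text{MF}+j\delta)$ suffices for it. This reduces everything to: for each of the $O(c/\delta)$ levels $L_i=\text{MF}+i\delta$, compute a sound and nearly-complete approximation of the set $T_i$ of edges lying on an $st$-separating cycle of $G^\ast$ of weight at most $L_i$. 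From the nested family $T_0\subseteq T_1\subseteq\cdots$ a monotone table lookup gives, for each target edge $e\in B_j$, a value $vit^\delta(e)$ with $vit^\delta(e)\le vit(e)$ and $vit^\delta(e)>vit(e)-\delta$ (halving $\delta$ absorbs the factor-$2$ loss coming from the bucket width).

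The main obstacle, I expect, is computing each $T_i$ \emph{soundly} in $O(n)$ time in the general, not necessarily $st$-planar, case. When $s$ and $t$ lie on a common face, $st$-separating cycles are exactly $h_1$--$h_2$ paths in $G^\ast$ for the two poles $h_1,h_2$ created by an artificial $st$-edge, so two shortest-path trees plus the $O(1)$ test ``$d(h_1,x)+w(e^\ast)+d(y,h_2)\le L_i$ or its mirror image'' decide membership of $e^\ast=(x,y)$ in $T_i$ in total $O(n)$ time -- which already recovers the known $st$-planar $O(n)$ results. In general there is no such pair of poles, so membership in $T_i$ has to be read off from the structure Italiano \emph{et al.}\ build once in $O(n\log\log n)$, then queried at each of the $O(c/\delta)$ weight levels in only $O(n)$; moreover soundness is delicate, because concatenating shortest subpaths produces a closed walk, not necessarily a \emph{simple} $st$-separating cycle, so an edge could be wrongly certified -- it is planarity (non-crossing shortest paths) that rescues the sound side, while the built-in $\delta$-slack lets the complete side be lossy. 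Assembling the pieces gives one $O(n\log\log n)$ max-flow / minimum-separating-cycle computation followed by $O(c/\delta)$ rounds of $O(n)$ work, hence $O(\tfrac{c}{\delta}n+n\log\log n)$ time; each round reuses the same $O(n)$ workspace, so the algorithm runs in $O(n)$ space.
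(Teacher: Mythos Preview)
Your reduction is correct and matches the paper: $vit(e)=\max\{0,c(e)-(\text{mc}(e)-\MF)\}$, and by the Itai--Shiloach cut along $\pi$ this becomes $\MF_e=\min_{i\in[k]}d_i(e^D)$ (the paper's Proposition~\ref{prop:vitality_edges}). You are also right that only indices $i$ with $d_i<\MF+c$ matter, and that partitioning those indices into $O(c/\delta)$ levels $L_r=\{i:d_i\in[\MF+r\delta,\MF+(r+1)\delta)\}$ is the right granularity. The capacity-bucketing of edges is unnecessary but harmless.

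The gap is exactly the step you flag as ``the main obstacle'': you never give a mechanism that processes one level $L_r$ in $O(n)$ time. Invoking ``the structure Italiano \emph{et al.} build'' does not do it; that structure gives you the $d_i$'s and a family of non-crossing shortest $x_iy_i$ paths $p_i$, but not, for every edge $e^D$, an approximation of $\min_{i\in L_r}d_i(e^D)$. The paper supplies the missing idea: for a fixed level $r$, slice $D$ along the paths $p_{\ell}$ with $\ell\in L_r$, and prove (Lemma~\ref{lemma:Omega_i}) that if a set $S$ sits in the slice $\Omega_{\ell_j,\ell_{j+1}}$ then
\[
\min_{\ell\in L_r}d_\ell(S)\ >\ \min\{d_{\ell_j}(S),\,d_{\ell_{j+1}}(S)\}-\delta .
\]
The proof is a short exchange argument exploiting that all $d_\ell$ with $\ell\in L_r$ differ by less than $\delta$ and that any path from $x_\ell$ (or $y_\ell$) to $S$ must cross the boundary $p_{\ell_j}$ or $p_{\ell_{j+1}}$. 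This reduces the whole level to two SSSP computations per slice; since consecutive slices share only their boundary path (single-touch), a contraction trick (Lemma~\ref{lemma:Henzinger_in_Omega_i}) makes the total slice size $O(n)$, and Henzinger \emph{et al.} gives $O(n)$ per level. Without this slicing lemma your per-level step is not $O(n)$: naively you would need $\min_i d_i(e^D)$ over up to $k$ indices for each edge, and nothing in your outline localises that minimum. Your worry about ``closed walk vs.\ simple separating cycle'' is handled separately in the paper (Lemmas~\ref{lemma:st-separating_unique}--\ref{lemma:multicrossing}), but that is orthogonal to the complexity bottleneck.
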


\begin{theorem}\label{th:main_real_vertex}
Let $G$ be a planar graph with positive edge capacities. Then for any $c,\delta>0$, we compute a value $vit^\delta(v)\in(vit(v)-\delta,vit(v)]$ for all $v\in V(G)$ satisfying $c(v)\leq c$, in $O(\frac{c}{\delta}n+n\log n)$  time.
\end{theorem}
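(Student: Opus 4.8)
\medskip\noindent\emph{Proof sketch.}
The plan is to follow the strategy behind Theorem~\ref{th:main_real_edge}, with the deletion of a single edge replaced by the deletion of a whole vertex star. Deleting a vertex $v$ deletes exactly the edges incident to $v$, so $vit(v)=\MF(G)-\MF(G-v)$, where $G-v$ denotes $G$ with the star of $v$ removed; hence it suffices to compute, from a single $st$-max-flow of $G$ and within additive error $\delta$, the value $\MF(G-v)$ for every $v$ with $c(v)\le c$. I would reuse the two ingredients that underlie Theorem~\ref{th:main_real_edge}: a \emph{quantization} of the capacities that are at most $c$ into the $O(c/\delta)$ multiples of $\delta$ below $c$ — this is what produces the $\tfrac{c}{\delta}n$ term — and the classical \emph{planar-dual} description of minimum $st$-cuts (via shortest paths in $G^*$), under which deleting a set $F$ of edges of $G$ amounts to contracting the corresponding edges of the dual $G^*$ and recomputing the relevant dual shortest paths. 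The one genuinely new point is that the star of $v$ is not a single edge: its edges are exactly the boundary of the face $f_v$ of $G^*$ dual to $v$, so ``delete the edge $e$'' is replaced by ``contract the whole boundary walk $\partial f_v$ to a single vertex of $G^*$''.

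Carrying this out, I would (i) compute one $st$-max-flow of $G$ and set up the associated dual-distance structure; (ii) for each of the $O(c/\delta)$ quantization levels, run essentially the same $O(n)$-time routine as in the edge case, suitably adapted to star-deletion; and (iii) answer, for every $v$ with $c(v)\le c$, the shortest-path query in $G^*$ obtained after contracting $\partial f_v$. Step (iii) is where the running time degrades from $O(n\log\log n)$ to $O(n\log n)$: one cannot afford to contract $\partial f_v$ and rerun a shortest-path computation separately for each of the $\Theta(n)$ vertices, but all of these contractions can be resolved together by a single near-linear planar multiple-source (directed) shortest-path computation, at total cost $O(n\log n)$, in place of the $O(n\log\log n)$ undirected max-flow oracle used for edges. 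Rounding so that the computed value of $\MF(G-v)$ is an over-estimate by at most $\delta$ then gives $vit^\delta(v)\in(vit(v)-\delta,vit(v)]$, and summing the costs yields $O(\tfrac{c}{\delta}n+n\log n)$. The terminals $s$ and $t$ need no treatment.

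The main obstacle is step (iii): making the operation ``contract a face boundary'' cooperate with the quantization and with the dual shortest-path machinery, and batching the $\Theta(n)$ contractions so that the total time stays near-linear. It is also worth recording why the seemingly simpler route — model $v$ by one auxiliary edge $\hat e_v$ of capacity $c(v)$ and invoke Theorem~\ref{th:main_real_edge} as a black box — is not available: removing a single edge from a connected planar graph leaves at most two connected components, so it cannot break the $d\ge 3$ neighbour-stubs of a high-degree vertex into the $d$ separate pieces that the removal of $v$ produces; thus no planar gadget, directed or not, can let one edge stand in for a vertex of degree $\ge 3$. Everything else — the max-flow and quantization estimates, the identity between the star of $v$ and $\partial f_v$, and the handling of $s,t$ — is routine.
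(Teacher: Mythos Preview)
Your high-level plan points in the right direction---vertex deletion is face contraction in the dual, quantization gives the $c/\delta$ levels, and Klein's multiple-source shortest paths supplies the $O(n\log n)$---but it skips over the two technical difficulties that are the whole content of the vertex case, and your remark that ``everything else is routine'' is not accurate.

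First, ``contract $\partial f_v$ to a single vertex of $G^*$'' is correct in $G^*$, but none of the distance machinery runs in $G^*$; it runs in the cut-open graph $D$ obtained by slicing $G^*$ along the fixed $v^*_s v^*_t$ shortest path $\pi$. When $f^*_v$ shares vertices with $\pi$, its image in $D$ is not a single face but three separate pieces: the face $f=f^D_v$ and two ``shadow'' sets $q^x_f$, $q^y_f$ sitting on the two copies $\pi_x,\pi_y$ of $\pi$ (see Subsection~\ref{sub:effects_deleting_edge_or_vertex}). Consequently $\MF_v$ is a minimum of \emph{five} quantities (Proposition~\ref{prop:vitality_vertices}), not one. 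Two of these, $\dist_D(f,q^x_f)$ and $\dist_D(f,q^y_f)$, are exact distances that have nothing to do with the quantization; they are what Klein's MSSP is used for (Lemma~\ref{lemma:centro_faccia_nlogn}) and they alone account for the $O(n\log n)$. Your step~(iii) conflates this exact computation with the approximate contraction queries, and your explanation that the $n\log n$ replaces the $n\log\log n$ ``oracle'' of the edge case is off: the $O(n\log\log n)$ preprocessing for $U$ is still performed here; the $O(n\log n)$ is an additional, independent cost.

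Second, and more seriously, your step~(ii) asserts that the per-level $O(n)$ routine from the edge case carries over once ``suitably adapted''. It does not. The slicing lemma (Lemma~\ref{lemma:Omega_i}) needs the contracted set to lie inside one slice $\Omega_{\ell_i,\ell_{i+1}}$; an edge $e^D$ or a face $f$ always does, but the shadow set $q^y_f$ consists of scattered vertices along $\pi_y$ and can straddle many slices. Handling $\min_{i\in L_r} d_i(q^y_f)$ therefore requires a genuinely new argument: the paper introduces a partial order on faces (maximal faces, the indices $f^-_r,f^+_r$) and proves a separate $O(n)$-per-level bound via Lemma~\ref{lemma:cost_L_r_vertex}, which in turn uses the already-computed exact values $\dist_D(f,q^y_f)$ to discard the problematic indices. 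This is the technical heart of Section~\ref{sec:complexity_vertex}, and your sketch contains no mechanism for it.
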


All our algorithms work in $O(n)$ space. To explain  the result stated in Theorem~\ref{th:main_real_edge}, we note that in the general case capacities are not bounded by any function of $n$. Thus, in order to obtain useful approximations, $c/\delta$ would seem to be very high. Despite this in many cases we can assume $c/\delta$ constant, implying that the time complexity of Theorem~\ref{th:main_real_edge} is equal to the best current time bound for computing the $st$-max flow. The following remark is crucial, where $c_{max}=\max_{e\in E(G)}c(e)$.

\begin{remark}\label{remark:bounding}[Bounding capacities]
We can bound all edge capacities higher than $\MF$ to $\MF$, obtaining a new bounded edge capacity function. This change has no impact on the $st$-max flow value or the vitality of any edge/vertex. Thus w.l.o.g., we can assume that $c_{max}\leq \MF$.
\end{remark}

By using Remark~\ref{remark:bounding} we can explain why $c/\delta$ can be assumed constant, we study separately the case of general distribution of capacities and the case of power-law distribution.

\noindent
$\bullet$ \emph{General distribution} (after bounding capacities as in Remark~\ref{remark:bounding}). If we set $c=c_{max}$ and $\delta=c/k$, for some constant $k$, then we obtain the capacities with an additive error less than $\MF/k$, because of Remark~\ref{remark:bounding}. In many applications this error is acceptable even for small values of $k$, e.g., $k=10,50,100$. In this way we obtain small percentage error of vitality for edges with high vitality---edges whose vitality is comparable with $\MF$---while edges with small vitality---edges whose vitality is smaller than $\MF/k$---are badly approximated. We stress that we are usually interested in high capacity edges, and that with these choices the time complexity is $O(n\log\log n)$, that is the time currently required for the computation of the $st$-max flow.

\noindent
$\bullet$ \emph{Power-law distribution} (after bounding capacities as in Remark~\ref{remark:bounding}). The previous method cannot be applied to power-law distribution because the most of the edges has capacity lower than $\MF/k$, even for high value of $k$. Thus we have to separate edges with high capacity and edges with low capacity. Let $c=\frac{c_{max}}{\ell}$ for some constant $\ell$ and let $H_c=\{e\in E(G)\,|\, c(e)>c\}$. By power-law distribution, $|H_c|$ is small even for high values of $\ell$, and thus we compute the exact vitality of edges in $H_c$ by Theorem~\ref{th:italiano_dynamic}. For edges with capacity less than $c$, we set $\delta=c/k$, for some constant $k$. By Remark~\ref{remark:bounding} we compute the vitality of these edges with an additive error less than $\frac{\MF}{k\ell}$. Again, the overall time complexity is equal or close to the time currently required for the computation of the $st$-max flow.

It is worth studying the case in which, regardless of the distribution of capacities, all edges have vitality less than $\MF/k$ even for an high value of $k$. By choosing $c=\MF$ and $\delta=\MF/k$, we may establish by using Theorem~\ref{th:main_real_edge} that all edges have vitality in $[0,\MF/k]$; this happens when $vit^\delta(e)=0$ for all $e\in E(G)$. We stress that this is not a useless result; indeed, we certify that all edges in the network have low vitality, so the network is robust.

To apply the same arguments to vertex vitality we need some observations. If $G$'s vertices have maximum degree $d$, then, after bounding capacities as in Remark~\ref{remark:bounding}, it holds $\max_{v\in V(G)}c(v)\leq d \MF$. Otherwise, we note that a real-world planar graph is expected to have few vertices with high degree (it is also implied by Euler formula). The exact vitality of these vertices can be computed by Theorem~\ref{th:italiano_dynamic} or by the following result.

\begin{theorem}\label{th:brutto}
Let $G$ be a planar graph with positive edge capacities. Then for any $S\subseteq V(G)$, we compute $vit(v)$ for all $v\in S$ in $O(|S|n+n\log\log n)$  time. 
\end{theorem}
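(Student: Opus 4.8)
The plan is to compute the $st$-max flow of $G$ only once and then, for each vertex of $S$, to recompute the max flow of $G$ with that vertex removed in linear time. Concretely, I would first compute, once, an $st$-max flow $f$ of $G$, its value $\MF=\MF(G)$, and its residual graph $G_f$, in $O(n\log\log n)$ time via the algorithm of Italiano \emph{et al.}~\cite{italiano}. If $s\in S$ or $t\in S$ then $vit(s)=vit(t)=\MF$, so assume from now on that $v\in S\setminus\{s,t\}$. Since $vit(v)=\MF-\MF(G-v)$, it remains to compute $\MF(G-v)$ for every $v\in S$, and the target is $O(n)$ time per vertex.

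The structural starting point is the identity
\[
\MF(G-v)=\min_{C}\big(c(C)-c(C\cap E_v)\big),
\]
where $E_v$ denotes the set of edges incident to $v$ and the minimum is over all $st$-cuts $C$ of $G$; this holds because a subset $D\subseteq E(G-v)$ is an $st$-cut of $G-v$ precisely when $D\cup E_v$ is an $st$-cut of $G$ (apply this with $D=C\setminus E_v$ and, conversely, with $C=D\cup E_v$). In particular $0\le \MF-\MF(G-v)\le c(v)$, so $vit(v)\le c(v)$. Translating to the planar dual, deleting $v$ merges the $\deg(v)$ faces around $v$ into a single face, i.e.\ in the dual graph it contracts the face-cycle $\Gamma_v$ made of the duals of the edges of $E_v$; hence $\MF(G-v)$ equals the length of the shortest $st$-separating cycle of the dual of $G$ in which the edges of $\Gamma_v$ are assigned length $0$. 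Equivalently, on the primal side one can start from $f$, cancel along a flow decomposition every flow path and cycle that uses an edge of $E_v$, obtaining a feasible flow $f'$ of $G-v$ of value $\MF-k$ with $0\le k\le c(v)$, and then $\MF(G-v)=\MF-k+\mu$ where $\mu\le k$ is the maximum $st$-flow in the residual graph $(G-v)_{f'}$. Either way, the per-vertex quantity to be evaluated is controlled by a value bounded by $c(v)$ and is a purely \emph{local} perturbation of the single min-cut/max-flow computation already performed on $G$: the dual is changed only by contracting the single face-cycle $\Gamma_v$.

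The main obstacle is to evaluate this quantity in $O(n)$ time rather than paying a fresh $O(n\log\log n)$ max-flow computation for each $v$. The idea is to reuse the dual shortest-path / min-cut information produced by the algorithm of~\cite{italiano} for $\MF(G)$, and, for each $v$, to recover the shortest $st$-separating cycle that is allowed to traverse $\Gamma_v$ for free by updating that information along $\Gamma_v$ only, together with a constant number of linear-time planar shortest-path computations~\cite{henzinger}; the point is precisely that $\Gamma_v$ is a single cycle and that contracting it affects the dual only in a bounded-boundary region, so no second superlinear computation is needed. This last step, i.e.\ turning the ``$+\log\log n$ per vertex'' into ``$\times O(n)$ amortized'' by showing that the preprocessed structure suffices, is the delicate part of the argument. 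Some care is also required for the degenerate cases in which $v$ is adjacent to $s$ or $t$ (part of the removed cut then interacts directly with a terminal) and for keeping the contracted/residual graphs of size $O(n)$ despite parallel arcs; these are handled by routine cleanup. Carrying this out for every $v\in S$ and adding the single preprocessing cost yields total time $O(|S|n+n\log\log n)$ and $O(n)$ space.
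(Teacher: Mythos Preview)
Your high-level plan (preprocess once in $O(n\log\log n)$, then spend $O(n)$ per vertex) matches the paper, and your dual observation that deleting $v$ amounts to contracting the face-cycle $\Gamma_v$ is exactly right. The gap is the step you yourself flag as ``the delicate part'': you never actually show how to compute the shortest $st$-separating cycle in the dual after contracting $\Gamma_v$ in $O(n)$ time. Saying you will ``reuse the dual shortest-path / min-cut information'' and do ``a constant number of linear-time planar shortest-path computations'' is a hope, not an argument: you do not specify which SSSP instances, nor why a constant number of them captures the minimum over \emph{all} $st$-separating cycles of the contracted dual. As stated, what you have after contraction is another planar min-cut instance, and the only bound you can quote for that is $O(n\log\log n)$, which gives $O(|S|\,n\log\log n)$ total. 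Your alternative primal route via flow decomposition and reaugmentation in $(G-v)_{f'}$ has the same issue: canceling all flow paths through $v$ is not linear in general, and the residual max-flow $\mu$ is again a full max-flow subproblem.

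What the paper does to close this gap is precisely Proposition~\ref{prop:vitality_vertices}, built on the Itai--Shiloach cut-open graph $D$ and the single-crossing Lemma~\ref{lemma:multicrossing}. That proposition reduces $\MF_v$ to the minimum of at most five explicit distance-type quantities in $D$: $\min_{i\in[k]} d_i(f)$, $\min_{i\in[k]} d_i(q^x_f)$, $\min_{i\in[k]} d_i(q^y_f)$, $\dist_D(f,q^x_f)$, and $\dist_D(f,q^y_f)$, where $f=f^D_v$. Each of these is obtained by adding one auxiliary vertex joined by zero-weight edges to the relevant vertex set ($V(f)$, $q^x_f$, or $q^y_f$) and running a \emph{single} planar SSSP from it, i.e.\ $O(n)$ via~\cite{henzinger}. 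That is the missing idea: the single-crossing restriction turns ``shortest separating cycle after a face contraction'' into a finite list of concrete source-to-set distances in $D$. Without an analogue of Proposition~\ref{prop:vitality_vertices}, your sketch does not yet yield the claimed $O(n)$ per vertex.
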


If we denote by $E_S=\sum_{v\in S}deg(v)$, where $deg(v)$ is the degree of vertex $v$, then the result in Theorem~\ref{th:brutto} is more efficient than the result given in Theorem~\ref{th:italiano_dynamic} if either $|S|<\log n$ and $E_S>|S|\log n$ or $|S|\geq\log n$ and $E_S>\frac{|S|n^{1/3}}{\log^{8/3}}$.

\paragraph{Small integer case} 
In the case of integer capacity values that do not exceed a small constant, or in the more general case in which   capacity values are integers with bounded sum we also prove the following corollaries.

\begin{corollary}\label{cor:integer}
Let $G$ be a planar graph with integer edge capacity and let $L$ be the sum of all the edges capacities. Then
\begin{itemize}\itemsep0em
\item for any $H\subseteq E(G)\cup V(G)$, we compute $vit(x)$ for all $x\in H$, in $O(|H|n+L)$ time,
\item for any $c\in\mathbb{N}$, we compute $vit(e)$ for all $e\in E(G)$ satisfying $c(e)\leq c$, in $O(cn+L)$ time.
\end{itemize}
\end{corollary}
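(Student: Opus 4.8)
The plan is to derive Corollary~\ref{cor:integer} from Theorems~\ref{th:main_real_edge}, \ref{th:brutto} and~\ref{th:italiano_dynamic} by exploiting integrality twice: once to pin down vitalities exactly, and once to speed up the underlying shortest-path machinery. First, with integer capacities every vitality is a nonnegative integer: by the max-flow/min-cut theorem the $st$-max flow of $G$ and the $st$-max flow of the graph obtained by deleting an edge $e$ (resp.\ all edges incident to a vertex $v$) are integers, so $vit(e)$ (resp.\ $vit(v)$) lies in $\mathbb{Z}_{\ge 0}$. Hence the interval $(vit(x)-1,vit(x)]$ contains the single integer $vit(x)$, and running the algorithm of Theorem~\ref{th:main_real_edge} with $\delta=1$ and taking the ceiling of each returned value $vit^1(e)$ recovers $vit(e)$ exactly for all $e$ with $c(e)\le c$. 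Both items of the corollary then reduce to replacing, in the integer regime, the additive $O(n\log\log n)$ term of Theorems~\ref{th:main_real_edge}, \ref{th:brutto} and~\ref{th:italiano_dynamic} by $O(n+L)$, where $L=\sum_{e\in E(G)}c(e)$.

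Second, I would argue this speed-up. The $O(n\log\log n)$ term in all three results is the cost of one planar $st$-max-flow computation, which in the undirected planar setting is performed through shortest-path computations in auxiliary planar graphs whose edge lengths are sums of original capacities --- hence nonnegative integers --- and in which no relevant distance exceeds $L$ (a minimum cut has weight at most $L$; cf.\ Remark~\ref{remark:bounding}). Replacing the comparison-based priority queue by a monotone bucket queue (Dial's algorithm) makes each such computation run in $O(n+L)$ time, so the $O(n\log\log n)$ term becomes $O(n+L)$. Since $G$ may be assumed connected (otherwise $vit\equiv 0$), we have $L\ge|E(G)|\ge n-1$, so $O(n+L)=O(L)$ and any residual additive $O(n)$ is absorbed by the leading $cn$ or $|H|n$ term.

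Third, I would assemble the bounds. For the second item, Theorem~\ref{th:main_real_edge} with $\delta=1$ runs in $O(cn+n\log\log n)=O(cn+n+L)=O(cn+L)$, and the ceiling step gives $vit(e)$ for every $e$ with $c(e)\le c$. For the first item, split $H=H_E\cup H_V$; on $H_V$ apply Theorem~\ref{th:brutto} with $S=H_V$, in time $O(|H_V|n+n\log\log n)=O(|H_V|n+n+L)$, and on $H_E$ apply the first branch of Theorem~\ref{th:italiano_dynamic} (equivalently, the edge counterpart of Theorem~\ref{th:brutto}) to the $|H_E|$ single edges, in time $O\big(\frac{|H_E|n}{\log n}+n\log\log n\big)\subseteq O(|H_E|n+n+L)$, in both cases with the integer speed-up of the previous paragraph applied to the $\log\log n$ term; summing gives $O(|H|n+n+L)=O(|H|n+L)$. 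The main obstacle is exactly the claim of the second paragraph: one must check that the shortest-path engine inside the planar max-flow algorithm underlying these theorems can be implemented with bucket queues without forfeiting its planarity-based efficiency, and that every distance it inspects is bounded by $L$. Granted that, the integrality observation and the arithmetic above are routine.
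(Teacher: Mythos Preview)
Your integrality argument (vitalities are integers, hence $\delta=1$ in Theorem~\ref{th:main_real_edge}/\ref{th:main_real_vertex} pins them down exactly) is exactly what the paper does. The gap is in how the $O(n\log\log n)$ additive term is replaced by $O(n+L)$.

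That term is \emph{not} the cost of a generic planar SSSP or max-flow call that one can accelerate with Dial's buckets. In the proofs of Theorems~\ref{th:main_real_edge}, \ref{th:main_real_vertex} and \ref{th:brutto} it is precisely the cost of building the forest $U=\bigcup_{i\in[k]}p_i$ of pairwise non-crossing single-touch shortest $x_iy_i$ paths (Theorem~\ref{th:U_italiano+err_giappo}); all subsequent SSSP instances already run in $O(n)$ via \cite{henzinger} and contribute only to the $\frac{c}{\delta}n$ or $|H|n$ term. The construction of $U$ in \cite{italiano} is not a single Dijkstra in a graph with integer lengths bounded by $L$; it is a structured computation for which swapping in a monotone bucket queue does not obviously yield $O(n+L)$. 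The paper closes this gap by invoking a separate result \cite{balzotti-franciosa_2} that computes $U$ directly in $O(n+L)$ time for integer capacities. Your acknowledged ``main obstacle'' is therefore not a technicality to be checked but the actual content of the speed-up, and it is resolved by citation rather than by a priority-queue trick.

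A secondary point: for the edge part of the first item you route through Theorem~\ref{th:italiano_dynamic}, which forces you to argue about the internals of an external dynamic algorithm. This is unnecessary. Once $U$ is available, Proposition~\ref{prop:vitality_edges} reduces $vit(e)$ for a single edge to a constant number of SSSP instances in $D$, each solvable in $O(n)$ by \cite{henzinger}; this is the edge analogue of the proof of Theorem~\ref{th:brutto} and gives $O(|H_E|n)$ directly. The paper's proof simply says both items follow from the proofs of Theorems~\ref{th:main_real_edge} and \ref{th:main_real_vertex} with $\delta=1$, together with the $O(n+L)$ computation of $U$ from \cite{balzotti-franciosa_2}.
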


\begin{corollary}\label{cor:unweighted}
Let $G$ be a planar graph with unit edge capacity. Let $n_{>d}$ be the number of vertices whose degree is greater than $d$. We can compute the vitality of all edges in $O(n)$ time and the vitality of all vertices in $O(\min\{n^{3/2},n(n_{>d}+d+\log n)\})$ time.
\end{corollary}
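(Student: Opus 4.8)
The plan is to derive both bounds from the results already in hand, using two features of the unit-capacity planar setting. First, the total capacity is $L=|E(G)|\leq 3n-6=O(n)$ by planarity, so every $L$ appearing in Corollary~\ref{cor:integer} is absorbed into $O(n)$. Second, integer capacities force every $st$-max flow value, and hence every vitality $vit(x)$, to be an integer; consequently an additive $\delta$-approximation with $\delta=1$ is already exact, since $(vit(x)-1,vit(x)]$ contains exactly one integer, namely $vit(x)$, which we recover as $\lceil vit^1(x)\rceil$.

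For the edges, I would simply invoke the second item of Corollary~\ref{cor:integer} with $c=1$: as $c(e)=1$ for every $e\in E(G)$, this computes $vit(e)$ for all edges in $O(1\cdot n+L)=O(n)$ time.

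For the vertices, I would split $V(G)$ by degree at the threshold $d$. On the low-degree part $V_{\leq d}=\{v\in V(G):\deg(v)\leq d\}$ every vertex has capacity $c(v)=\deg(v)\leq d$, so applying Theorem~\ref{th:main_real_vertex} with $c=d$ and $\delta=1$ yields $vit(v)$ exactly for all $v\in V_{\leq d}$ in $O(dn+n\log n)$ time. On the high-degree part $V_{>d}$, which has $|V_{>d}|=n_{>d}$ vertices, applying the first item of Corollary~\ref{cor:integer} (or Theorem~\ref{th:brutto}) with $H=V_{>d}$ computes $vit(v)$ for all these vertices in $O(n_{>d}\,n+L)=O(n_{>d}\,n+n)$ time. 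Summing the two contributions gives $O(n(n_{>d}+d+\log n))$.

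The $O(n^{3/2})$ bound then comes for free by choosing $d$ well in the same scheme: planarity gives $\sum_{v\in V(G)}\deg(v)=2|E(G)|\leq 6n-12$, so $n_{>d}\leq 6n/d$, and taking $d=\lceil\sqrt n\rceil$ makes both $d$ and $n_{>d}$ equal to $O(\sqrt n)$, collapsing $O(n(n_{>d}+d+\log n))$ to $O(n^{3/2})$. Since this choice is always available, the overall vertex time is $O(\min\{n^{3/2},\,n(n_{>d}+d+\log n)\})$. Nothing here is delicate; the only points needing care are verifying that the $\delta=1$ approximation really is exact under integer capacities and that the $L$-terms are $O(n)$, after which the statement is immediate from Theorems~\ref{th:main_real_vertex} and~\ref{th:brutto} and Corollary~\ref{cor:integer}.
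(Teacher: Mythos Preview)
Your proof is correct and follows essentially the same route as the paper: invoke Corollary~\ref{cor:integer} with $c=1$ for the edge bound, split vertices by degree at threshold $d$ and combine Theorem~\ref{th:main_real_vertex} (with $c=d$, $\delta=1$) on the low-degree part with Theorem~\ref{th:brutto}/Corollary~\ref{cor:integer} on the $n_{>d}$ high-degree vertices, then set $d=\sqrt{n}$ and use the degree-sum bound from Euler's formula for the $O(n^{3/2})$ alternative. The paper's own proof is terser (it compresses the vertex split into a single sentence citing Theorem~\ref{th:main_real_vertex}), but the underlying argument is the same as yours.
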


\begin{corollary}\label{cor:bounded}
Let $G$ be a planar graph with unit edge capacity where only a constant number of vertices have degree greater than a fixed constant $d$. Then we can compute the  vitality of all vertices in $O(n)$ time.
\end{corollary}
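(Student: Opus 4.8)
The plan is to split $V(G)$ according to degree and handle the two parts with different tools. Put $V_{\mathrm{hi}}=\{v\in V(G):deg(v)>d\}$ and $V_{\mathrm{lo}}=V(G)\setminus V_{\mathrm{hi}}$; by hypothesis $|V_{\mathrm{hi}}|=O(1)$, and since $G$ is planar with unit capacities we have $L:=\sum_{e\in E(G)}c(e)=|E(G)|=O(n)$. For the low-degree vertices we will use that $c(v)=deg(v)\le d=O(1)$, and for the high-degree vertices we will exploit that there are only constantly many of them.

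For the vertices in $V_{\mathrm{hi}}$ I would invoke the first item of Corollary~\ref{cor:integer} with $H=V_{\mathrm{hi}}$, which outputs $vit(v)$ for every $v\in V_{\mathrm{hi}}$ in $O(|V_{\mathrm{hi}}|\,n+L)=O(n)$ time. This leaves the low-degree vertices. For $v\in V_{\mathrm{lo}}$ the standard bound $vit(v)\le c(v)=deg(v)\le d$ holds (add $v$ back to a min cut of $G-v$), and since capacities are integral $vit(v)\in\{0,1,\dots,d\}$; hence it suffices to approximate $vit(v)$ within additive error $\delta<1$ and round the answer up to recover $vit(v)$ exactly. I would therefore run the algorithm behind Theorem~\ref{th:main_real_vertex} (equivalently, the vertex part of Corollary~\ref{cor:unweighted}) restricted to the threshold $c=d$ and with $\delta=\frac12$, so that the $\frac{c}{\delta}n$ contribution to the running time is $O(dn)=O(n)$.

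The remaining — and genuinely delicate — point is the additive overhead of that computation, which in the general-capacity statement of Theorem~\ref{th:main_real_vertex} is $O(n\log n)$ and in Corollary~\ref{cor:unweighted} appears inside the $n(n_{>d}+d+\log n)$ term. The claim is that, once the $O(1)$ high-degree vertices have been removed from consideration, this overhead drops to $O(n)$: the work charged to high-degree vertices ($\Theta(n\,n_{>d})$) is not incurred because none are processed in this step, the $d$-dependent work is $O(dn)=O(n)$, and the $\Theta(n\log n)$ summand — a single-source shortest-path computation on the planar dual with non-negative edge lengths — can be performed in linear time by the algorithm of Henzinger \emph{et al.}~\cite{henzinger}. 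I expect the main obstacle to be precisely the verification of this last fact: one must check that, after isolating the high-degree vertices, every shortest-path (or auxiliary max-flow) instance invoked by the vertex-vitality routine is supported on a planar graph — or at worst a planar-plus-$O(1)$-vertices graph — with non-negative lengths, so that no $\Theta(n\log n)$ bottleneck survives; this is where the hypotheses ``constant number of high-degree vertices'' and ``fixed constant $d$'' are both used. Summing the two parts yields all vertex vitalities in $O(n)$ time, and the $O(n)$ space bound is inherited from the subroutines.
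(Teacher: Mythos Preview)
Your high-level decomposition into $V_{\mathrm{hi}}$ and $V_{\mathrm{lo}}$, and your treatment of $V_{\mathrm{hi}}$ via Corollary~\ref{cor:integer}, match the paper and are correct. The gap is in how you dispose of the $O(n\log n)$ overhead for $V_{\mathrm{lo}}$. You describe this overhead as ``a single-source shortest-path computation on the planar dual'' and propose to replace it by the linear-time planar SSSP of Henzinger \emph{et al.}\ \cite{henzinger}. That is a misidentification of the bottleneck. If you trace the proof of Theorem~\ref{th:main_real_vertex}, the $O(n\log n)$ term comes solely from Lemma~\ref{lemma:centro_faccia_nlogn}, which computes $\dist_D(f,q^y_f)$ for all $f\in F^y$. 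That lemma does \emph{not} solve a single SSSP: it issues $\Theta(n)$ distance queries of the form $\dist_{D'}(y_i,u_f)$ and answers them with Klein's multiple-source shortest-paths structure (Theorem~\ref{th:klein}); the $\log n$ factor is the per-query cost of MSSP. Replacing MSSP by one linear-time SSSP per source $y_i$ would cost $\Theta(kn)$, which can be $\Theta(n^2)$. Planarity and non-negativity of lengths were never the obstruction, so the check you propose (``every shortest-path instance \dots\ is supported on a planar graph'') does not remove the bottleneck.

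The paper eliminates this $O(n\log n)$ by a different tool that uses both hypotheses at once. Since capacities are unit, $D$ is unweighted, so the Kowalik--Kurowski short-path oracle~\cite{kowalik-kurowski} applies: after $O(n)$ preprocessing it returns $\dist_D(u,v)$ in $O(1)$ time whenever that distance is at most the fixed constant $d$. The paper then bounds the total number of queries needed for Lemma~\ref{lemma:centro_faccia_nlogn} by $\sum_{i\in[k]}\sum_{f\in F_i}|V(f)|\le\sum_{i\in[k]}|F_i|\cdot d\le d\sum_{i\in[k]}(\deg_D(x_i)+1)=O(n)$, the bound $|V(f)|\le d$ holding because each remaining face corresponds to a vertex of degree at most $d$. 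Your proposal contains no analogue of this counting argument and no replacement for Klein's MSSP, so as written it does not establish the $O(n)$ bound.
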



\paragraph{Our approach} We adopt Itai and Shiloach's approach~\cite{itai-shiloach}, that first computes a modified version $D$ of a dual graph of $G$, then reduces the computation of the max flow to the computation of  non-crossing shortest paths between pairs of vertices of the infinite face of $D$. We first study the effect on $D$ of an edge or a vertex removal in  $G$, showing that computing the vitality of an edge or a vertex can be reduced to computing some distances in $D$ (see Proposition~\ref{prop:vitality_edges} and Proposition~\ref{prop:vitality_vertices}).

Then we determine required distances by solving SSSP instances. To decrease the cost we use a divide and conquer strategy: we slice $D$ in regions delimited by some of the non-crossing shortest paths computed above. We choose non-crossing shortest paths with similar lengths, so that we compute an additive guaranteed approximation of each distance by looking into a single region instead of examining the whole graph $D$ (see Lemma~\ref{lemma:Omega_i}).

Finally we have all the machinery to compute an approximation of required distances of Proposition~\ref{prop:vitality_edges} and Proposition~\ref{prop:vitality_vertices} and obtain edge and vertex vitalities. 

\paragraph{Structure of the paper} In Section~\ref{sec:max_flow_planar} we report main results about how to compute max flow in planar graphs; we focus on the approach in~\cite{itai-shiloach} on which our algorithms are based. In Section~\ref{sec:main_results} we show our theoretical results that allow us to compute edge and vertex vitality. In Section~\ref{sec:divide_and_conquer} we explain our divide and conquer strategy. In Section~\ref{sec:complexity_edge} we state our main result about edge vitality. Vertex vitality is described in Section~\ref{sec:complexity_vertex} and in Section~\ref{sec:small_integer} we obtain some corollaries about planar graphs with small integer capacities. Finally, in Section~\ref{sec:conclusions} conclusions and open problems are given.

\section{Max flow in planar graphs}\label{sec:max_flow_planar}

In this section we report some well-known results concerning max flow, focusing on planar graphs.

Given a connected undirected graph $G=(V(G),E(G))$ with $n$ vertices, we denote an edge $e=\{i,j\}\in E(G)$ by the shorthand notation $ij$, and we define $\dist_G(u,v)$ as the length of a shortest path in $G$ joining vertices $u$ and $v$. Moreover, for two sets of vertices $S,T\subseteq V(G)$, we define $\dist_G(S,T)=\min_{u\in S,v\in T}\dist_G(u,v)$. We write for short $v\in G$ and $e\in G$ in place of $v\in V(G)$ and $e\in E(G)$, respectively. We say that a path $p$ is an $ab$ path if its extremal vertices are $a$ and $b$.

Let $s,t\in G$, $s\neq t$, be two fixed vertices. A \emph{feasible flow} in $G$ assigns to each edge $e=ij\in G$ two real values $x_{ij}\in[0, c(e)]$ and $x_{ji}\in [0, c(e)]$ such that: 
$\sum_{j:ij\in E(G)}x_{ij}=\sum_{j:ij\in E(G)}x_{ji}$, for each $i\in V(G)\setminus\{s,t\}$.
The \emph{flow from s to t} under a feasible flow assignment $x$ is defined as
$F(x)=\sum_{j:sj\in E(G)}x_{sj}-\sum_{j:sj\in E(G)}x_{js}$. 
The maximum flow from $s$ to $t$, denoted by $\MF$, is the maximum value of $F(x)$ over all feasible flow assignments $x$.

An \emph{st-cut} is a partition of $V(G)$ into two subsets $S$ and $T$ such that $s\in S$ and $t\in T$. The capacity of an $st$-cut is the sum of the capacities of the edges $ij\in E(G)$ such that $|S\cap\{i,j\}|=1$ and $|T\cap\{i,j\}|=1$. The well known Min-Cut Max-Flow theorem~\cite{ford-fulkerson_1} states that the maximum flow from $s$ to $t$ is equal to the capacity of a minimum $st$-cut for any weighted graph $G$.

We denote by $G-e$ the graph $G$ after the removal of edge $e$. Similarly, we denote by $G-v$ the graph $G$ after the removal of vertex $v$ and all edges adjacent to $v$.

\begin{definition}\label{definition:vitality}
The vitality $vit(e)$ (resp., $vit(v)$) of an edge $e$ (resp., vertex $v$) with respect to the maximum flow from $s$ to $t$, according to the general concept of vitality in~\cite{koschutzki-lehmann}, is defined as the difference between the maximum flow in $G$ and the maximum flow in $G-e$ (resp., $G-v$).
\end{definition}

We deal with planar undirected graphs. A \emph{plane graph} is a planar graph with a fixed embedding. The dual of a plane undirected graph $G$ is an undirected planar multigraph $G^*$ whose vertices correspond to faces of $G$ and such that for each edge $e$ in $G$ there is an edge $e^*=\{u^*,v^*\}$ in $G^*$, where $u^*$ and $v^*$ are the vertices in $G^*$ that correspond to faces $f$ and $g$ adjacent to $e$ in $G$. Length $w(e^*)$ of $e^*$ equals the capacity of $e$, moreover, for a subgraph $H$ of $G^*$ we define $w(H)=\sum_{e\in H}w(e)$. 

We fix a planar embedding of the graph, and we work on the dual graph $G^*$ defined by this embedding. A vertex $v$ in $G$ generates a face in $G^*$ denoted by $f^*_v$. We choose in $G^*$ a vertex $v^*_s$ in $f^*_s$ and a vertex $v^*_t$ in $f^*_t$. A cycle in the dual graph $G^*$ that separates vertex $v^*_s$ from vertex $v^*_t$ is called an \emph{$st$-separating cycle}. Moreover, we choose a shortest path $\pi$ in $G^*$ from $v^*_s$ to $v^*_t$.

\begin{proposition}[see~\cite{itai-shiloach,reif}]\label{prop:st-min-cut}
A (minimum) $st$-cut in $G$ corresponds to a (shortest) cycle in $G^*$ that separates vertex $v^*_s$ from vertex $v^*_t$.
\end{proposition}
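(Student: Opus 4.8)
The plan is to prove the stated correspondence in two directions and then match optimality on the two sides, the key tools being the Jordan curve theorem (to pass between closed curves in the plane and bipartitions of $V(G)$) and the classical fact that edge cuts of a connected plane graph dualize to Eulerian subgraphs of $G^*$ (equivalently, bonds dualize to cycles).

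For the direction ``separating cycle $\Rightarrow$ $st$-cut'', I would start from a cycle $C^*$ of $G^*$ separating $v^*_s$ from $v^*_t$. Regarding $C^*$ as a simple closed curve in the plane, the Jordan curve theorem splits the plane into two open regions; since the open faces of $G^*$ containing $v^*_s$ and $v^*_t$ are connected and disjoint from the $1$-skeleton of $G^*$, the vertices $s$ and $t$ of $G$ (which sit inside those faces) lie in different regions. Letting $S$ be the set of vertices of $G$ on the $v^*_s$-side and $T$ the rest, I would invoke the standard planar-duality fact that a primal edge $e$ and its dual $e^*$ cross exactly once and $e$ crosses no other dual edge: hence $e$ crosses $C^*$ iff $e^*\in C^*$, while $e$ has its endpoints on opposite sides of $C^*$ iff it crosses $C^*$ an odd number of times. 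Since that crossing number is $0$ or $1$, the cut edges of $(S,T)$ are exactly the primal edges of $C^*$, so $(S,T)$ is an $st$-cut of capacity $\sum_{e^*\in C^*} w(e^*)=w(C^*)$.

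For the converse I would first reduce to a well-structured cut. Given a minimum $st$-cut $(S,T)$, if $G[S]$ were disconnected I would move the component not containing $s$ to the $T$-side; since $G[S]$ has no edges between distinct components, this does not increase the capacity, so a minimum $st$-cut may be assumed to have $G[S]$ and $G[T]$ connected, i.e.\ to be a bond of $G$. Then the dual edge set $\{e^*:e\in\partial(S)\}$ is a cycle $C^*$ of $G^*$ by bond--cycle duality for the fixed embedding, and running the Jordan-curve argument of the previous paragraph in reverse shows that $C^*$ separates $v^*_s$ from $v^*_t$, with $w(C^*)$ equal to the capacity of the cut. Combining the two directions, minimizing over $st$-cuts on one side matches minimizing over $st$-separating cycles on the other, which yields the ``(minimum)''/``(shortest)'' part of the statement.

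The main thing to be careful about is the topology: one must check that ``separates'' in the combinatorial sense used for $st$-separating cycles agrees with ``separates'' in the Jordan sense, allow $G^*$ to be a multigraph (so a ``cycle'' may be a pair of parallel edges), and invoke the bond--cycle duality for the chosen embedding on the sphere/plane rather than for an abstract dual. A secondary subtlety is the reduction to bonds in the converse direction, which uses minimality essentially; a general (non-minimum) $st$-cut only yields an Eulerian subgraph of $G^*$ that contains an $st$-separating cycle of no greater length, which still suffices for the extremal claim but should be phrased with care.
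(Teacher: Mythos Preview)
The paper does not prove this proposition at all: it is stated with the attribution ``see~\cite{itai-shiloach,reif}'' and used as a black box, so there is no in-paper argument to compare your proposal against. Your write-up is a correct rendering of the classical proof (Jordan curve theorem plus bond--cycle duality for a fixed planar embedding), and the caveats you flag---multigraph duals, embedding-dependence, and the reduction of a minimum $st$-cut to a bond---are exactly the points one must handle; this is in line with the arguments in the cited sources.
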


\subsection{Itai and Shiloach's approach/decomposition}

According to the approach by Itai and Shiloach in~\cite{itai-shiloach} used to find a min-cut by searching for minimum $st$-separating cycles, graph $G^*$ is ``cut'' along the fixed shortest path $\pi$ from $v^*_s$ to $v^*_t$, obtaining graph $D_G$, in which each vertex $v^*_i$ in $\pi$ is split into two vertices $x_i$ and $y_i$; when no confusion arises we omit the subscript $G$. In Figure~\ref{fig:G} there is a plane graph $G$ in black continuous lines and in Figure~\ref{fig:pi_double_and_D} on the right graph $D$. Now we explain the construction of the latter.

Let us assume that $\pi=\{v_1^*,v_2^*,\ldots,v_k^*\}$, with $v^*_1=v^*_s$ and $v^*_k=v^*_t$. For convenience, let $\pi_x$ be the duplicate of $\pi$ in ${D}$ whose vertices are $\{x_1,\ldots,x_k\}$ and let $\pi_y$ be the duplicate of $\pi$ in ${D}$ whose vertices are $\{y_1,\ldots,y_k\}$. For any $i\in[k]$, where $[k]=\{1,\ldots,k\}$, edges in $G^*$ incident on each $v^*_i$ from below $\pi$ are moved to $y_i$ and edges incident on $v^*_i$ from above $\pi$ are moved to $x_i$. Edges incident on $v^*_s$ and $v^*_t$ are considered above or below $\pi$ on the basis of two dummy edges: the first joining $v^*_s$ to a dummy vertex $\alpha$ inside face $f^*_s$ and the second joining $v^*_t$ to a dummy vertex $\beta$ inside face $f^*_t$. In Figure~\ref{fig:G} there is a graph $G$ in black continuous line, $G^*$ in red dashed lines and shortest path $\pi$ from $v^*_1$ to $v^*_k$. In Figure~\ref{fig:pi_double_and_D}, on the left there are the graph $G$ and $G^*$ of Figure~\ref{fig:G} where path $\pi$ is doubled.

\begin{figure}[h]
\captionsetup[subfigure]{justification=centering}
\centering
\begin{subfigure}[t]{7cm}
\begin{overpic}[width=7cm,percent]{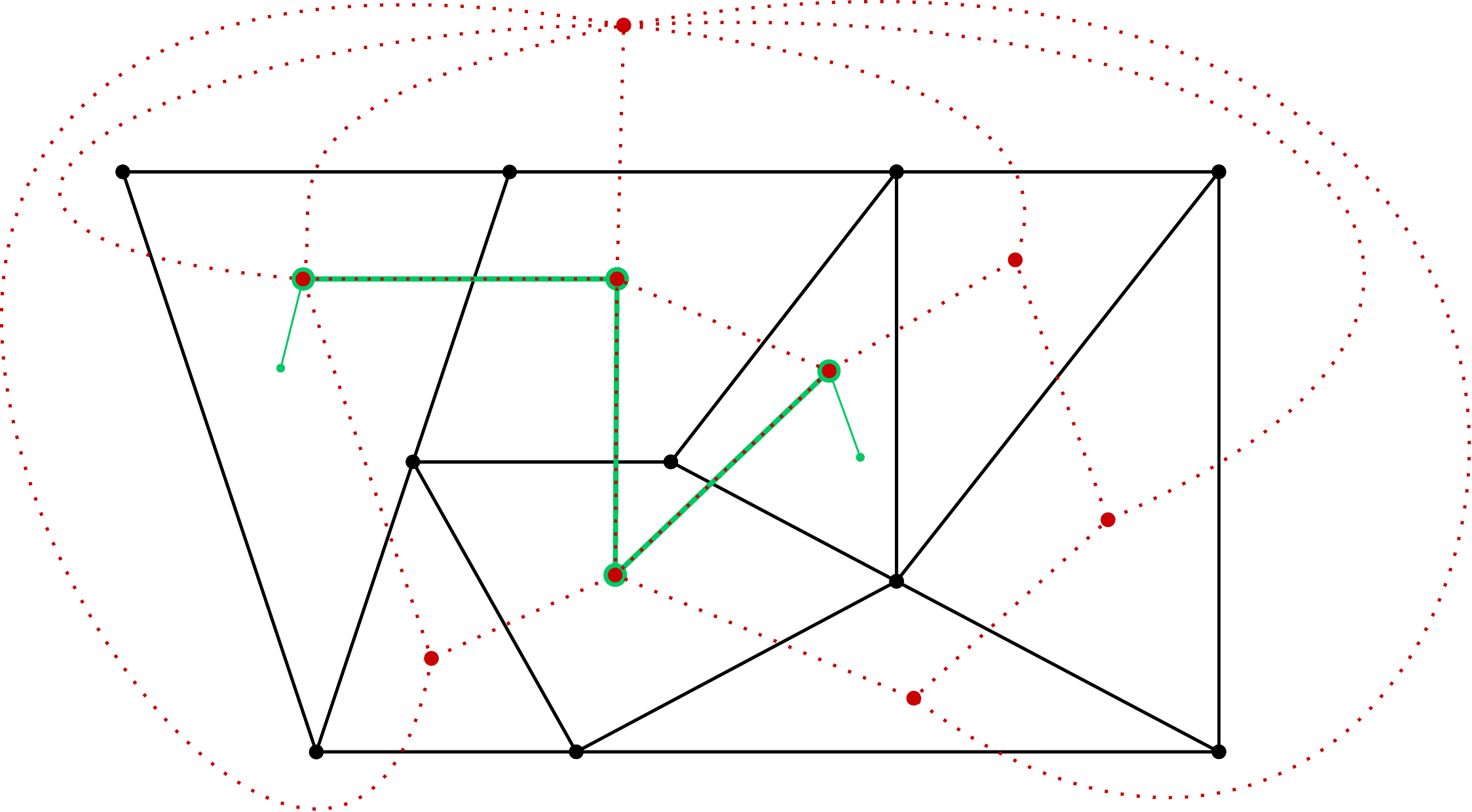}
\put(17.5,27){\color[rgb]{0.784,0,0}\scriptsize $\alpha$}
\put(57,20){\color[rgb]{0.784,0,0}\scriptsize $\beta$}

\put(22,38){\color[rgb]{0.784,0,0}$v_1^*$}
\put(43.5,37){\color[rgb]{0.784,0,0}$v_2^*$}
\put(44.5,15.5){\color[rgb]{0.784,0,0}$v_3^*$}
\put(55,33.3){\color[rgb]{0.784,0,0}$v_4^*$}
\put(70.5,36){\color[rgb]{0.784,0,0}$v_5^*$}
\put(75.5,16){\color[rgb]{0.784,0,0}$v_6^*$}
\put(65,6.5){\color[rgb]{0.784,0,0}$v_7^*$}
\put(30,7){\color[rgb]{0.784,0,0}$v_8^*$}
\put(40,55.5){\color[rgb]{0.784,0,0}$v_\infty^*$}

\put(9,44){$a$}
\put(35,44){$b$}
\put(58,44){$c$}
\put(79,44){$d$}
\put(29.5,24.5){$e$}
\put(44,26){$g$}
\put(45,-1.3){$h$}
\put(84,4){$i$}
\put(17.5,4){$s$}
\put(60,10.7){$t$}

\put(35,32.5){\color[rgb]{0,0.784,0.392}$\pi$}

\end{overpic}
\end{subfigure}	
\caption{graph $G$ in black continuous line, $G^*$ in red dashed lines, shortest path $\pi$ from $v^*_s$ ($v^*_1$) to $v^*_t$ ($v^*_k$) in green, $\alpha$ and $\beta$ are dummy vertices.}
\label{fig:G}
\end{figure}

For each $e^*\in\pi$, we denote by $e^*_x$ the copy of $e^*$ in $\pi_x$ and $e^*_y$ the copy of $e^*$ in $\pi_y$. Note that each $v\in V(G)\setminus\{s,t\}$ generates a face $f^D_v$ in ${D}$. There are not faces $f^D_s$ and $f^D_t$ because the dummy vertices $\alpha$ and $\beta$ are inside faces $f^*_s$ and $f^*_t$, respectively. Both faces $f^*_s$ and $f^*_t$ ``correspond'' in ${D}$ to the leftmost $x_1y_1$ path and to the rightmost $x_ky_k$ path, respectively. Since we are not interested in removing vertices $s$ and $t$, then faces $f^D_s$ and $f^D_t$ are not needed in $D$.  In Figure~\ref{fig:pi_double_and_D}, on the right there is graph $D$ built on $G$ in Figure~\ref{fig:G}.

If $e^*\not\in\pi$, then we denote the corresponding edge in $D$ by $e^D$. Similarly, if $v^*_i\not\in\pi$ (that is, $i>k$), then we denote the corresponding vertex in $D$ by $v^D_i$. 


\begin{figure}[h]
\captionsetup[subfigure]{justification=centering}
\centering
\begin{subfigure}{7cm}
\begin{overpic}[width=7cm,percent]{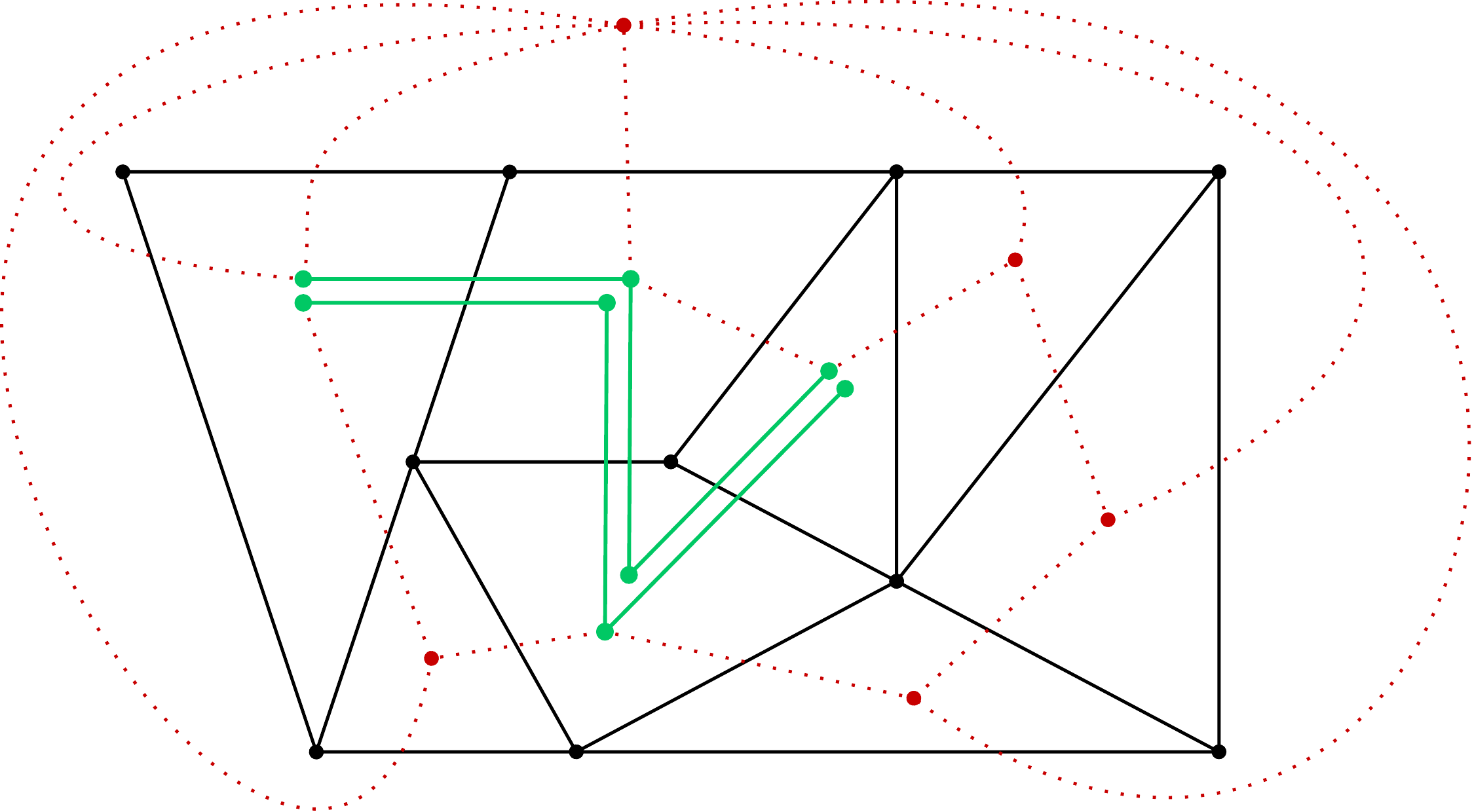}

\put(34.5,37.5){\color[rgb]{0,0.784,0.350}$\pi_x$}
\put(24,31.5){\color[rgb]{0,0.784,0.350}$\pi_y$}

\put(22,38){\color[rgb]{0.784,0,0}$x_1$}
\put(43.5,37){\color[rgb]{0.784,0,0}$x_2$}
\put(35,16){\color[rgb]{0.784,0,0}$x_3$}
\put(54.5,32.8){\color[rgb]{0.784,0,0}$x_4$}
\put(70.5,36){\color[rgb]{0.784,0,0}$v_5^*$}
\put(75.5,16){\color[rgb]{0.784,0,0}$v_6^*$}
\put(65,6.5){\color[rgb]{0.784,0,0}$v_7^*$}
\put(30,6.5){\color[rgb]{0.784,0,0}$v_8^*$}
\put(40,56.3){\color[rgb]{0.784,0,0}$v_\infty^*$}

\put(17,31){\color[rgb]{0.784,0,0}$y_1$}
\put(35,31){\color[rgb]{0.784,0,0}$y_2$}
\put(39.8,8.2){\color[rgb]{0.784,0,0}$y_3$}
\put(55.5,24){\color[rgb]{0.784,0,0}$y_4$}

\put(9,44.2){$a$}
\put(35,44.2){$b$}
\put(58,44.2){$c$}
\put(79,44.2){$d$}
\put(29.5,24.5){$e$}
\put(44,26){$g$}
\put(38,-0.5){$h$}
\put(84,4){$i$}
\put(17.5,4){$s$}
\put(60,10.7){$t$}

\end{overpic}
\end{subfigure}
\qquad
\begin{subfigure}{4.5cm}
\begin{overpic}[width=4.5cm,percent]{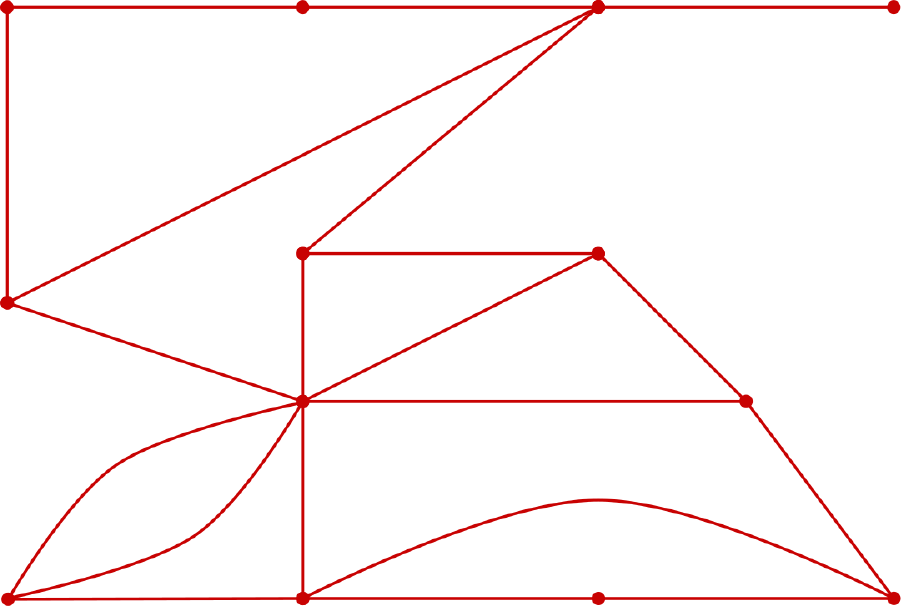}

\put(-1.2,69.6){\color[rgb]{0.784,0,0}$y_1$}
\put(31.3,69.6){\color[rgb]{0.784,0,0}$y_2$}
\put(64.3,69.6){\color[rgb]{0.784,0,0}$y_3$}
\put(96.8,69.6){\color[rgb]{0.784,0,0}$y_4$}

\put(84,24){\color[rgb]{0.784,0,0}$v^D_5$}
\put(65,42){\color[rgb]{0.784,0,0}$v^D_6$}
\put(27,42){\color[rgb]{0.784,0,0}$v^D_7$}
\put(-2,25){\color[rgb]{0.784,0,0}$v^D_8$}
\put(35,15.5){\color[rgb]{0.784,0,0}$v^D_\infty$}

\put(-1.2,-5){\color[rgb]{0.784,0,0}$x_1$}
\put(31.3,-5){\color[rgb]{0.784,0,0}$x_2$}
\put(64.3,-5){\color[rgb]{0.784,0,0}$x_3$}
\put(96.8,-5){\color[rgb]{0.784,0,0}$x_4$}

\put(14,9.9){${f}_a^D$}
\put(23,4){${f}_b^D$}
\put(58,14.5){${f}_c^D$}
\put(62,27){${f}_d^D$}
\put(17,55){${f}_e^D$}
\put(65,4){${f}_g^D$}
\put(17,34){${f}_h^D$}
\put(38,30.5){${f}_i^D$}
\end{overpic}
\end{subfigure}		
\caption{on the left green path $\pi$ is doubled into paths $\pi_x$ and $\pi_y$, and edges incident on $x_1,y_1,x_4,y_4$ in $G^*$ are moved according to the dummy vertices $\alpha$ and $\beta$ in Figure~\ref{fig:G}. On the right graph $D$.}
\label{fig:pi_double_and_D}
\end{figure}

\section{Our theoretical results}\label{sec:main_results}

In this section we show our main theoretical results (Proposition~\ref{prop:vitality_edges} and Proposition~\ref{prop:vitality_vertices}) that allow us to compute edge and vertex vitality. In Subsection~\ref{sub:effects_deleting_edge_or_vertex} we show the effects in $G^*$ and $D$ of removing an edge or a vertex from $G$. In Subsection~\ref{sub:single-crossing} we prove that we can focus only on $st$-separating cycles that cross $\pi$ exactly once, and in Subsection~\ref{sub:what_we_need} we state the two main propositions about edge and vertex vitality.

\subsection{Effects on $G^*$ and ${D}$ of deleting an edge or a vertex of $G$}\label{sub:effects_deleting_edge_or_vertex}


We observe that  removing an edge $e$ from $G$ corresponds to  contracting  endpoints of $e^*$ into one vertex in $G^*$. With respect to $D$, if $e^*\not\in\pi$, then the removal of $e$ corresponds to the contraction into one vertex of endpoints of $e^D$. If $e^*\in\pi$, then both copies of $e^*$ have to be contracted. In Figure~\ref{fig:G_setminus_edge} we show  the effects of removing edge $eg$ from graph $G$ in Figure~\ref{fig:G}.

Let $v$ be a vertex of $V(G)$. Removing $v$ corresponds to contracting vertices of face $f^*_v$ in $G^*$ into a single vertex. In ${D}$, if $f^*_v$ and $\pi$ have no common vertices, then all vertices of $f^D_v$ are contracted into one. Otherwise $f^*_v$ intersects $\pi$ on vertices $\bigcup_{i\in I}\{v^*_i\}$ for some non empty set $I\subseteq[k]$. Then all vertices of $f^D_v$ are contracted into one vertex, all vertices of $\bigcup_{i\in I}\{x_i\}$ not belonging to ${f}^D_v$ are contracted into another vertex and all vertices of $\bigcup_{i\in I}\{y_i\}$ not belonging to ${f}^D_v$ are contracted into a third vertex. For convenience, we define $q_{f^D_v}^x=(\bigcup_{i\in I_v}\{x_i\})\setminus V(f^D_v)$ and $q_{f^D_v}^y=(\bigcup_{i\in I_v}\{y_i\})\setminus V(f^D_v)$. To better understand these definitions, see Figure~\ref{fig:xy_face}.
In Figure~\ref{fig:G_setminus_vertex} it is shown what happens when we remove vertex $g$ of graph $G$ in Figure~\ref{fig:G}.

\begin{figure}[h]
\captionsetup[subfigure]{justification=centering}
\centering
\begin{subfigure}[t]{7cm}
\begin{overpic}[width=7cm,percent]{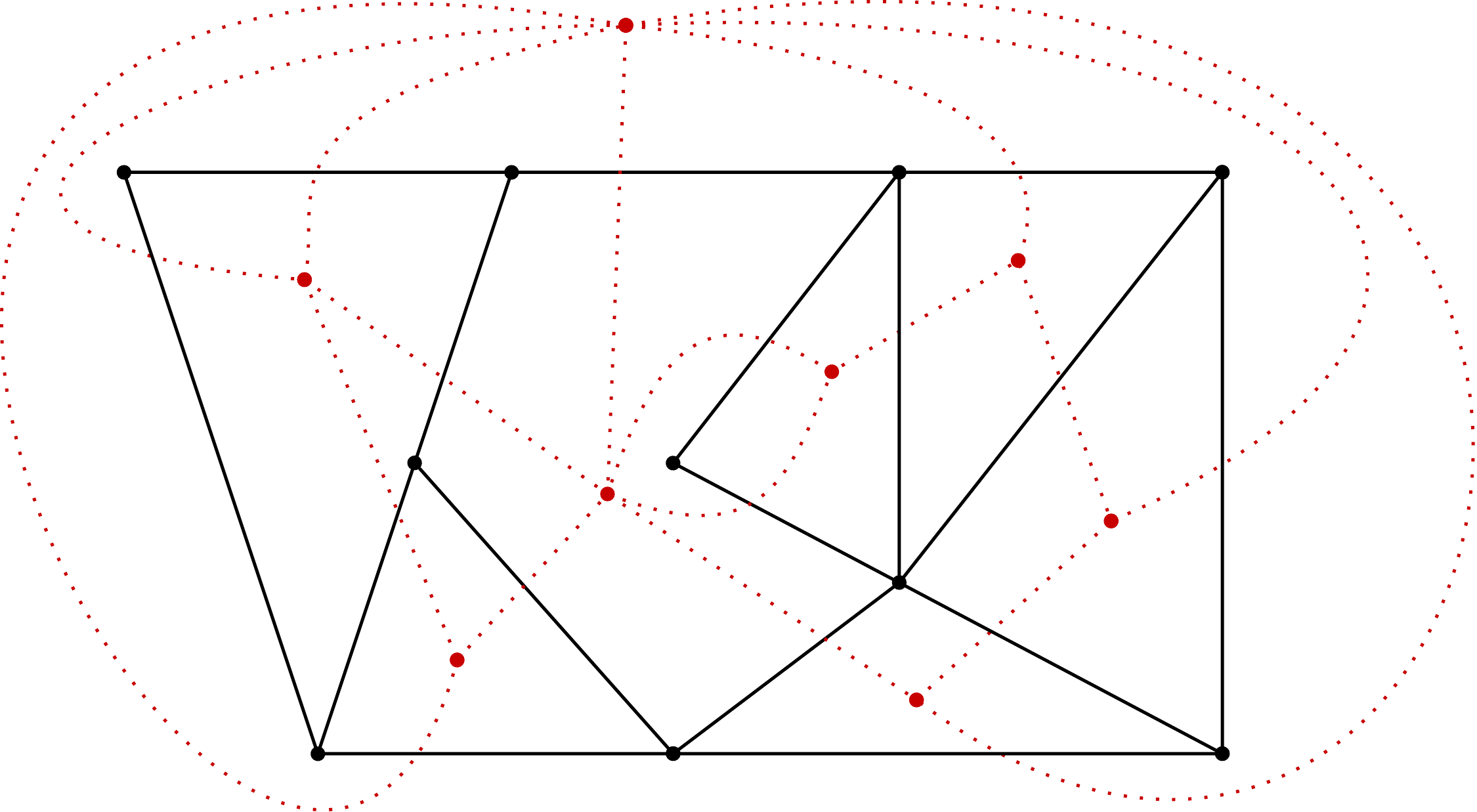}

\put(39,15){\color[rgb]{0.784,0,0}$v^*_{\text{new}}$}
\put(22,36){\color[rgb]{0.784,0,0}$v_1^*$}
\put(55,33.3){\color[rgb]{0.784,0,0}$v_4^*$}
\put(70.5,36){\color[rgb]{0.784,0,0}$v_5^*$}
\put(75.5,16){\color[rgb]{0.784,0,0}$v_6^*$}
\put(65,6.5){\color[rgb]{0.784,0,0}$v_7^*$}
\put(31.5,7){\color[rgb]{0.784,0,0}$v_8^*$}
\put(40,55.5){\color[rgb]{0.784,0,0}$v_\infty^*$}

\put(9,44){$a$}
\put(35,44){$b$}
\put(58,44){$c$}
\put(79,44){$d$}
\put(29.5,24.5){$e$}
\put(44,26){$g$}
\put(45,-1.3){$h$}
\put(84,4){$i$}
\put(17.5,4){$s$}
\put(60,10.7){$t$}

\end{overpic}
\end{subfigure}
\qquad
\begin{subfigure}[t]{4.5cm}
\begin{overpic}[width=4.5cm,percent]{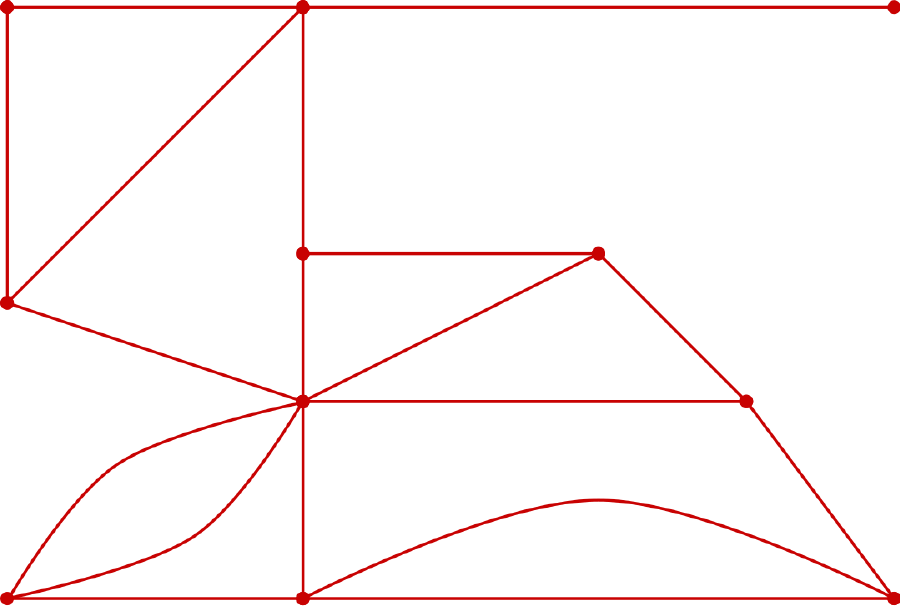}
\put(-1.2,69.6){\color[rgb]{0.784,0,0}$y_1$}
\put(31.3,69.6){\color[rgb]{0.784,0,0}$y_{\text{new}}$}

\put(96.8,69.6){\color[rgb]{0.784,0,0}$y_4$}

\put(-1.2,-5){\color[rgb]{0.784,0,0}$x_1$}
\put(31.3,-5){\color[rgb]{0.784,0,0}$x_{\text{new}}$}
\put(96.8,-5){\color[rgb]{0.784,0,0}$x_4$}

\put(84,24){\color[rgb]{0.784,0,0}$v^D_5$}
\put(65,42){\color[rgb]{0.784,0,0}$v^D_6$}
\put(35,42){\color[rgb]{0.784,0,0}$v^D_7$}
\put(-2,25){\color[rgb]{0.784,0,0}$v^D_8$}
\put(35,15.5){\color[rgb]{0.784,0,0}$v^D_\infty$}

\put(14,9.9){${f}_a^D$}
\put(23,4){${f}_b^D$}
\put(58,14.5){${f}_c^D$}
\put(62,27){${f}_d^D$}
\put(12,55){${f}_e^D$}
\put(65.5,4){${f}_g^D$}
\put(17,34){${f}_h^D$}
\put(38,30.5){${f}_i^D$}

\end{overpic}
\end{subfigure}		
\caption{starting from graph $G$ in Figure~\ref{fig:G}, we show on the left graph $G-eg$ and $(G-eg)^*$, and graph $D_{G-eg}$ on the right.}
\label{fig:G_setminus_edge}
\end{figure}

\begin{figure}[h]
\captionsetup[subfigure]{justification=centering}
\centering
	\begin{subfigure}{5cm}
\begin{overpic}[width=5cm,percent]{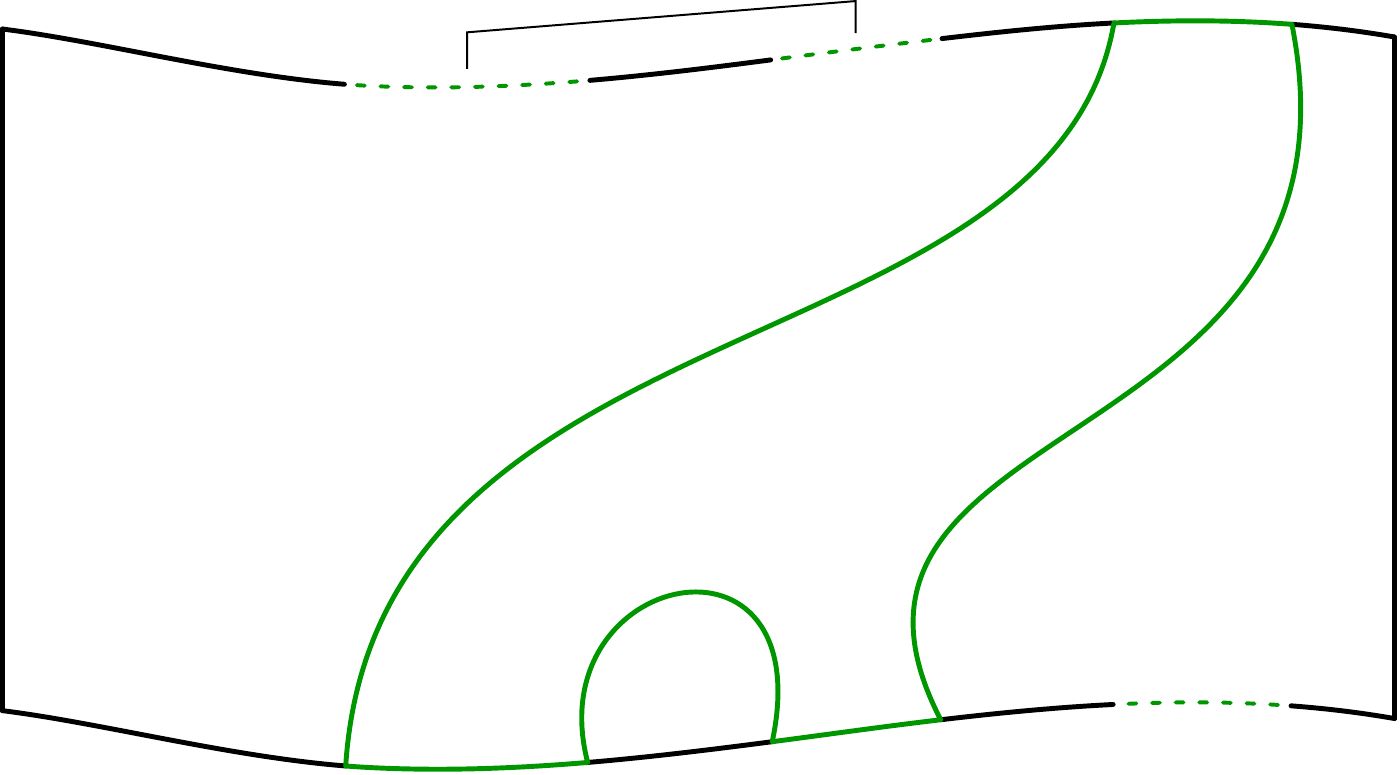}
\put(83,-2){$q^x_{f^D_v}$}
\put(44,61.5){$q^y_{f^D_v}$}
\put(60,25){$f^D_v$}
\end{overpic}
\end{subfigure}		
\caption{a face $f^D_v$, for some $v\in V(G)$, and sets $q^x_{f^D_v}$ and  $q^y_{f^D_v}$. Removing $v$ from $G$ corresponds in $D$ to contracting vertices of $f^D_v$, $q^x_{f^D_v}$ and  $q^y_{f^D_v}$ in three distinct vertices.}
\label{fig:xy_face}
\end{figure}

\begin{figure}[h]
\captionsetup[subfigure]{justification=centering}
\centering
\begin{subfigure}[t]{7cm}
\begin{overpic}[width=7cm,percent]{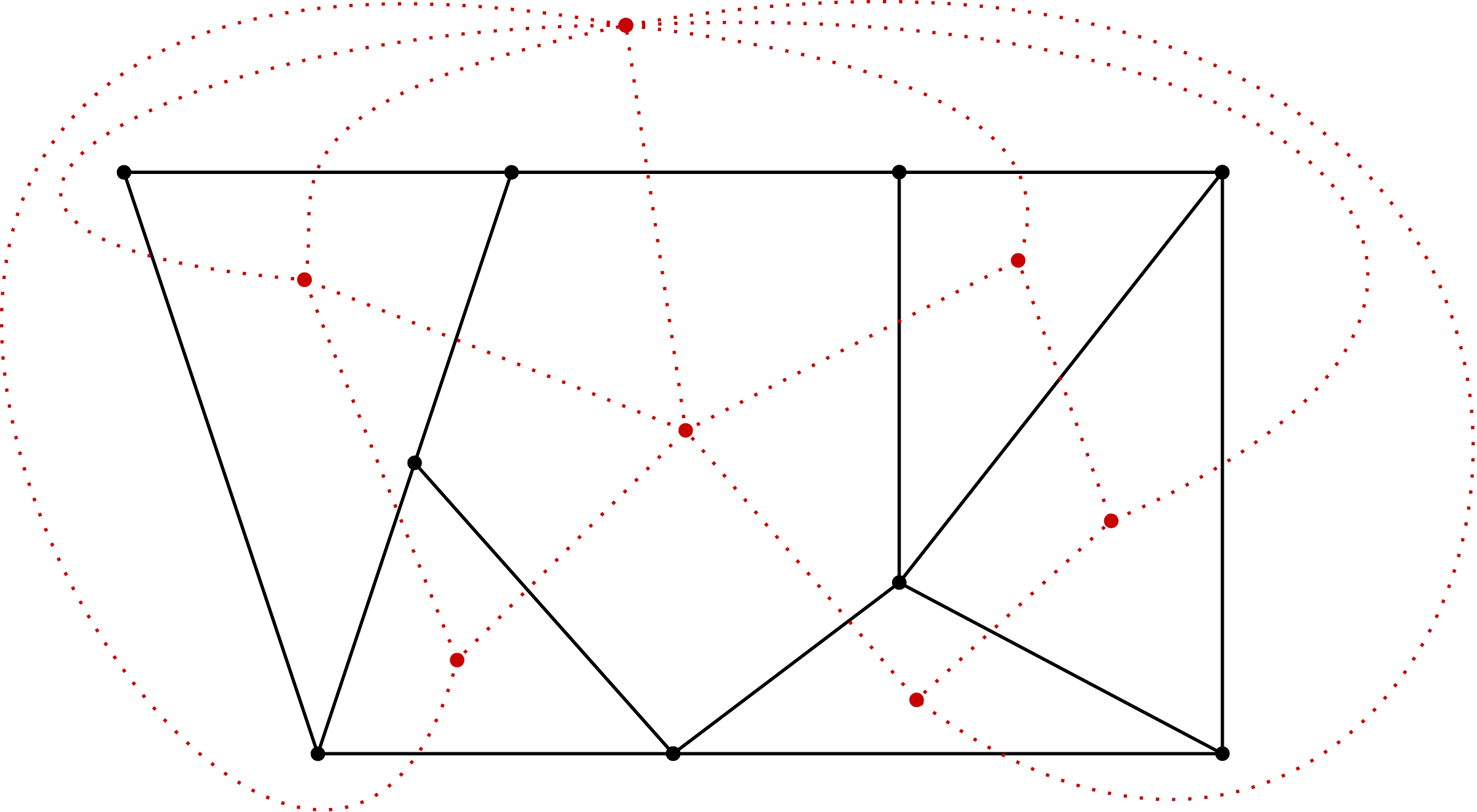}

\put(49,24){\color[rgb]{0.784,0,0}$v^*_{\text{new}}$}

\put(22,37){\color[rgb]{0.784,0,0}$v_1^*$}
\put(70.5,36){\color[rgb]{0.784,0,0}$v_5^*$}
\put(75.5,16){\color[rgb]{0.784,0,0}$v_6^*$}
\put(65,6.5){\color[rgb]{0.784,0,0}$v_7^*$}
\put(31.5,7){\color[rgb]{0.784,0,0}$v_8^*$}
\put(40,55.5){\color[rgb]{0.784,0,0}$v_\infty^*$}

\put(9,44){$a$}
\put(35,44){$b$}
\put(58,44){$c$}
\put(79,44){$d$}
\put(29.5,24.5){$e$}
\put(45,-1.3){$h$}
\put(84,4){$i$}
\put(17.5,4){$s$}
\put(60,10.7){$t$}
\end{overpic}
\end{subfigure}	
\qquad
\begin{subfigure}[t]{4.5cm}
\begin{overpic}[width=4.5cm,percent]{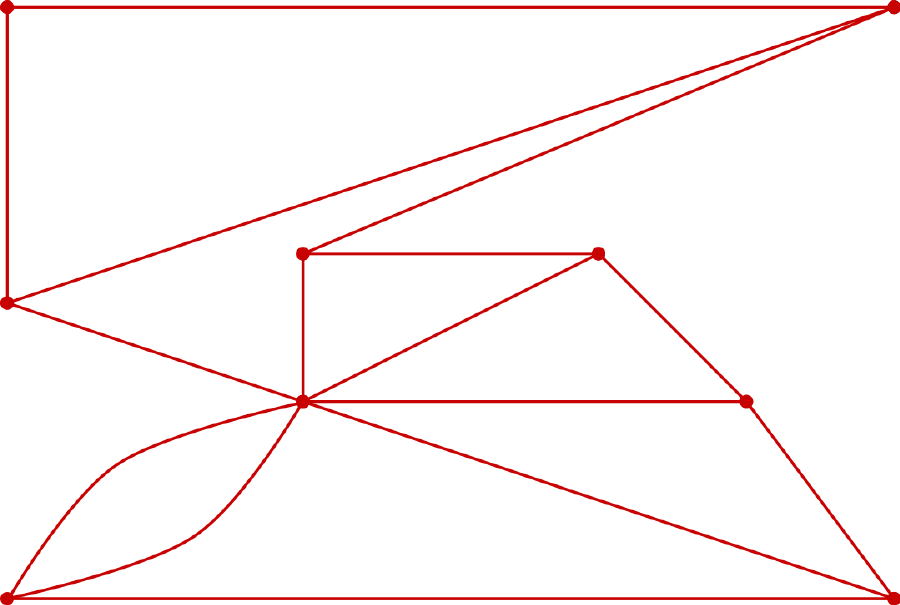}
\put(-1.2,69.6){\color[rgb]{0.784,0,0}$y_1$}
\put(96.8,69.6){\color[rgb]{0.784,0,0}$y_{\text{new}}$}

\put(84,24){\color[rgb]{0.784,0,0}$v^D_5$}
\put(65,42){\color[rgb]{0.784,0,0}$v^D_6$}
\put(24.2,35){\color[rgb]{0.784,0,0}$v^D_7$}
\put(-2,25){\color[rgb]{0.784,0,0}$v^D_8$}
\put(31.5,13.1){\color[rgb]{0.784,0,0}$v^D_\infty$}

\put(-1.2,-5){\color[rgb]{0.784,0,0}$x_1$}
\put(96.8,-5){\color[rgb]{0.784,0,0}$x_{\text{new}}$}

\put(13.5,9.9){${f}^D_a$}
\put(47,5){${f}^D_b$}
\put(66,14.5){${f}^D_c$}
\put(62,27){${f}^D_d$}
\put(13,55){${f}^D_e$}
\put(13,32){${f}^D_h$}
\put(39,30.5){${f}^D_i$}
\end{overpic}
\end{subfigure}				
\caption{starting from graph $G$ in Figure~\ref{fig:G}, we show on the left graph $G-g$ and $(G-g)^*$, and graph $D_{G-g}$ on the right.}
\label{fig:G_setminus_vertex}
\end{figure}

\subsection{Single-crossing $st$-separating cycles}\label{sub:single-crossing}

Itai and Shiloach~\cite{itai-shiloach} consider only shortest $st$-separating cycles that cross $\pi$ exactly once, that correspond in $D$ to paths from $x_i$ to $y_i$, for some $i\in[k]$. Formally, given two paths $p_1,p_2$ in a plane graph, a \emph{crossing} between $p_1$ and $p_2$ is a minimal subpath of $p_1$ defined by vertices $v_1,v_2,\ldots,v_k$, with $k\geq 3$, such that vertices $v_2,\ldots,v_{k-1}$ are contained in $p_2$, and, fixing an orientation of $p_2$, edge $v_1 v_2$ lies to the left of $p_2$ and edge $v_{k-1} v_k$ lies to the right of $p_2$, or vice-versa. We say \emph{$p_1$ crosses $p_2$ $t$ times} if there are $t$ different crossings between $p_1$ and $p_2$.

In our approach, we contract vertices of an edge or a face of $G^*$. Despite this we can still consider only $st$-separating cycles that cross $\pi$ exactly once. The proof of this is the goal of this subsection.

\begin{lemma}\label{lemma:st-separating_unique}
Let $\gamma$ be a simple $st$-separating cycle and let $S$ be either an edge or a face of $G^*$. Let $r=|V(\gamma)\cap V(S)|$. After contracting vertices of $S$ into one vertex, then $\gamma$ becomes the union of $r$ simple cycles and exactly one of them is an $st$-separating cycle.
\end{lemma}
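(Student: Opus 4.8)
The plan is to analyze what the contraction of $S$ does to the cycle $\gamma$ topologically, using the Jordan curve theorem in the plane (equivalently, on the sphere after adjoining the infinite face). Think of $\gamma$ as a simple closed curve in the plane separating $v^*_s$ from $v^*_t$. The set $V(\gamma)\cap V(S)$ consists of $r$ vertices, say $u_1,\dots,u_r$, which appear along $\gamma$ in some cyclic order. Contracting all of $S$ to a single vertex $z$ identifies $u_1,\dots,u_r$. First I would observe that identifying $r$ distinct points on a simple closed curve $\gamma$ splits it into exactly $r$ closed subcurves $\gamma_1,\dots,\gamma_r$, where $\gamma_j$ is the subpath of $\gamma$ between consecutive occurrences $u_j$ and $u_{j+1}$ (indices mod $r$), each of which becomes a simple cycle through $z$ after the identification (simple because $\gamma$ was simple and the only repeated vertex is the glued one); when $r=1$ this degenerates to $\gamma$ itself becoming a single loop at $z$, and the statement is trivial, so assume $r\geq 2$.

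Next I would set up the key geometric fact about $S$: whether $S$ is an edge $e^*=\{u^*,v^*\}$ or a face $f^*$ of $G^*$, it is a connected region of the plane, and contracting it to a point $z$ is, up to homeomorphism, the same as choosing a point $z$ in that region and adding disjoint arcs inside $S$ from $z$ to each $u_j$ (these arcs meet $\gamma$ only at their endpoints $u_j$, since $\gamma$ is disjoint from the interior of $S$). The crucial point is that $S$ lies entirely on one side of $\gamma$ — or rather, since $\gamma$ may pass through vertices of $S$, the interior of $S$ minus $\gamma$ lies entirely in one of the two open regions determined by $\gamma$; I would justify this by noting that $S$ is connected and $\gamma \cap \mathrm{int}(S)$ is a union of vertices (a face of $G^*$ has $\gamma$ crossing only at vertices by planarity of $G^*$; for an edge the claim is immediate). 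Call that side the "near side." Now among the $r$ arcs of $\gamma$, exactly those $\gamma_j$ whose interior excursion goes to the far side can, together with part of $S$, enclose $v^*_s$ or $v^*_t$; more carefully, the collapsed curve is the union $\bigcup_j \gamma_j'$ of loops at $z$, and I would argue that exactly one of these loops separates $v^*_s$ from $v^*_t$.

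To pin down the "exactly one," here is the cleanest argument I foresee: contracting $S$ is a continuous (quotient) map $\phi$ of the sphere, and since $S$ is contractible and disjoint from $\{v^*_s, v^*_t\}$, the images $\phi(v^*_s)$ and $\phi(v^*_t)$ are still two distinct points not on $\phi(\gamma)$. The cycle $\gamma$ separates them, so $v^*_s$ and $v^*_t$ lie in different faces of $\gamma$; since $\phi$ collapses $S$ (which lies in the closure of exactly one face of $\gamma$ on its interior side) and does nothing topologically essential elsewhere, the images $\phi(v^*_s),\phi(v^*_t)$ still lie in different faces of the image $\phi(\gamma) = \bigcup_j \gamma_j'$. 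The complement of $\bigcup_j \gamma_j'$ in the sphere has its faces organized in a tree-like (cactus) structure hanging off $z$: each loop $\gamma_j'$ bounds a disk $R_j$ (its "inside," the region not containing $z$'s opposite side) and these $R_j$ are nested-or-disjoint. Since $\phi(v^*_s)$ and $\phi(v^*_t)$ are in different faces, there is a unique loop $\gamma_j'$ with exactly one of the two points in $R_j$ — this is the $st$-separating cycle, and any other $\gamma_{j'}'$ has both points on the same side. I would formalize "unique" by observing that if two distinct loops $\gamma_j', \gamma_{j'}'$ each separated $\phi(v^*_s)$ from $\phi(v^*_t)$, then walking from $\phi(v^*_s)$ to $\phi(v^*_t)$ would have to cross each an odd number of times, but one can route a path through $z$ meeting the union $\bigcup \gamma_j'$ only at $z$, contradiction; alternatively use a parity/winding argument.

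\textbf{Expected main obstacle.} The delicate point is the rigorous handling of the case where $\gamma$ passes \emph{through} vertices of the face $S=f^*$ rather than being disjoint from $S$: I must be careful that "$\mathrm{int}(S)$ minus $\gamma$ lies on one side of $\gamma$" is genuinely true and that the contraction does not secretly create a crossing or merge faces in an unexpected way. This requires using planarity of $G^*$ to say that $\gamma$, being a cycle in $G^*$, can only touch the face $f^*$ along its boundary vertices and cannot enter the open face; and then arguing that the $r$ arcs from $z$ to the $u_j$ can be drawn inside $f^*$ without crossing each other or $\gamma$. I would handle this by working on the sphere, treating the contraction as the quotient by a contractible set, and invoking the fact (a consequence of the Jordan–Schoenflies theorem) that collapsing a closed disk disjoint from two marked points preserves the "separates / does not separate" relation for simple closed curves meeting that disk only in its boundary.
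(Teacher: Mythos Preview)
Your proposal is correct in substance and shares the key observation with the paper's proof: the interior of $S$ lies entirely on one side of $\gamma$, so the $r$ arcs of $\gamma$ between consecutive vertices of $V(\gamma)\cap V(S)$, together with pieces of the boundary of $S$, carve that side into $r$ regions with a nested-or-disjoint structure. Where you diverge is in execution. The paper works entirely \emph{before} contraction: it names the regions $R_1,\dots,R_r$ explicitly as bounded by an arc $q_i$ of $\gamma$ and the corresponding counterclockwise arc of the face boundary, then does a clean two-case split on whether $f\subseteq R_{in}$ or $f\subseteq R_{out}$. In the first case the $R_i$ partition $R_{in}$, so exactly one contains $s$; in the second case exactly one $R_j$ contains all of $R_{in}$ (hence $s$), and one checks whether $t$ lands in some other $R_i$ or stays outside. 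This is purely combinatorial and avoids any appeal to quotient maps or Jordan--Schoenflies.

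Your route via the quotient $\phi$ and a cactus/nesting argument is legitimate and arguably more conceptual, but your ``exactly one'' step needs tightening: the sentence about routing a path through $z$ that meets $\bigcup_j\gamma_j'$ only at $z$ does not by itself yield a contradiction, since crossing at the shared vertex $z$ is ambiguous (a path through $z$ can be regarded as crossing any subset of the loops). The nested-or-disjoint structure of the $R_j$'s is the right fix, and that is precisely what the paper's region-based argument makes explicit without any topological machinery. In short: both proofs hinge on the same geometric fact, but the paper's combinatorial case analysis is shorter and sidesteps the obstacle you flagged.
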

\begin{proof}
Since an edge can be seen as a degenerate face, we prove the statement only in the case in which $S$ is a face $f$. Let $v^*\in V(\gamma)$ and let $u_1^*,u_2^*,\ldots,u_r^*$ be the vertices of $V(\gamma)\cap V(f)$ ordered in clockwise order starting from $v^*$. For convenience, let $u^*_{r+1}=u_1^*$. For $i\in[r]$, let $q_i$ be the clockwise $u_i^* u_{i+1}^*$ path on $\gamma$. After contracting the vertices of $f$ into one, $q_i$ becomes a cycle.  Every $q_i$'s joined with the counterclockwise $u_i^* u_{i+1}^*$ path on the border cycle of $f$ defines a region $R_i$ of $G^*$. 
We remark that if $q_i$ is composed by a single edge $e^*$, then $q_i$ becomes a self-loop and region $R_i$ is a composed only by $e^*$.

Cycle $\gamma$ splits graph $G^*$ into two regions: a region internal to $\gamma$ called $R_{in}$ and an external region called $R_{out}$. W.l.o.g., we assume that $s\in R_{in}$ and $t\in R_{out}$. Now we split the proof into two cases: $f\subseteq R_{in}$ and $f\subseteq R_{out}$. 

\noindent
$\bullet$ Case $f\subseteq R_{in}$. By above, it holds that $R_1,\ldots,R_r\subseteq R_{in}$ (see Figure~\ref{fig:st-separating_internal} on the left). Being $\gamma$ an $st$-separating cycle, then there exists a unique $j\in[r]$ such that $s\in R_j$. Thus, after contracting vertices of $f$ into one, $p_j$ becomes the unique $st$-separating cycle, while all others $R_i$'s become cycles that split $G^*$ into two regions, and each region contains neither $s$ nor $t$ (see Figure~\ref{fig:st-separating_internal} on the right).

\noindent
$\bullet$ Case $f\subseteq R_{out}$. By above there exists a unique $j\in[r]$ such that $R_i\subseteq R_{out}$ for all $i\neq j$ and $R_{in}\subseteq R_j$ (see Figure~\ref{fig:st-separating_external} on the left). W.l.o.g., we assume that $j=r$. After contracting the vertices of $f$ into one, all regions $R_1,\ldots,R_{r-1}$ become regions inside $R_r$ because of the embedding (see Figure~\ref{fig:st-separating_external} on the right). We recall that $s\in R_{in}$, thus there are two cases: if $t\in R_i$ for some $i\in[r-1]$, then $p_i$ becomes the unique $st$-separating cycle; otherwise, $t\in R_{out}$, and thus $p_r$ becomes the unique $st$-separating cycle.\qed
\end{proof}


\begin{figure}[h]
\captionsetup[subfigure]{justification=centering}
\centering
\begin{subfigure}[t]{12cm}
\begin{overpic}[width=12cm,percent]{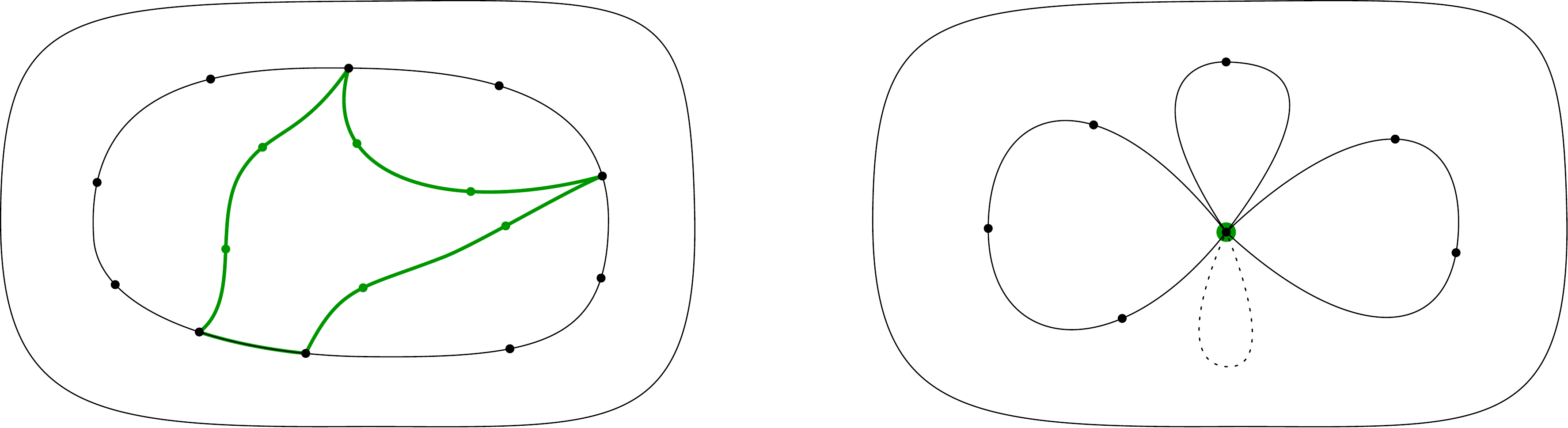}


\put(9,15){$R_2$}
\put(26,18){$R_3$}
\put(29,8){$R_4$}
\put(20,12.5){$f$}
\put(-3,24){$f_\infty$}
\put(6,21){$\gamma$}
\put(24.5,1.2){$R_{out}$}

\put(3.8,6.3){$a_1^*$}
\put(2.3,15){$a_2^*$}
\put(11.5,23.5){$a_3^*$}
\put(32,22.5){$a_4^*$}
\put(39,8){$a_5^*$}
\put(33,3){$a_6^*$}

\put(10,3){$u_4^*$}
\put(22,24){$u_1^*$}
\put(39,16){$u_2^*$}
\put(18,1.5){$u_3^*$}


\put(68,13){$R_2$}
\put(76.5,18){$R_3$}
\put(85,12){$R_4$}
\put(86,2){$R_{out}$}
\put(52.7,24){$f_\infty$}
\put(62.1,18){$\gamma$}

\put(71,4.5){$a_1^*$}
\put(59,12){$a_2^*$}
\put(68,20.5){$a_3^*$}
\put(77.5,24.2){$a_4^*$}
\put(88,19.5){$a_5^*$}
\put(93.5,10.5){$a_6^*$}

\end{overpic}
\end{subfigure}			
\caption{on the left, cycle $\gamma$ and face $f$ belonging to $R_{in}$. On the right, cycle $\gamma$ after contracting all vertices of $f$ into one, the dashed edge represents a self-loop.}
\label{fig:st-separating_internal}
\end{figure}

\begin{figure}[h]
\captionsetup[subfigure]{justification=centering}
\centering
\begin{subfigure}[t]{12cm}
\begin{overpic}[width=12cm,percent]{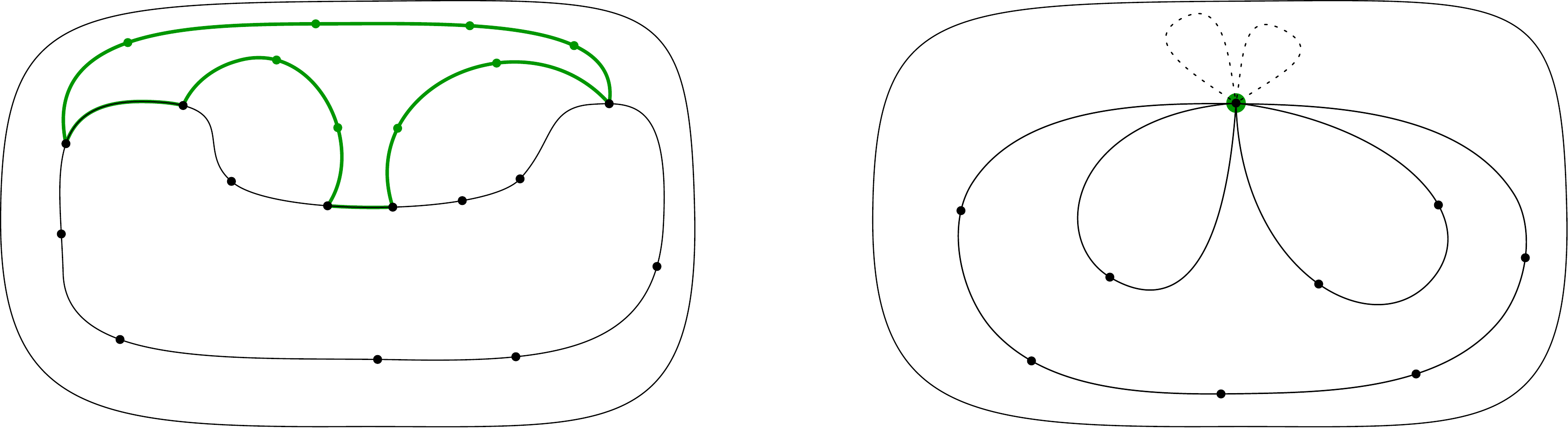}
\put(23,22){$f$}
\put(16,18){$R_2$}
\put(29,17.5){$R_4$}
\put(14,8){$R_5$}
\put(13,1.5){$R_{out}$}
\put(-3,24){$f_\infty$}
\put(1.3,14.5){$\gamma$}

\put(11.5,13){$a_1^*$}
\put(29,12){$a_2^*$}
\put(33.7,14.5){$a_3^*$}
\put(38,10.5){$a_4^*$}
\put(31,5.8){$a_5^*$}
\put(22.5,5.5){$a_6^*$}
\put(7.5,6.5){$a_7^*$}
\put(4.5,11.5){$a_8^*$}

\put(4.8,16.5){$u_1^*$}
\put(9.3,17.2){$u_2^*$}
\put(18.5,11.2){$u_3^*$}
\put(23.5,11){$u_4^*$}
\put(37.2,17.5){$u_5^*$}


\put(72,15){$R_2$}
\put(82,15.5){$R_4$}
\put(78,7){$R_5$}
\put(63,22.5){$R_{out}$}
\put(52.7,24){$f_\infty$}
\put(61,17.5){$\gamma$}

\put(70.5,10.5){$a_1^*$}
\put(84.5,9.5){$a_2^*$}
\put(88,12){$a_3^*$}
\put(93.4,10.5){$a_4^*$}
\put(87.5,4.5){$a_5^*$}
\put(77,3.2){$a_6^*$}
\put(66.5,4.5){$a_7^*$}
\put(62,13){$a_8^*$}
\end{overpic}
\end{subfigure}			
\caption{on the left, cycle $\gamma$ and face $f$ belonging to $R_{out}$. On the right, cycle $\gamma$ after contracting all vertices of $f$ into one, dashed edges represent self-loops.}
\label{fig:st-separating_external}
\end{figure}

Let $\Gamma$ be the set of all $st$-separating cycles in $G^*$, and let $\Gamma_1$ be the set of all $st$-separating cycles in $G^*$ that cross $\pi$ exactly once.
Given $\gamma\in\Gamma$ and either an edge or a face $S$ of $G^*$, thanks to Lemma~\ref{lemma:st-separating_unique} we can define $\Delta_S(\gamma)$ as ``the length of the unique $st$-separating cycle contained in $\gamma$ after contracting vertices of $S$ into one''. Being $\MF$ equal to the length of a minimum $st$-separating cycle, the following relations hold:

\begin{equation*}
\text{for any $e\in E(G)$, $vit(e)=\MF-\min_{\gamma\in\Gamma}\Delta_{e^*}(\gamma)$,}
\end{equation*}
\begin{equation*}
\text{for any $v\in V(G)\setminus\{s,t\}$, $vit(v)=\MF-\min_{\gamma\in\Gamma}\Delta_{f^*_v}(\gamma)$}.
\end{equation*}

Now we show that in the above equations we can replace set $\Gamma$ with set $\Gamma_1$.

\begin{lemma}\label{lemma:multicrossing}
Let $e\in E(G)$ and $v\in V(G)\setminus\{s,t\}$. It holds that $vit(e)=\MF-\min_{\gamma\in\Gamma_1}\Delta_{e^*}(\gamma)$ and $vit(v)=\MF-\min_{\gamma\in\Gamma_1}\Delta_{f^*_v}(\gamma)$.
\end{lemma}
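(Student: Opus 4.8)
The plan is to show that for every $\gamma\in\Gamma$, there is a $\gamma'\in\Gamma_1$ with $\Delta_S(\gamma')\le\Delta_S(\gamma)$, where $S$ is either $e^*$ or $f^*_v$; since $\Gamma_1\subseteq\Gamma$, the two minima then coincide and the displayed equations for $vit(e)$ and $vit(v)$ carry over verbatim. So fix $\gamma\in\Gamma$ crossing $\pi$ some number $t\ge 1$ of times (it must cross at least once, being $st$-separating, as $\pi$ runs from $v^*_s$ to $v^*_t$). I would first handle the reduction of the crossing number, and only afterwards worry about the interaction with the contraction of $S$.

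First I would recall the standard Itai–Shiloach uncrossing argument: if $\gamma$ crosses $\pi$ at least twice, pick two consecutive crossings along $\pi$, at dual vertices $v^*_i$ and $v^*_j$ with $i<j$. The cycle $\gamma$ is split by these two points into two arcs $\gamma_1,\gamma_2$; replacing one of the two arcs by the subpath of $\pi$ between $v^*_i$ and $v^*_j$ yields a closed walk that is still $st$-separating (one of the two choices keeps $v^*_s$ and $v^*_t$ separated), has length at most $w(\gamma)$ because $\pi$ is a shortest path (so the $\pi$-subpath is no longer than either arc), and crosses $\pi$ strictly fewer times. Iterating, one reaches a simple $st$-separating cycle $\gamma'\in\Gamma_1$ with $w(\gamma')\le w(\gamma)$. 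This is the classical fact; the point of the lemma is that the same bound survives after contracting $S$.

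The main obstacle — and the step I would spend the most care on — is commuting ``uncross'' with ``contract $S$''. Here I would invoke Lemma~\ref{lemma:st-separating_unique}: after contracting $S$ into a single vertex, $\gamma$ breaks into $r=|V(\gamma)\cap V(S)|$ simple cycles, exactly one of which, say $\sigma(\gamma)$, is $st$-separating, and $\Delta_S(\gamma)=w(\sigma(\gamma))$. The cleanest route is to do the uncrossing \emph{in the contracted graph} rather than in $G^*$: apply the crossing-reduction argument above to the $st$-separating cycle $\sigma(\gamma)$ in the contracted graph, using that $\pi$ (or its image after contraction) is still a shortest path between the relevant endpoints there — contracting an edge or a face cannot increase distances. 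This produces a single-crossing $st$-separating cycle in the contracted graph of length at most $\Delta_S(\gamma)$. One then has to lift it back: a single-crossing $st$-separating cycle in the contracted graph is the image of some simple $st$-separating cycle $\gamma'$ in $G^*$ meeting $S$, and $\Delta_S(\gamma')$ equals its length, whence $\Delta_S(\gamma')\le\Delta_S(\gamma)$. A subtlety to check carefully is whether $\gamma'$ can be taken to cross $\pi$ exactly once \emph{in $G^*$} — i.e. that the crossings of $\gamma'$ with $\pi$ are not artifacts hidden by the contraction; here the fact that $S$ is a single edge or a single face (so the contracted vertex sits in a small, controlled location relative to $\pi$) is what makes this go through, and I would spell out the two cases $S=e^*$ with $e^*\in\pi$ versus $e^*\notin\pi$ (and analogously for $f^*_v$ meeting or avoiding $\pi$), mirroring the case split already used in Lemma~\ref{lemma:st-separating_unique}. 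Once this lifting is justified, taking the minimum over $\gamma\in\Gamma$ on both sides gives $\min_{\gamma\in\Gamma}\Delta_S(\gamma)=\min_{\gamma\in\Gamma_1}\Delta_S(\gamma)$, and substituting into the two displayed identities for $vit(e)$ and $vit(v)$ completes the proof.
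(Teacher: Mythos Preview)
Your overall plan --- reduce from $\Gamma$ to $\Gamma_1$ by an uncrossing argument that is compatible with the contraction --- is the right one, but the route you chose has a real gap. The step ``apply the crossing-reduction argument to $\sigma(\gamma)$ in the contracted graph, using that $\pi$ is still a shortest path there'' does not go through: contraction can \emph{strictly decrease} distances, so the image of $\pi$ need not be a shortest $v^*_s v^*_t$-path in the contracted graph (your justification ``contracting cannot increase distances'' points in the wrong direction). The Itai--Shiloach uncrossing step replaces an arc of the cycle by a subpath $\pi[a,b]$ and uses $w(\pi[a,b])\le w(\text{arc})$; once $\pi$ is not shortest this inequality may fail, and the resulting cycle can be longer. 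If instead you uncross with a \emph{new} shortest path in the contracted graph, you obtain a cycle crossing that new path once, but your lifting step no longer yields a cycle in $\Gamma_1$ (single crossing with the original $\pi$ in $G^*$), which is exactly the set you must land in.

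The paper avoids both problems by never leaving $G^*$. Using Lemma~\ref{lemma:st-separating_unique}, it identifies the unique arc $p$ of $\gamma$ between two vertices $a^*,b^*$ of $S$ whose image after contraction is the $st$-separating cycle, so $\Delta_S(\gamma)=w(p)$. It then uncrosses the \emph{path} $p$ with $\pi$ inside $G^*$ (where $\pi$ really is shortest), obtaining $p'$ with $w(p')\le w(p)$ and with the same endpoints $a^*,b^*$. Closing $p'$ along an arc $q$ of the face (or edge) $S$ produces a simple $st$-separating cycle $\gamma'=p'\circ q\in\Gamma_1$; since $q\subseteq S$ collapses under the contraction, $\Delta_S(\gamma')=w(p')\le w(p)=\Delta_S(\gamma)$. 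This is the key idea you are missing: do the uncrossing on the arc $p$ in $G^*$, not on $\sigma(\gamma)$ in the contracted graph.
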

\begin{proof}
We recall that removing an edge $e$ from $G$ corresponds to contracting endpoints of $e^*$ into one vertex, while removing a vertex $v$ from $G$ corresponds to contracting all the vertices in face $f^*_v$ into one vertex. So we prove the thesis only in the more general case of vertex removal. For convenience, we denote $f^*_v$ by $f$. Let $\gamma\in\Gamma$ be such that $vit(f)=\MF-\Delta_f(\gamma)$ and assume that $\gamma\not\in\Gamma_1$. If $V(\gamma)\cap V(f)=\emptyset$, then $vit(f)=0$, hence it suffices to remove crossings between $\gamma$ and $\pi$, see~\cite{itai-shiloach}. 
Thus let us assume that $V(\gamma)\cap V(f)\neq\emptyset$.

By Lemma~\ref{lemma:st-separating_unique}, there exist unique $a^*,b^*\in V(f)\cap V(\gamma)$ such that the clockwise $a^* b^*$ path $p$ on $\gamma$ becomes an $st$-separating cycle after the contraction of vertices of $f$ into one. Then we remove crossing between $p$ and $\pi$ in order to obtain a path $p'$ not longer than $p$ as above. Finally, let $\gamma'=p\circ q$, where $q$ is the clockwise $a^* b^*$ path on $f$. It holds that $\gamma'\in\Gamma_1$ and $\Delta_f(\gamma')\leq\Delta_f(\gamma)$, the thesis follows.\qed
\end{proof}

\subsection{Vitality vs.\ distances in $D$}\label{sub:what_we_need}

The main results of this subsection are Proposition~\ref{prop:vitality_edges} and Proposition~\ref{prop:vitality_vertices}. The first proposition shows which distances in $D$ are needed to obtain edge vitality and in the latter proposition we do the same for vertex vitality. 
In Subsection~\ref{sub:effects_deleting_edge_or_vertex} we have proved that removing an edge or a vertex from $G$ corresponds to contracting in single vertices some sets of vertices of $D$. The main result of Proposition~\ref{prop:vitality_edges} and Proposition~\ref{prop:vitality_vertices} is that we can consider these vertices individually. 

Let $e$ be an edge of $G$. The removal of $e$ from $G$ corresponds to the contraction of endpoints of $e^*$ into one vertex in $G^*$. Thus if an $st$-separating cycle $\gamma$ of $G^*$ contains $e^*$, then the removal of $e$ from $G$ reduces the length of $\gamma$ by $w(e^*)$. Thus $e$ has strictly positive vitality if and only if there exists an $st$-separating cycle $\gamma$ in $G^*$ whose length is strictly less than $\MF+w(e^*)$ and $e^*\in \gamma$. This is the main idea to compute the vitality of all edges. Now we have to translate it to ${D}$.

We observe that capacities of edges in $G$ become lengths (or weights) in ${D}$. For this reason, we define $w(e^D)=c(e)$, for all edges $e\in G$ satisfying $e^*\not\in\pi$ and $w(e^D_x)=w(e^D_y)=c(e)$ for all edges $e\in G$ satisfying $e^*\in\pi$.

For $i\in[k]$, we define $d_i=\dist_{D}(x_i,y_i)$. We observe that $\MF=\min_{i\in[k]}d_i$.
For a subset $S$ of $V({D})$ and any $i\in[k]$ we define $d_i(S)=\min\{d_i,\dist_{D}(x_i,S)+\dist_{D}(y_i,S)\}$. We observe that $d_i(S)$ represents the distance in ${D}$ from $x_i$ to $y_i$ if all vertices of $S$ are contracted into one. 

For every $x\in V(G)\cup E(G)$ we define $\MF_x$ as the max flow in graph $G-x$. By definition, $vit(x)=\MF-\MF_x$ and, trivially, $x$ has strictly positive vitality if and only if $\MF_x<\MF$.

\begin{proposition}\label{prop:vitality_edges}
For every edge $e$ of $G$, if $e^*\not\in \pi$, then $\MF_e=\min_{i\in[k]}\{d_i(e^D)\}$. If $e^*\in \pi$, then
$\MF_e=\min_{i\in[k]}\big\{\min\{d_i(e^D_x),d_i(e^D_y)\}\big\}$.
\end{proposition}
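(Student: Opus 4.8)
The plan is to translate the characterization of $\MF_e$ in $G^*$ (as the length of a shortest $st$-separating cycle after contracting the endpoints of $e^*$ into one vertex) into a distance computation in $D$, using Lemma~\ref{lemma:multicrossing} to restrict attention to $st$-separating cycles crossing $\pi$ exactly once. First I would recall that, by Proposition~\ref{prop:st-min-cut} applied to $G-e$ together with the discussion in Subsection~\ref{sub:effects_deleting_edge_or_vertex}, $\MF_e$ equals $\min_{\gamma\in\Gamma_1}\Delta_{e^*}(\gamma)$, and that an $st$-separating cycle in $G^*$ crossing $\pi$ exactly once corresponds in $D$ to a path from $x_i$ to $y_i$ for some $i\in[k]$ (this is exactly the correspondence used by Itai and Shiloach and recalled in Subsection~\ref{sub:single-crossing}). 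So the task is to show that, after contracting the endpoints of $e^*$ into one vertex, the shortest such cycle has length $\min_{i\in[k]}d_i(e^D)$ when $e^*\notin\pi$, and $\min_{i\in[k]}\min\{d_i(e^D_x),d_i(e^D_y)\}$ when $e^*\in\pi$.

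For the case $e^*\notin\pi$, the edge $e^*$ has a single copy $e^D$ in $D$ whose endpoints I will call $a$ and $b$. Contracting the endpoints of $e^*$ in $G^*$ corresponds to contracting $\{a,b\}$ in $D$ (this was observed in Subsection~\ref{sub:effects_deleting_edge_or_vertex}). By the definition $d_i(S)=\min\{d_i,\dist_D(x_i,S)+\dist_D(y_i,S)\}$ and the remark immediately following it, $d_i(e^D)$ is precisely the $x_iy_i$-distance in $D$ after contracting $\{a,b\}$; fixing $i$ and then minimizing over $i\in[k]$ gives the length of the shortest single-crossing $st$-separating cycle of $(G-e)^*$, which is $\MF_e$ by the above. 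The one point needing care is that after contraction the shortest $x_iy_i$-path in the contracted graph either avoids the new vertex entirely (contributing $d_i$) or passes through it, in which case it decomposes as an $x_i$-to-$\{a,b\}$ path followed by a $\{a,b\}$-to-$y_i$ path of total length $\dist_D(x_i,\{a,b\})+\dist_D(y_i,\{a,b\})$; this is exactly the $\min$ in the definition of $d_i(e^D)$, so no shorter cycle is missed and none is spuriously counted.

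For the case $e^*\in\pi$, the edge $e^*$ lies on the fixed shortest path and is duplicated into $e^D_x\in\pi_x$ and $e^D_y\in\pi_y$; removing $e$ contracts the endpoints of \emph{both} copies. Here I would argue that a shortest single-crossing $st$-separating cycle of $(G-e)^*$ benefits from the contraction only if it uses (the image of) $e^*$, and when cut along $\pi$ such a cycle becomes an $x_iy_i$-path that uses one of the two copies; whichever copy it uses, say $e^D_x$ with endpoints $a_x,b_x$, the path decomposes through the contracted vertex exactly as before, giving length $\dist_D(x_i,\{a_x,b_x\})+\dist_D(y_i,\{a_x,b_x\})$, i.e.\ $d_i(e^D_x)$ (after the outer $\min$ with $d_i$); symmetrically for $e^D_y$. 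Taking the minimum over the two copies and over $i\in[k]$ yields the claimed formula. The main obstacle I anticipate is the bookkeeping around the duplicated path $\pi$: I must check that contracting both copies does not create additional shortcuts beyond those captured by $d_i(e^D_x)$ and $d_i(e^D_y)$ separately — in particular that a cycle cannot profitably route through both copies of $e^*$ at once — which follows because such a cycle would have to cross $\pi$ more than once, contradicting the reduction to $\Gamma_1$ from Lemma~\ref{lemma:multicrossing}. I would close by noting that, since $e$ has positive vitality iff $\MF_e<\MF=\min_{i\in[k]}d_i$, the formulas also recover the criterion discussed before the proposition.
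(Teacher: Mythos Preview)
Your overall approach coincides with the paper's: reduce via Lemma~\ref{lemma:multicrossing} to single-crossing $st$-separating cycles, translate them to $x_iy_i$-paths in $D$, and read off $\MF_e$ as a minimum of the quantities $d_i(\cdot)$. For the case $e^*\notin\pi$ your argument is the same as the paper's.

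The one genuine gap is in your handling of $e^*\in\pi$. You dismiss the possibility that a shortest $x_iy_i$-path benefits from both contractions by asserting that ``such a cycle would have to cross $\pi$ more than once.'' This is not true: an $x_iy_i$-path in $D$ that uses both $e^D_x$ and $e^D_y$ corresponds in $G^*$ to a closed walk that still crosses $\pi$ exactly once (at $v_i^*$) --- traversing an edge lying \emph{on} $\pi$ is not a crossing of $\pi$. What actually goes wrong is that the walk traverses $e^*$ twice, hence is not a simple cycle; the paper exploits this differently, observing that any such walk has length at least $\MF+2c(e)$, so after deleting $e$ it contributes at least $\MF$ and is never strictly better than the minimum already captured by $\min\{d_i(e^D_x),d_i(e^D_y)\}$. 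You should replace the ``extra crossing'' claim by either this length argument or by the observation that a \emph{simple} $\gamma\in\Gamma_1$ can contain $e^*$ at most once, so its image in $D$ uses at most one of $e^D_x,e^D_y$; Lemma~\ref{lemma:multicrossing} then makes the restriction to such $\gamma$ legitimate.
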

\begin{proof} 
Let $e$ be an edge of $G$. If $vit(e)=0$, then $\MF_e=\MF$ and the thesis trivially holds. Hence let us assume $vit(e)>0$, then Lemma~\ref{lemma:multicrossing} there exists an $st$-separating cycle in $G^*$ that crosses $\pi$ exactly once satisfying $w(\gamma)<\MF+w(e^*)$ and $e^*\in \gamma$. If $e^*\not\in\pi$, then $e$ corresponds in $D$ to  edge $e^D$, thus the thesis holds. If $e^*\in\pi$, then we note that every path in $D$ containing both $e^D_x$ and $e^D_y$ corresponds in $G^*$ to an $st$-separating cycle that passes through $e^*$ twice, thus its length is equal or greater than $\MF+2 c(e)$. Thus we consider only paths that contain $e^D_x$ or $e^D_y$ but not both. The thesis follows.\qed
\end{proof}

Note that if $f^*_v$ and $\pi$ have some common vertices, then one among $q^x_{f^D_v}$ and $q^y_{f^D_v}$ could be empty. For this reason, we set $d_i(\emptyset)=+\infty$, for all $i\in[k]$. 

\begin{proposition}\label{prop:vitality_vertices}
For every vertex $v$ of $G$, if $f^*_v$ and $\pi$ have no common vertices, then $\MF_v=\min_{i\in[k]}\{d_i(f)\}$, where $f=f^D_v$, otherwise
\begin{equation}\label{eq:vitality_vertices}
\MF_v=\min\left\{\begin{array}{l}
\min_{i\in[k]}\{d_i(f)\},\\
\min_{i\in[k]}\{d_i(q^x_f)\},\\
\min_{i\in[k]}\{d_i(q^y_f)\},\\
\dist_{D}(f,q^x_f),\\
\dist_{D}(f,q^y_f)
\end{array}\right\}.
\end{equation}
\end{proposition}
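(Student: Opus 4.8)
The plan is to imitate the proof of Proposition~\ref{prop:vitality_edges}. Abbreviate $f=f^D_v$ and put $C_1=V(f^D_v)$, $C_2=q^x_f$, $C_3=q^y_f$; by Subsection~\ref{sub:effects_deleting_edge_or_vertex} these three pairwise disjoint sets of vertices of $D$ are exactly the ones that collapse, into three distinct vertices, when $v$ is removed, and for $i\in I_v$ the two copies $x_i,y_i$ of $v^*_i$ already lie in $C_1\cup C_2\cup C_3$. Combining Definition~\ref{definition:vitality}, the displayed identities preceding Lemma~\ref{lemma:multicrossing}, and Lemma~\ref{lemma:multicrossing}, we get $\MF_v=\min_{\gamma\in\Gamma_1}\Delta_{f^*_v}(\gamma)$; recall (Itai and Shiloach) that $\Gamma_1$ is in length-preserving correspondence with the paths of $D$ joining $x_i$ and $y_i$, over $i\in[k]$. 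If $f^*_v$ and $\pi$ share no vertex then $C_2=C_3=\emptyset$, and cutting $(G-v)^*$ along $\pi$ produces exactly $D$ with $C_1$ contracted, so the claim $\MF_v=\min_i d_i(f)$ follows word for word from Proposition~\ref{prop:vitality_edges} with $e^D$ replaced by the set $C_1$. Hence assume $I_v\neq\emptyset$, so that~\eqref{eq:vitality_vertices} is the statement to prove.

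For ``$\le$'' in~\eqref{eq:vitality_vertices}: by Proposition~\ref{prop:st-min-cut}, $\MF_v$ equals the minimum length of an $st$-separating cycle of $(G-v)^*$, so it is enough to exhibit, for each of the five terms, such a cycle of length at most that term. For $d_i(C_\ell)$ one uses either the $st$-separating cycle coming from a shortest path of $D$ joining $x_i$ and $y_i$ (contracted length at most $d_i$), or the cycle obtained by gluing a shortest $x_i$-to-$C_\ell$ path, a shortest $y_i$-to-$C_\ell$ path, and an arc of $\partial f^*_v$: contracting $f^*_v$ kills the last arc and leaves a cycle of length at most $\dist_D(x_i,C_\ell)+\dist_D(y_i,C_\ell)$. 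For $\dist_D(C_1,C_2)$ and $\dist_D(C_1,C_3)$ one glues a shortest path of $D$ between the two sets with an arc of $\partial f^*_v$ and argues in the same way. Verifying that these walks may be taken to contain a simple $st$-separating cycle of the claimed length (so that Lemma~\ref{lemma:st-separating_unique} applies) is routine but somewhat fiddly, and uses the planar embedding.

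For ``$\ge$'': let $\gamma^*\in\Gamma_1$ attain $\MF_v=\Delta_{f^*_v}(\gamma^*)$, and, as in the proof of Lemma~\ref{lemma:multicrossing}, assume $\gamma^*=p^*\circ q^*$ with $q^*$ a sub-path of $\partial f^*_v$ joining $a^*,b^*\in V(f^*_v)$, with $\mathrm{int}(p^*)\cap V(f^*_v)=\emptyset$, and with $\Delta_{f^*_v}(\gamma^*)=w(p^*)$. Let $v^*_c$ be the unique vertex where $\gamma^*$ crosses $\pi$. If $v^*_c$ is interior to $p^*$ then $c\notin I_v$; splitting $p^*=p^-\cdot p^+$ at $v^*_c$, each of $p^-,p^+$ stays on one side of $\pi$ (a second side change would be a second crossing), and one checks that the $D$-images of $a^*$ and $b^*$ are forced into $C_1$ --- if $a^*$ mapped into $C_2$ then the boundary edge of $q^*$ at $a^*$ would lie on the side of $\pi$ opposite to the $p^-$-edge at $a^*$, producing a crossing at $a^*\neq v^*_c$ --- so $w(p^*)=|p^-|_D+|p^+|_D\ge\dist_D(x_c,C_1)+\dist_D(y_c,C_1)\ge d_c(C_1)$. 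If instead $v^*_c\in V(f^*_v)$, so $c\in I_v$ and $\gamma^*$ corresponds to a path of $D$ joining $x_c$ and $y_c$, one first shows that $v^*_c$ is an endpoint of the surviving arc $p^*$ (otherwise $p^*$ would be entirely on one side of $\pi$ and its contracted image would be a cycle on one side of the image of $\pi$, which cannot separate $v^*_s$ from $v^*_t$, contradicting Lemma~\ref{lemma:st-separating_unique}); then, using that $x_c\in C_1\cup C_2$ and $y_c\in C_1\cup C_3$ while the pair $x_c\in C_2$, $y_c\in C_3$ cannot occur (at $v^*_c$ the face lies locally on one side of $\pi$, or weaves through it, and in each case the copy on the relevant side belongs to $V(f^D_v)=C_1$), one reads off that $w(p^*)$ is at least one of $d_c(C_1),d_c(C_2),d_c(C_3),\dist_D(C_1,C_2),\dist_D(C_1,C_3)$. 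Thus $\Delta_{f^*_v}(\gamma^*)$ always dominates one of the five terms, giving $\MF_v\ge$ (right-hand side of~\eqref{eq:vitality_vertices}).

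I expect the last case analysis to be the main obstacle, with two delicate points. First, the embedding bookkeeping must rule out ``mixed'' configurations in which the two $D$-images of the endpoints of the surviving arc lie, one in $q^x_f$ and one in $q^y_f$; this exclusion is exactly why no term $\dist_D(q^x_f,q^y_f)$ appears in~\eqref{eq:vitality_vertices}. Second, the borderline configurations in which both endpoints of the surviving arc fall in the same class $C_\ell$ have to be disposed of: one argues that either they do not occur (the contracted cycle would not separate $v^*_s$ from $v^*_t$) or that they are subsumed by the $d_c(\cdot)$ terms via the crude bound $w(p^*)\ge d_c$. The single-crossing property supplied by Lemma~\ref{lemma:multicrossing} and the precise definitions of $q^x_{f^D_v}$ and $q^y_{f^D_v}$ are what make this bookkeeping go through.
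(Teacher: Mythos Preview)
Your strategy is sound but genuinely different from the paper's, and the route you chose is the harder one.

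The paper does not lift the optimal cycle $\gamma^*$ back to $G^*$ and do a case analysis on the position of the crossing vertex. Instead it works entirely inside $D$: it forms an auxiliary graph $D'$ by adding three dummy vertices $u,v,z$ joined by zero-weight edges to $q^x_f$, $q^y_f$, $V(f)$ respectively, and observes (via Lemma~\ref{lemma:multicrossing} and the contraction description in Subsection~\ref{sub:effects_deleting_edge_or_vertex}) that $\MF_v$ equals the shortest $x_iy_i$ path in $D'$ over all $i\in[k]$. The five terms of~\eqref{eq:vitality_vertices} then correspond exactly to the possible subsets of $\{u,v,z\}$ that such a shortest path visits: $z$ alone gives $\min_i d_i(f)$; $u$ alone gives $\min_i d_i(q^x_f)$; $v$ alone gives $\min_i d_i(q^y_f)$; $\{u,z\}$ gives $\dist_D(f,q^x_f)$; $\{v,z\}$ gives $\dist_D(f,q^y_f)$. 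The only remaining case is a path through both $u$ and $v$, and here the paper invokes a clean topological fact: when $q^x_f$ and $q^y_f$ are both nonempty, the face $f$ separates them in $D$, so any path in $D$ from $q^x_f$ to $q^y_f$ must touch $V(f)$ and hence already contains a subpath bounding $\dist_D(f,q^x_f)$ or $\dist_D(f,q^y_f)$ from above.

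This single separation observation is precisely what dissolves both of your ``delicate points'': it is the reason no $\dist_D(q^x_f,q^y_f)$ term is needed, and it makes the mixed configurations harmless rather than something to be ruled out by crossing bookkeeping. Your crossing-at-$a^*$ argument for forcing the lifted endpoints into $C_1$ is plausible but, as you note, fiddly (edges of $q^*$ that run along $\pi$ need extra care, and your treatment of the case $v^*_c\in V(f^*_v)$ is only sketched). The paper's $D'$ formulation bypasses all of that: once you accept that $\MF_v$ is a shortest $x_iy_i$ distance in $D'$, the case split is on which dummies the path uses, not on where a cycle in $G^*$ crosses $\pi$, and no embedding gymnastics are required beyond the separation of $q^x_f$ from $q^y_f$ by $f$.
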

\begin{proof}
If $f^*_v$ and $\pi$ have no common vertices, then the proof is analogous to the edge case. Thus let us assume that $f^*_v$ and $\pi$ have common vertices. 
Let $D'$ be the graph obtained from $D$ by adding a vertex $u,v,z$ connected with all vertices of $q^x_f$, of $q^y_f$, of $f$, respectively, with zero weight edges; for convenience we assume that $q^x_f$ and $q^y_f$ are both not empty. By Lemma~\ref{lemma:multicrossing} and discussion in Subsection~\ref{sub:effects_deleting_edge_or_vertex}, $\MF_v=w(p)$, where $p$ is a shortest $x_i y_i$ path in $D'$, varying $i\in[k]$. 

Note that after contracting vertices of $f$ into one vertex there exists an $x_iy_i$ path whose length is $dist_D(f,x_i)$, for all $x_i\in q^x_f$. In particular, there exists an  $x_iy_i$ path whose length is $dist_D(f,q^x_f)$, for some $i$ satisfying $x_i\in q^x_f$. The same argument applies for $q^y_f$. This implies that if $vit(v)=0$, then Equation~\eqref{eq:vitality_vertices} is correct. Hence we assume that $vit(v)>0$, so at least one among $u,v$ and $z$ belongs to $p$.

If $u\in p$ and $v,z\not\in p$ (resp.,  $v\in p$ and $u,z\not\in p$), then $w(p)=\min_{i\in[k]}\{d_i(q^x_f)\}$ (resp., $w(p)=\min_{i\in[k]}\{d_i(q^y_f)\}$) . If $z\in p$ and $u,v\not\in p$ then $w(p)=\min_{i\in[k]}\{d_i(f)\}$. We have analyzed all cases in which $p$ contains exactly one vertex among $u,v$ and $z$. To complete the proof, we prove that, for any $i\in[k]$, every $x_iy_i$ path that contains at least two vertices among $u,v$ and $z$ also contains a subpath whose length is at least $\min\{\dist_{D}(f,q^x_f),\dist_{D}(f,q^y_f)\}$.

Let $\ell$ be an $x_iy_i$ path, for some $i\in[k]$. If $u,z\in \ell$, then there exists a subpath $\ell'$ of $\ell$ from a vertex $x_j$ of $q^x_f$ to a vertex $r$ of $f$. If we add to $\ell'$ the two zero weigthed edges $rz$ and $zy_j$ we obtain a $x_jy_j$ path whose length is at least $\dist_{D}(f,q^x_f)$. We can use a symmetric strategy if $v,z\in \ell$. 

It remains only the case in which $u,v\in\ell$. If $q^x_f$ and $q^y_f$ are both non-empty, then $f$ splits $D$ and $D'$ into two or more parts and no part contains vertices of both $q^x_f$ and $q^y_f$ (see Figure~\ref{fig:xy_face}). Thus if $u,v\in \ell$, then $\ell$ passes through at least one vertex of $f$, implying that $\ell$ has a subpath from a vertex of $f$ to a vertex of $q^x_f$, or $q^y_f$. As above, this path can be transformed in a $x_jy_j$ path shorter than $\ell$ whose length is at least $\min\{\dist_{D}(f,q^x_f),\dist_{D}(f,q^y_f)\}$, for some $j\in[k]$.\qed
\end{proof}

\section{Slicing graph $D$ preserving approximated distances}\label{sec:divide_and_conquer}
In this section we explain our divide and conquer strategy. We slice graph $D$ along shortest $x_iy_i$'s paths. If these paths have lengths that differ at most $\delta$, then we have a $\delta$ additive approximation of distances required in Proposition~\ref{prop:vitality_edges} and Proposition~\ref{prop:vitality_vertices} by looking into a single slice instead of the whole graph $D$. This result is stated in Lemma~\ref{lemma:Omega_i}. These slices can share boundary vertices and edges, implying that their dimension might be $O(n^2)$. In Lemma~\ref{lemma:Henzinger_in_Omega_i} we compute an implicit representation of these slices in linear time.

From now on, we mainly work on graph $D$, thus we omit the superscript $D$ unless we refer to $G$ or $G^*$. To work in ${D}$ we need a shortest $x_iy_i$ path and its length, for all $i\in[k]$. In the following theorem we show time complexities for obtaining elements in ${D}$. We say that two paths are \emph{single-touch} if their intersection is still a path.

Given two graphs $A=(V(A),E(A))$ and $B=(V(B),E(B))$ we define $A\cup B=(V(A)\cup V(B),E(A)\cup E(B))$ and $A\cap B=(V(A)\cap V(B),E(A)\cap E(B))$. 

\begin{theorem}[\cite{err_giappo},\cite{gabow-tarjan},\cite{italiano}]\label{th:U_italiano+err_giappo}
If $G$ is a positive edge-weighted planar graph,
\begin{itemize}\itemsep0em
\item we compute $U=\bigcup_{i\in[k]}p_i$ and $w(p_i)$ for all $i\in[k]$, where $p_i$ is a shortest $x_iy_i$ path in ${D}$ and $\{p_i\}_{i\in[k]}$ is a set of pairwise non-crossing single-touch paths, in $O(n\log\log n)$ time---see~\cite{italiano} for computing $U$ and~\cite{err_giappo} for computing $w(p_i)$'s, 
\item  for every $I\subseteq[k]$, we compute $\bigcup_{i\in I}p_i$ in $O(n)$  time---see~\cite{gabow-tarjan} by noting that $U$ is a forest and the paths can be found by using nearest common ancestor queries.
\end{itemize}
\end{theorem}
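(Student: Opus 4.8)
\noindent The statement gathers three ingredients from the literature, so the plan is to assemble them while being careful about the structural hypotheses. For the first item I would start from the construction of $D$: the vertices $x_1,\dots,x_k,y_1,\dots,y_k$ all lie on the infinite face of $D$, and, since they arise from doubling $\pi$, the terminal pairs $\{x_i,y_i\}$ are nested (hence pairwise non-crossing) along that face. By the classical planarity argument of~\cite{itai-shiloach,reif}, for non-crossing terminal pairs one can select shortest $x_iy_i$ paths $p_i$ that are pairwise non-crossing; moreover, with a uniform tie-breaking rule for shortest paths the resulting family is single-touch and its union $U=\bigcup_{i\in[k]}p_i$ is acyclic, i.e., a forest---this is precisely the structure computed in~\cite{italiano} (for $U$) and analysed in~\cite{err_giappo}. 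Thus I would invoke the algorithm of Italiano \emph{et al.}~\cite{italiano}, which computes $U$ in $O(n\log\log n)$ time (its recursion over nested minimum cuts naturally produces this subgraph), together with the result of~\cite{err_giappo}, which labels each $p_i$ with its length $w(p_i)$ within the same bound. (If the edge weights of $D$ are carried along, the lengths are alternatively obtainable in $O(n)$ extra time from the weighted forest $U$: root each tree, precompute root-to-vertex weighted distances by a single traversal, and set $w(p_i)=\dist_U(r_i,x_i)+\dist_U(r_i,y_i)-2\,\dist_U(r_i,m_i)$, where $r_i$ is the root of the tree of $U$ containing $x_i,y_i$ and $m_i$ their nearest common ancestor in $U$.)

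For the second item the leverage is that $U$ is a forest with $O(n)$ vertices and $O(n)$ edges. I would root every tree of $U$ arbitrarily and preprocess it with the data structure of Gabow and Tarjan~\cite{gabow-tarjan}, giving $O(1)$-time nearest common ancestor queries after $O(n)$ preprocessing. Fix $I\subseteq[k]$. Since $p_i\subseteq U$ joins $x_i$ and $y_i$ and $U$ is a forest, $p_i$ is \emph{the} $x_iy_i$ path of $U$, namely the root-to-$x_i$ and root-to-$y_i$ tree paths truncated at $m_i:=\text{nca}(x_i,y_i)$. Traversing each $p_i$ explicitly would cost $\Theta\big(\sum_{i\in I}|p_i|\big)$, which can be $\Theta(n|I|)$, so instead I would run a subtree-sum sweep: maintain an integer array over $V(U)$, add $+1$ to the entries of $x_i$ and $y_i$ and $-2$ to the entry of $m_i$ for every $i\in I$, then in one bottom-up pass replace each entry by the sum over its subtree. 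One checks case by case that the resulting value at a node $u$ equals the number of indices $i\in I$ whose path $p_i$ uses the edge from $u$ to its parent; hence that edge belongs to $\bigcup_{i\in I}p_i$ exactly when the value is positive, and a final scan of $E(U)$ outputs $\bigcup_{i\in I}p_i$. The total cost is $O(n+|I|)=O(n)$.

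The only real difficulty lies in the first item, and it is not in the assembly above: one has to verify that the max-flow machinery of~\cite{italiano} genuinely returns the full union $U$ (and not merely the value of a minimum cut) with the single-touch and forest structure, and that~\cite{err_giappo} delivers all the individual lengths $w(p_i)$ in $O(n\log\log n)$; both facts are imported here from those papers. By contrast the second item is routine once $U$ is known to be a forest: it is just linear-time nearest-common-ancestor preprocessing combined with the standard ``difference on a tree'' batching trick for unions of tree paths.
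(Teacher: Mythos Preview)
The paper does not supply a separate proof of this theorem: the entire justification consists of the citations and the two parenthetical remarks embedded in the statement itself (that $U$ comes from~\cite{italiano}, the lengths $w(p_i)$ from~\cite{err_giappo}, and that the second item follows from~\cite{gabow-tarjan} because $U$ is a forest and the $p_i$ are recoverable via nearest common ancestor queries). Your proposal is a faithful and correct expansion of precisely those hints, so it matches the paper's approach while being strictly more detailed; in particular, your subtree-sum sweep for the second item is a standard way to realise what the paper leaves implicit in the phrase ``the paths can be found by using nearest common ancestor queries.''
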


From now on, for each $i\in[k]$ we fix a shortest $x_iy_i$ path $p_i$, and we assume that $\{p_i\}_{i\in[k]}$ is a set of pairwise single-touch non-crossing shortest paths. Let $U=\bigcup_{i\in[k]}p_i$, see Figure~\ref{fig:Left_Right_and_Omega_i}(\subref{fig:U}). 

Given an $ab$ path $p$ and a $bc$ path $q$, we define $p\circ q$ as the (possibily not simple) $ac$ path obtained by the union of $p$ and $q$. Each $p_i$'s splits ${D}$ into two parts as shown in the following definition and in Figure~\ref{fig:Left_Right_and_Omega_i}(\subref{fig:Left_Right}).

\begin{definition}\label{def:Left_and_Right}
For every $i\in[k]$, we define $\Left_i$ as the subgraph of ${D}$ bounded by the cycle $\pi_y[y_1,y_i]\circ p_i\circ \pi_x[x_i,x_1]\circ l$, where $l$ is the leftmost $x_1y_1$ path in ${D}$. Similarly, we define $\Right_i$ as the subgraph of ${D}$ bounded by the cycle $\pi_y[y_i,y_k]\circ r \circ \pi_x[x_k,x_i]\circ p_i$, where $r$ is the rightmost $x_ky_k$ path in ${D}$.
\end{definition}

\begin{figure}[h]
\captionsetup[subfigure]{justification=centering}
\centering
\begin{subfigure}{4.5cm}
\begin{overpic}[width=4.5cm,percent]{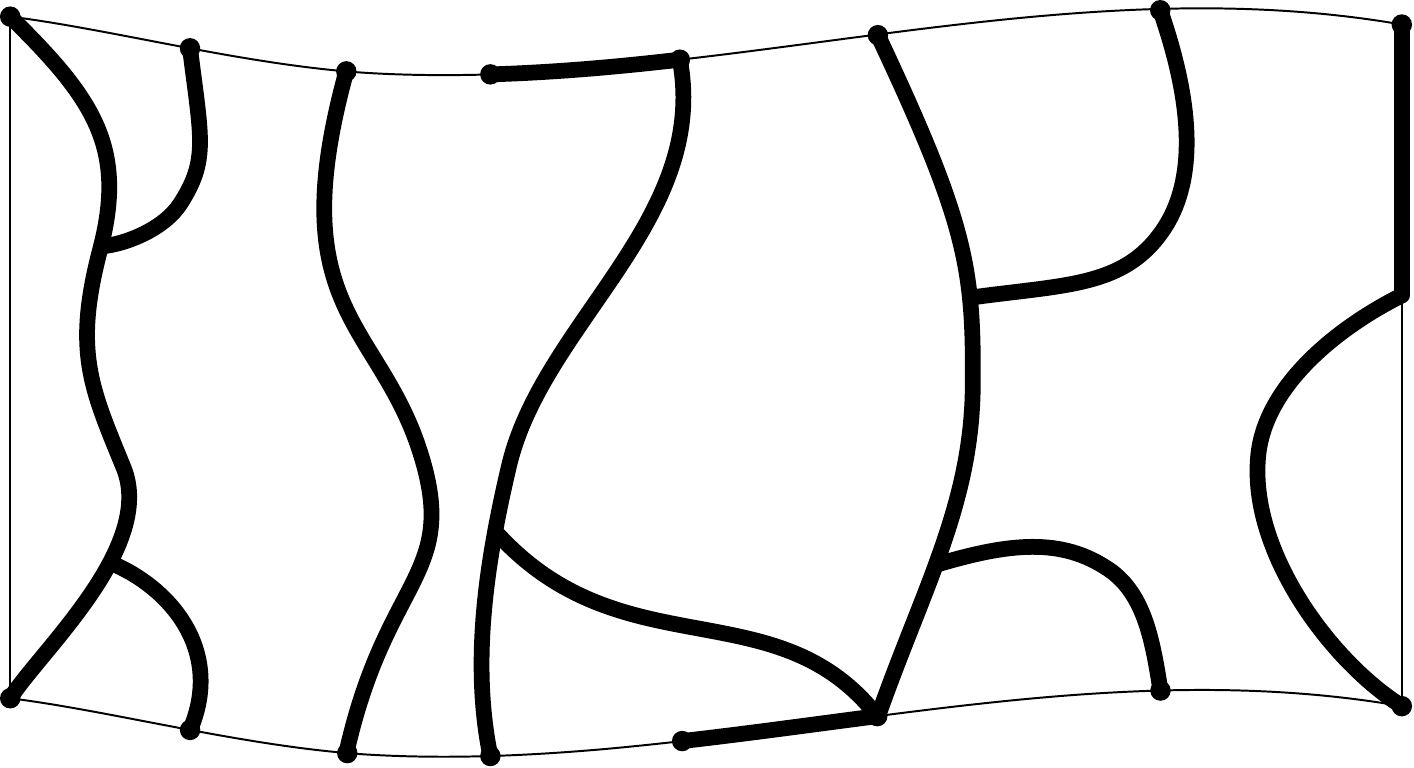}

\put(-1.5,-1.2){$x_1$}
\put(-1,56){$y_1$}
\put(11,-2.9){$x_2$}
\put(13,54.5){$y_2$}
\put(22.5,-4.2){$x_3$}
\put(24,52.5){$y_3$}
\put(33,-4.2){$x_4$}
\put(34,52.2){$y_4$}
\put(47,-3.5){$x_5$}
\put(47,53){$y_5$}
\put(61,-2){$x_6$}
\put(60.5,55){$y_6$}
\put(80.7,0.2){$x_7$}
\put(81,56.5){$y_7$}
\put(97,-1){$x_8$}
\put(97,56){$y_8$}
\end{overpic}
\caption{}\label{fig:U}
\end{subfigure}
\quad
\begin{subfigure}{4.5cm}
\begin{overpic}[width=4.5cm,percent]{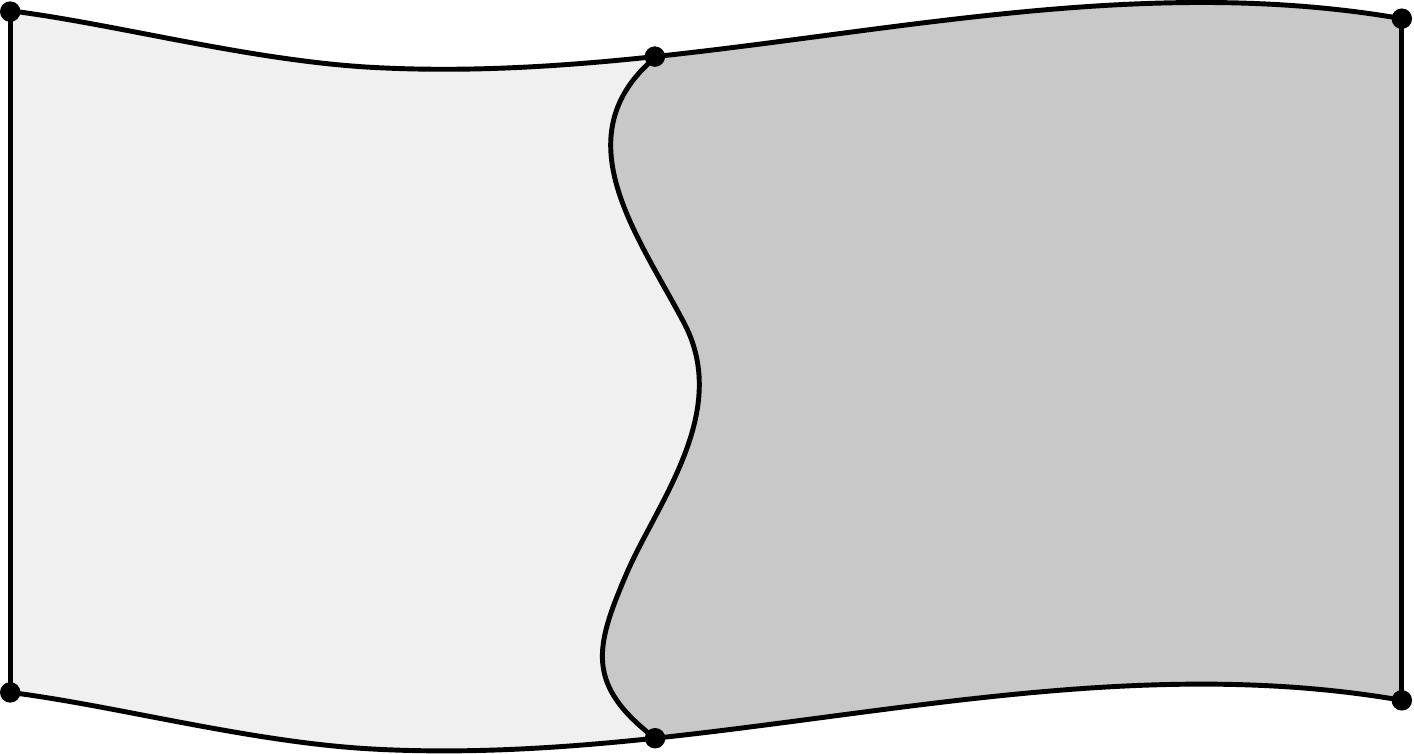}
\put(-1.5,-1.5){$x_1$}
\put(0,55.5){$y_1$}
\put(45,-4){$x_i$}
\put(45,52.5){$y_i$}
\put(97,-2){$x_k$}
\put(97,56){$y_k$}

\put(15,25){$\Left_i$}
\put(66,25){$\Right_i$}
\end{overpic}
\caption{}\label{fig:Left_Right}
\end{subfigure}		
\quad
\begin{subfigure}{4.5cm}
\begin{overpic}[width=4.5cm,percent]{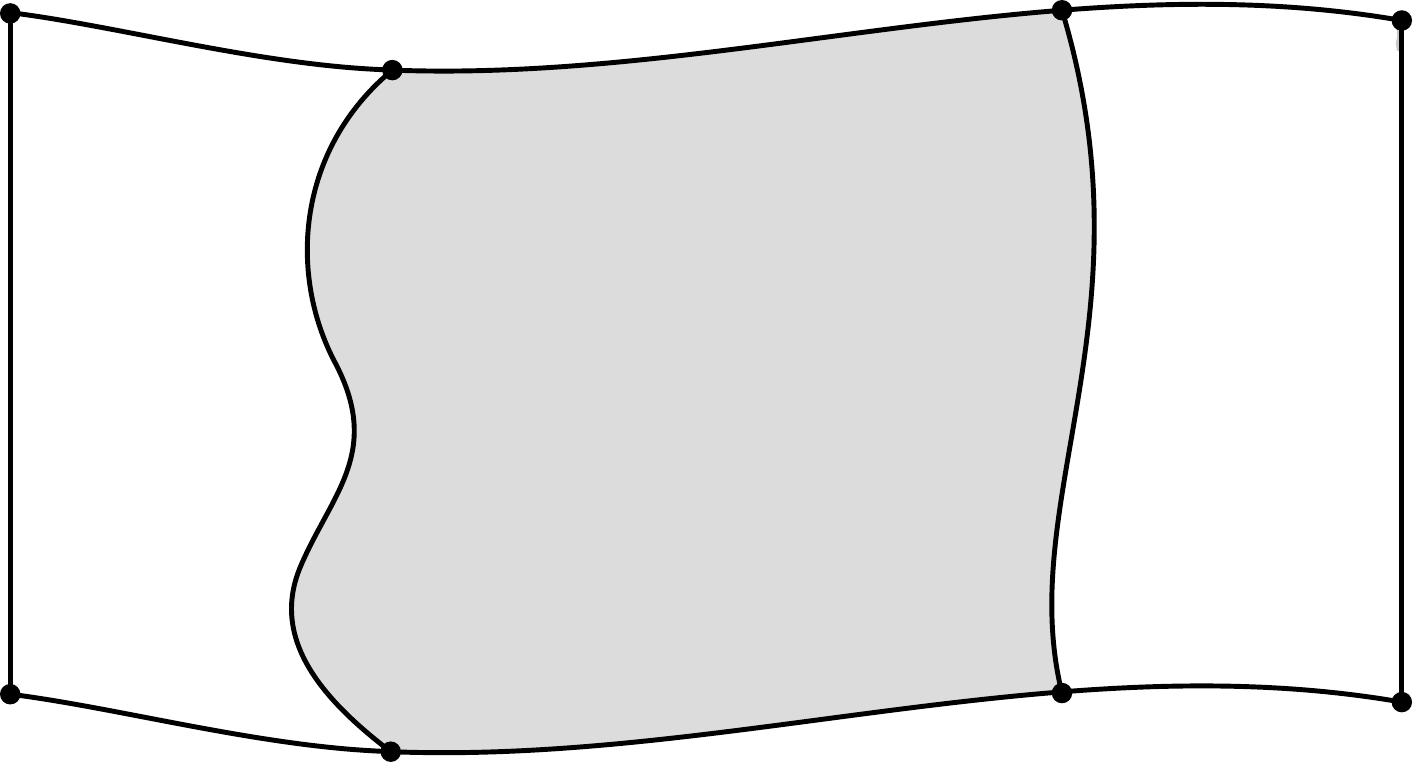}
\put(-1.5,-1){$x_1$}
\put(0,56){$y_1$}

\put(26,-5){$x_i$}
\put(26,52.5){$y_i$}

\put(73,-1){$x_j$}
\put(73,57.3){$y_j$}

\put(97,-1.5){$x_k$}
\put(97,56){$y_k$}

\put(45,25){$\Omega_{i,j}$}
\end{overpic}
\caption{}\label{fig:Omega_i,j}
\end{subfigure}		
\caption{in (\subref{fig:U}) the graph $U$ in bold and in (\subref{fig:Left_Right}) subgraphs $\Left_i$ and $\Right_i$ are highlighted. In (\subref{fig:Omega_i,j}) subgraph $\Omega_{i,j}$, for some $i<j$.}
\label{fig:Left_Right_and_Omega_i}
\end{figure}

Based on Definition~\ref{def:Left_and_Right}, for every $i,j\in[k]$, with $i<j$, we define $\Omega_{i,j}=\Right_i\cap\Left_j$, see Figure~\ref{fig:Left_Right_and_Omega_i}(\subref{fig:Omega_i,j}).  We classify $(x_i,y_i)$'s pairs according to the difference between $d_i$ and $\MF$. Each class contains pairs for which this difference is about $r$ times $\delta$; where $\delta>0$ is an arbitrarily fixed value.

For every $r\in\mathbb{N}$, we define $L_r=(\ell_1^r,\ldots,\ell_{z_r}^r)$ as the ordered list of indices in $[k]$ such that $d_{j}\in[\MF+\delta r,\MF+\delta(r+1))$, for all $j\in L_r$, and $\ell_j^r<\ell_{j+1}^r$ for all $j\in[z_r-1]$. It is possible that $L_r=\emptyset$ for some $r>0$ (it holds that $L_0\neq\emptyset$). If no confusion arises, we omit the superscript $r$; thus we write $\ell_i$ in place of $\ell_i^r$.

The following lemma is the key of our slicing strategy. 
In particular, Lemma~\ref{lemma:Omega_i} can be applied for computing distances required in Proposition~\ref{prop:vitality_edges} and Proposition~\ref{prop:vitality_vertices}, since the vertex set of a face or an edge of $D$ is always contained in a slice. An application is in Figure~\ref{fig:application_alfabeto_greco}.

\begin{lemma}\label{lemma:Omega_i}
Let $r>0$ and let $L_r=(\ell_1,\ell_2,\ldots,\ell_z)$. Let $S$ be a set of vertices of ${D}$ with $S\subseteq \Omega_{\ell_i,\ell_{i+1}}$ for some $i\in[z-1]$. Then
\begin{equation*}
\min_{\ell\in L_r}d_\ell(S)>\min\{{d_{\ell_{i}}(S),d_{\ell_{i+1}}(S)}\}-\delta.
\end{equation*}
Moreover, if $S\subseteq\Left_{\ell_1}$ (resp., $S\subseteq\Right_{\ell_z}$) then $\min_{\ell\in L_r}d_\ell(S)> d_{\ell_1}(S)-\delta$ (resp., $\min_{\ell\in L_r}d_\ell(S)> d_{\ell_z}(S)-\delta$).
\end{lemma}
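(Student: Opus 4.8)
Let $\ell^{*}\in L_r$ realise $\min_{\ell\in L_r}d_\ell(S)$. I would argue by contradiction, assuming $d_{\ell^{*}}(S)\le\min\{d_{\ell_i}(S),d_{\ell_{i+1}}(S)\}-\delta$. As $\delta>0$ this forces $\ell^{*}\notin\{\ell_i,\ell_{i+1}\}$, hence $\ell^{*}<\ell_i$ or $\ell^{*}>\ell_{i+1}$; by the left--right symmetry of the construction I may assume $\ell^{*}<\ell_i$ and then derive $d_{\ell^{*}}(S)>d_{\ell_i}(S)-\delta$. The ``Moreover'' part, and the case $\ell^{*}>\ell_{i+1}$, are the very same argument with $\ell_i$ replaced by $\ell_{i+1}$, $\ell_1$ or $\ell_z$ (and the reflected region). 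Throughout I use that $S\subseteq\Omega_{\ell_i,\ell_{i+1}}\subseteq\Right_{\ell_i}$, that $x_{\ell^{*}},y_{\ell^{*}}$ lie on the $\pi_x$- and $\pi_y$-parts of $\partial\Left_{\ell_i}$ (hence in $\Left_{\ell_i}$), and that $\Left_{\ell_i}$ and $\Right_{\ell_i}$ share exactly the path $p_{\ell_i}$.

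\textbf{The easy case.} If $d_{\ell^{*}}(S)=d_{\ell^{*}}$ there is nothing to do: $\ell^{*},\ell_i\in L_r$ give $d_{\ell^{*}}\ge\MF+\delta r$ and $d_{\ell_i}<\MF+\delta(r+1)$, so $d_{\ell^{*}}(S)=d_{\ell^{*}}>d_{\ell_i}-\delta\ge d_{\ell_i}(S)-\delta$. So assume $d_{\ell^{*}}(S)<d_{\ell^{*}}$, i.e. $d_{\ell^{*}}(S)=\dist_D(x_{\ell^{*}},S)+\dist_D(y_{\ell^{*}},S)$. Then a shortest $x_{\ell^{*}}y_{\ell^{*}}$ path in the graph $D/S$ obtained by contracting $S$ must pass through the contracted vertex, so it has the form $Q_x\circ Q_y^{-1}$ with $Q_x$ a shortest $x_{\ell^{*}}$-to-$S$ path (ending at some $s_x\in S$) and $Q_y$ a shortest $y_{\ell^{*}}$-to-$S$ path (ending at some $s_y\in S$), the two meeting only inside $S$. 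Let $a$ be the first vertex of $Q_x$ on $p_{\ell_i}$ and $b$ the first vertex of $Q_y$ on $p_{\ell_i}$; these exist because $x_{\ell^{*}},y_{\ell^{*}}\in\Left_{\ell_i}$ while $s_x,s_y\in S\subseteq\Right_{\ell_i}$, and $a\ne b$, for otherwise $Q_x\circ Q_y^{-1}$ would be an $x_{\ell^{*}}y_{\ell^{*}}$ walk in $D$ of length $d_{\ell^{*}}(S)<d_{\ell^{*}}$, which is impossible.

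\textbf{The crux.} The subpaths $Q_x[x_{\ell^{*}},a]$ and $Q_y[y_{\ell^{*}},b]$ are vertex-disjoint (disjoint subpaths of the simple path $Q_x\circ Q_y^{-1}$, each stopped before reaching $S$) and lie in the disk $\Left_{\ell_i}$, with the four endpoints $x_{\ell^{*}},a,b,y_{\ell^{*}}$ on its boundary. Reading $\partial\Left_{\ell_i}$ cyclically as $\pi_x[x_1,x_{\ell_i}]$, then $p_{\ell_i}$ from $x_{\ell_i}$ to $y_{\ell_i}$, then $\pi_y[y_{\ell_i},y_1]$, then $l$, these four points occur in the cyclic order $x_{\ell^{*}},\,\{a\ \text{or}\ b\},\,\{b\ \text{or}\ a\},\,y_{\ell^{*}}$; since two vertex-disjoint paths in a disk cannot have interleaved endpoints, the order must be $x_{\ell^{*}},a,b,y_{\ell^{*}}$, i.e.\ $a$ precedes $b$ along $p_{\ell_i}$ oriented from $x_{\ell_i}$ to $y_{\ell_i}$. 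I expect this interleaving/Jordan-curve step, together with the small amount of bookkeeping it needs when $a$ or $b$ itself lies in $S$ or coincides with an endpoint, to be the only delicate point of the proof.

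\textbf{Telescoping and conclusion.} Routing from $x_{\ell_i}$ along $p_{\ell_i}$ to $a$ and then along $Q_x$ to $S$ gives $\dist_D(x_{\ell_i},S)\le w(p_{\ell_i}[x_{\ell_i},a])+\dist_D(x_{\ell^{*}},S)-w(Q_x[x_{\ell^{*}},a])$, and symmetrically with $b,Q_y,y$. Adding these, combining with $d_{\ell_i}(S)\le\dist_D(x_{\ell_i},S)+\dist_D(y_{\ell_i},S)$, with $d_{\ell^{*}}(S)=\dist_D(x_{\ell^{*}},S)+\dist_D(y_{\ell^{*}},S)$, and with the estimate $w(Q_x[x_{\ell^{*}},a])+w(Q_y[y_{\ell^{*}},b])\ge d_{\ell^{*}}-w(p_{\ell_i}[a,b])$ (valid because $Q_x[x_{\ell^{*}},a]\circ p_{\ell_i}[a,b]\circ Q_y[y_{\ell^{*}},b]^{-1}$ is an $x_{\ell^{*}}y_{\ell^{*}}$ walk), and using that $a$ precedes $b$ so that $p_{\ell_i}[x_{\ell_i},a]$, $p_{\ell_i}[a,b]$, $p_{\ell_i}[b,y_{\ell_i}]$ partition $p_{\ell_i}$, one obtains
\[
d_{\ell_i}(S)\le w(p_{\ell_i})+d_{\ell^{*}}(S)-d_{\ell^{*}}=d_{\ell_i}+d_{\ell^{*}}(S)-d_{\ell^{*}}.
\]
Hence $d_{\ell^{*}}(S)\ge d_{\ell_i}(S)-(d_{\ell_i}-d_{\ell^{*}})>d_{\ell_i}(S)-\delta$, because $d_{\ell_i},d_{\ell^{*}}\in[\MF+\delta r,\MF+\delta(r+1))$; this contradicts $d_{\ell^{*}}(S)\le d_{\ell_i}(S)-\delta$ and proves the main inequality. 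The case $\ell^{*}>\ell_{i+1}$ follows by the left--right reflection using $p_{\ell_{i+1}}$ and $S\subseteq\Left_{\ell_{i+1}}$, and the two ``Moreover'' statements are obtained verbatim with $\ell_1$ or $\ell_z$ in place of $\ell_i$.
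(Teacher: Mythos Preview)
Your proof is correct and follows essentially the same approach as the paper: both argue by contradiction, take shortest paths from the ``far'' pair $(x_{\ell^*},y_{\ell^*})$ to $S$, locate their first intersections with the separating shortest path $p_{\ell_i}$, and reroute along $p_{\ell_i}$ to build either a too-short $x_{\ell^*}y_{\ell^*}$ path or (equivalently) the inequality $d_{\ell_i}(S)\le d_{\ell_i}+d_{\ell^*}(S)-d_{\ell^*}$. Your write-up is in fact a bit more careful than the paper's---you explicitly treat the easy case $d_{\ell^*}(S)=d_{\ell^*}$ and justify via a Jordan-curve/non-interleaving argument that $a$ precedes $b$ on $p_{\ell_i}$, whereas the paper leaves that ordering to the figure; the one small slip is calling $Q_x\circ Q_y^{-1}$ ``simple'' (it need not be), but disjointness of $Q_x[x_{\ell^*},a]$ and $Q_y[y_{\ell^*},b]$ follows by the same length argument you used for $a\neq b$.
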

\begin{proof}
We need the following crucial claim.
\begin{enumerate}[label=\alph{CONT})]\stepcounter{CONT}
\item\label{claim:left_right} Let $i<j\in L_r$. Let $L$ be a set of vertices in $\Left_i$ and let $R$ be set of vertices in $\Right_j$. Then $d_i(L)< d_j(L)+\delta$ and $d_j(R)< d_i(R)+\delta$.
\end{enumerate}
\claimbegin{\ref{claim:left_right}} 
we prove that $d_i(L)< d_j(L)+\delta$. By symmetry, it also proves that $d_j(R)< d_i(R)+\delta$. Let us assume by contradiction that $d_i(L)\geq d_j(L)+\delta$.


Let $\alpha$ (resp., $\epsilon$, $\mu$, $\nu$) be a path from $x_i$ (resp., $y_i$, $x_j$, $y_j$) to $z_\alpha$ (resp., $z_\epsilon$, $z_\mu$, $z_\nu$) whose length is $d(x_i,L)$ (resp. $d(y_i,L)$, $d(x_j,L)$, $d(y_j,L)$), see Figure~\ref{fig:alfabeto_greco} on the left. Being $x_j,y_j\in\Right_i$ and $L\subseteq\Left_i$, then $\mu$ and $\nu$ cross $p_i$. Let $v$ be the vertex that appears first in $p_i\cap \mu$ starting from $x_j$ on $\mu$ and let $u$ be the vertex that appears first in $p_i\cap \nu$ starting from $y_j$ on $\nu$. An example of these paths is in Figure~\ref{fig:alfabeto_greco} on the left. Let $\zeta=p_i[y_i,u]$, $\theta=p_i[u,v]$, $\beta=p_i[x_i,v]$, $\kappa=\mu[x_j,v]$, $\iota=\nu[y_j,u]$, $\eta=\nu[u,z_\nu]$ and $\gamma=\mu[v,z_\mu]$, see Figure~\ref{fig:alfabeto_greco} on the right.

Now $\cc(\beta)+\cc(\gamma)\geq \cc(\alpha)$, otherwise $\alpha$ would not be a shortest path from $x_i$ to $L$. Similarly $\cc(\zeta)+\cc(\eta)\geq \cc(\epsilon)$. Moreover, being $\cc(\zeta)+\cc(\theta)+\cc(\beta)=d_i$, then $\cc(\theta)\leq d_i-\cc(\alpha)+\cc(\gamma)-\cc(\epsilon)+\cc(\eta)$. Being $d_i(L)\geq d_j(L)+\delta$, then $\cc(\alpha)+\cc(\epsilon)\geq \cc(\mu)+\cc(\nu)+\delta$, this implies $\cc(\alpha)+\cc(\epsilon)\geq\cc(\kappa)+\cc(\gamma)+\cc(\iota)+\cc(\eta)+\delta$. 
 
It holds that $\cc(\theta)+\cc(\kappa)+\cc(\iota)\leq d_i-\cc(\alpha)+\cc(\gamma)-\cc(\epsilon)+\cc(\eta)+\cc(\alpha)+\cc(\epsilon)-\cc(\gamma)-\cc(\eta)-\delta=d_i-\delta< d_j$ because $i,j\in L_r$ imply $|d_i-d_j|<\delta$. Thus $\kappa\circ \theta\circ \iota$ is a  path from $x_j$ to $y_j$ strictly shorter than $d_j$, absurdum.
\claimend{\ref{claim:left_right}}

Being $S\subseteq\Right_{\ell_j}$ for all $j<i$ and $S\subseteq\Left_{\ell_{j'}}$ for all $j'>i+1$, then the first part of the thesis follows from~\ref{claim:left_right}. The second part follows also from~\ref{claim:left_right} by observing that if $S\subseteq\Left_{\ell_1}$, then $S\subseteq\Left_{\ell_i}$ for all $i\in L_r$.\qed
\end{proof}

\begin{figure}[h]
\captionsetup[subfigure]{justification=centering}
\centering
	\begin{subfigure}{5cm}
\begin{overpic}[width=5cm,percent]{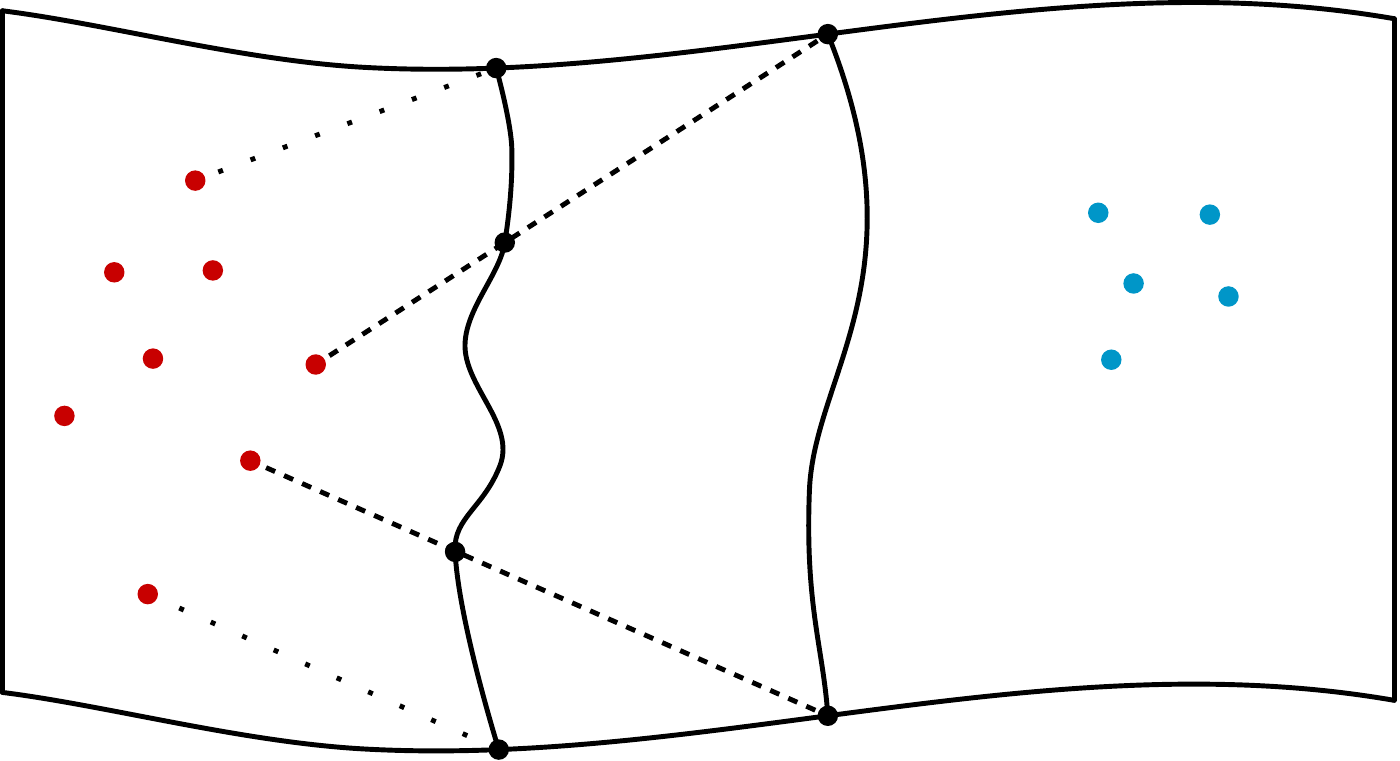}
\put(34,-3.8){$x_i$}
\put(57.5,-1.5){$x_j$}
\put(35,52.5){$y_i$}
\put(58,55){$y_j$}

\put(27.5,25){$p_i$}
\put(61,26.5){$p_j$}

\put(6,39){$L$}
\put(89,38){$R$}

\put(18,4){$\alpha$}
\put(20,45){$\epsilon$}
\put(50,41){$\nu$}
\put(49,10){$\mu$}

\put(34,15){$v$}
\put(37.5,34){$u$}

\end{overpic}
\end{subfigure}		
\qquad
	\begin{subfigure}{5cm}
\begin{overpic}[width=5cm,percent]{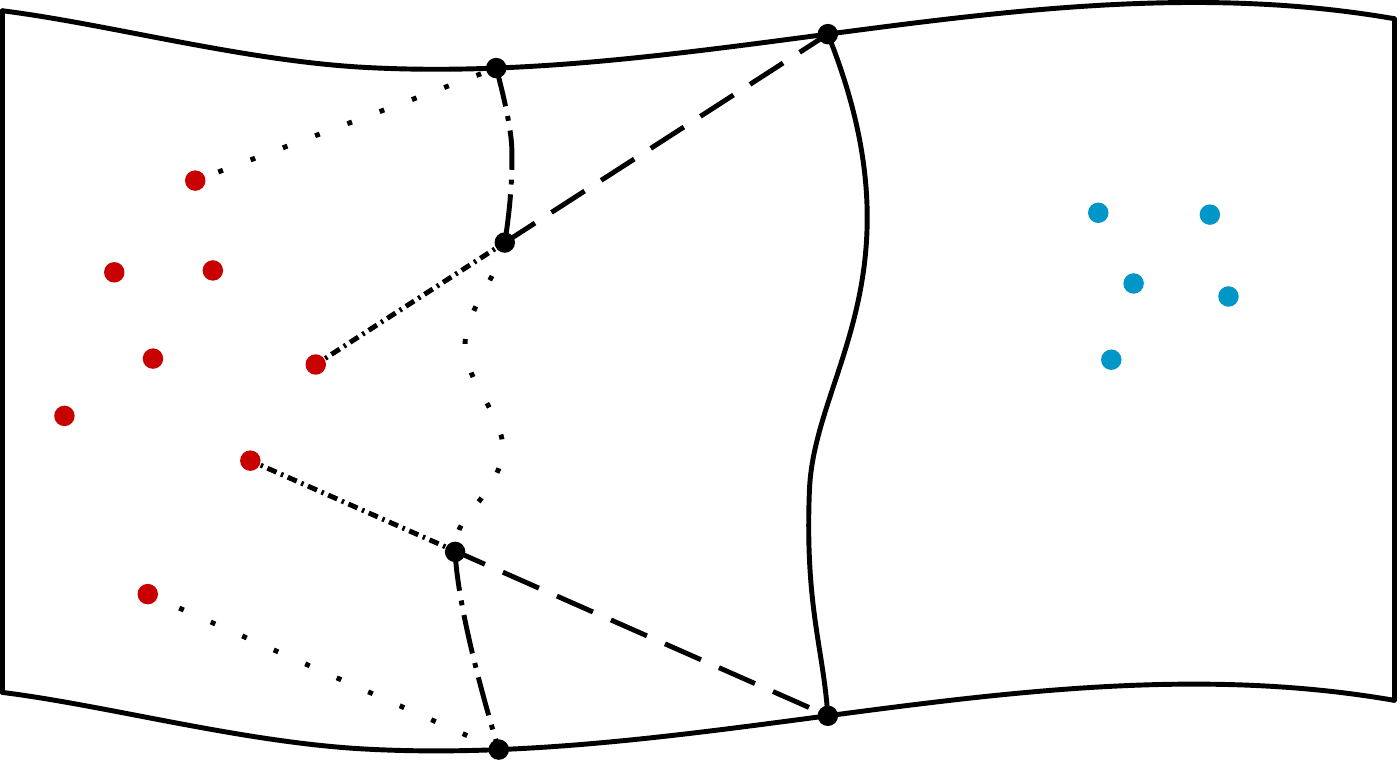}
\put(34,-3.8){$x_i$}
\put(57.5,-1.5){$x_j$}
\put(35,52.5){$y_i$}
\put(58,55){$y_j$}

\put(18,4){$\alpha$}
\put(20,45){$\epsilon$}

\put(36,6){$\beta$}
\put(25,20.3){$\gamma$}
\put(38,43){$\zeta$}
\put(26,35){$\eta$}
\put(36,26){$\theta$}
\put(50,41){$\iota$}
\put(49,9.5){$\kappa$}
\end{overpic}
\end{subfigure}	

\caption{example of paths and subpaths used in the proof of~\ref{claim:left_right}.}
\label{fig:alfabeto_greco}
\end{figure}

\begin{figure}[h]
\captionsetup[subfigure]{justification=centering}
\centering
\begin{subfigure}{5cm}
\begin{overpic}[width=5cm,percent]{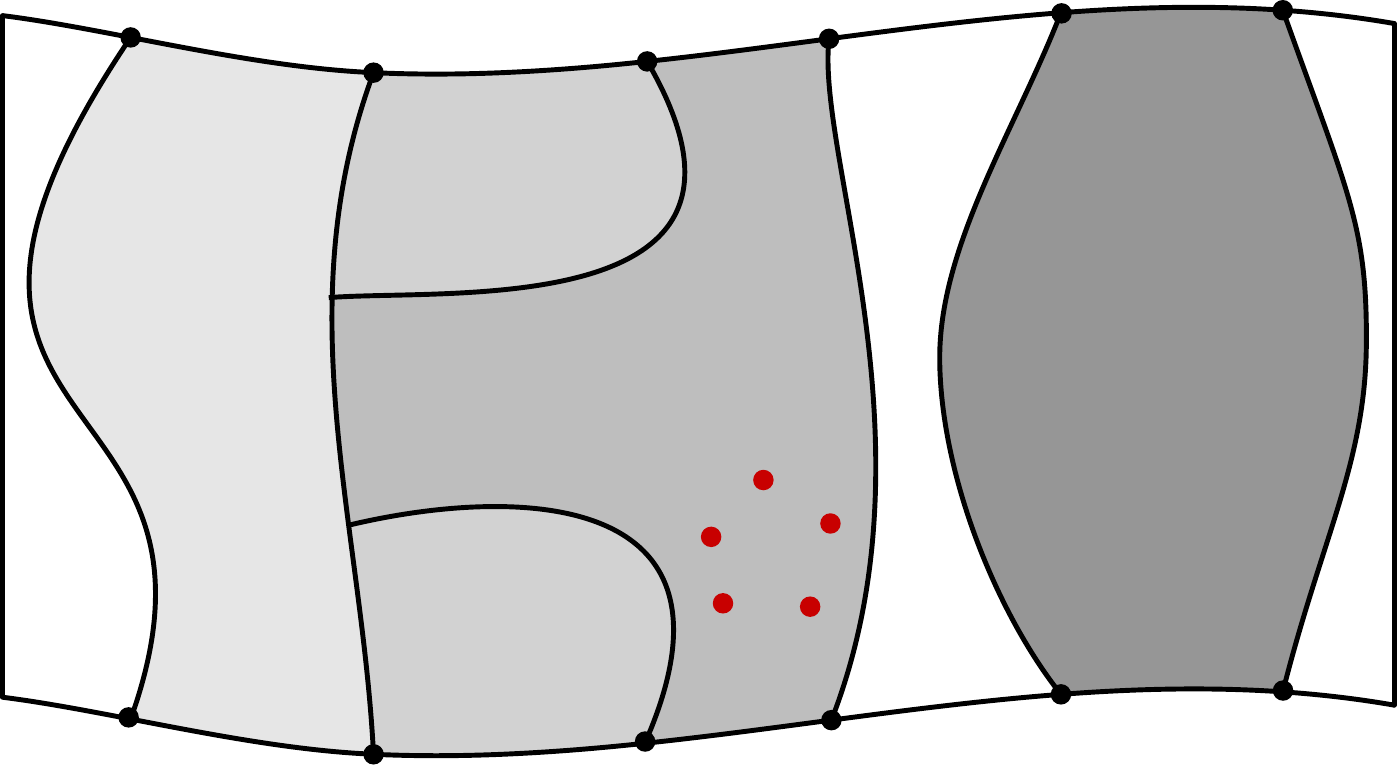}
\put(8,-1.5){$x_{\ell_1}$}
\put(8,55){$y_{\ell_1}$}
\put(25,-3.8){$x_{\ell_2}$}
\put(25,53){$y_{\ell_2}$}
\put(44,-2.9){$x_{\ell_3}$}
\put(43,54){$y_{\ell_3}$}
\put(57,-1.8){$x_{\ell_4}$}
\put(56.5,56){$y_{\ell_4}$}

\put(73,0.5){$x_{\ell_{z-1}}$}
\put(73,57){$y_{\ell_{z-1}}$}
\put(91,1){$x_{\ell_z}$}
\put(91,57){$y_{\ell_z}$}

\put(4,32){$\Omega_{\ell_1,\ell_2}$}
\put(28,7.5){$\Omega_{\ell_2,\ell_3}$}
\put(27,40.5){$\Omega_{\ell_2,\ell_3}$}
\put(34,25.5){$\Omega_{\ell_3,\ell_4}$}
\put(70,26){$\Omega_{\ell_{z-1},\ell_z}$}

\put(52.5,13){$S$}

\end{overpic}
\end{subfigure}
\caption{by Lemma~\ref{lemma:Omega_i}, it holds that $\min_{\ell\in L_r}d_\ell(S)\geq\min\{{d_{\ell_{3}}(S),d_{\ell_{4}}(S)}\}-\delta$.}
\label{fig:application_alfabeto_greco}
\end{figure}

To compute distances in $D$ we have to solve some SSSP instances in some $\Omega_{i,j}$'s subsets. These subsets can share boundary edges, thus the sum of their edges might be $O(n^2)$. We note that, by the single-touch property, if an edge $e$ belongs to $\Omega_{i,j}$ and $\Omega_{j,\ell}$ for some $i<j<\ell\in [k]$, then $e\in p_j$.

To overcome this problem we introduce subsets $\widetilde{\Omega}_{i,j}$ in the following way: for any $i<j\in[k]$, if $p_i\cap p_j$ is a non-empty path $q$, then we define $\widetilde{\Omega}_{i,j}$ as $\Omega_{i,j}$ in which we replace path $q$ by an edge with the same length; note that the single-touch property implies that all vertices in $q$ but its extremal have degree two. Otherwise, we define $\widetilde{\Omega}_{i,j}=\Omega_{i,j}$. Note that distances between vertices in $\widetilde{\Omega}_{i,j}$ are the same as in $\Omega_{i,j}$. It the following lemma we show how to compute some $\widetilde{\Omega}_{i,j}$'s in $O(n)$ time. 

\begin{lemma}\label{lemma:Henzinger_in_Omega_i}
Let $A=(a_1,a_2,\ldots,a_z)$ be any increasing sequence of indices of $[k]$. It holds that \\$\sum_{i\in[z-1]}|E(\widetilde{\Omega}_{a_i,a_{i+1}})|=O(n)$. Moreover, given $U$, we compute $\widetilde{\Omega}_{a_i,a_{i+1}}$, for all $i\in[z-1]$, in $O(n)$ total time.
\end{lemma}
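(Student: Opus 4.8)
The plan is to establish the two assertions separately: first the combinatorial bound $\sum_{i\in[z-1]}|E(\widetilde{\Omega}_{a_i,a_{i+1}})|=O(n)$, then the algorithmic claim that all the $\widetilde{\Omega}_{a_i,a_{i+1}}$ can be extracted from $U$ in $O(n)$ total time. For the size bound, I would start from the observation already made in the text: an edge $e$ of $D$ can belong to two consecutive slices $\Omega_{a_i,a_{i+1}}$ and $\Omega_{a_{i+1},a_{i+2}}$ only if $e$ lies on the common boundary path $p_{a_{i+1}}$, and by the single-touch property $e$ can in fact lie in $p_{a_i}\cap p_{a_{i+1}}$. So the only edges counted with multiplicity greater than one are those on the shortest paths $p_{a_{i+1}}$, and more precisely on the shared subpaths $p_{a_i}\cap p_{a_{i+1}}$. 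The point of replacing each such maximal shared path $q$ by a single edge in the definition of $\widetilde{\Omega}$ is exactly to collapse these repeated stretches: after the replacement, an edge of $\widetilde{\Omega}_{a_i,a_{i+1}}$ that is not one of these new ``shortcut'' edges lies in the interior of the slice or on a boundary arc ($\pi_x$, $\pi_y$, $l$, $r$, or the non-shared part of $p_{a_i}$ or $p_{a_{i+1}}$) and hence can be charged to at most two slices. The interiors are pairwise edge-disjoint, so they contribute $O(n)$ in total; the non-shared boundary portions of the $p$'s, the $\pi_x,\pi_y,l,r$ arcs, each get charged $O(1)$ times, contributing $O(n)$; and the number of shortcut edges is at most $z-1=O(k)=O(n)$. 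Summing gives $O(n)$.

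For the algorithmic part, I would exploit that $U=\bigcup_{i\in[k]}p_i$ is a forest (Theorem~\ref{th:U_italiano+err_giappo}), so each $p_i$ is the tree path between $x_i$ and $y_i$ in $U$ and can be recovered by nearest-common-ancestor queries in $O(1)$ after linear preprocessing. From $U$ and the embedding of $D$ I would: (i) for each $i\in[z-1]$, identify the boundary cycle of $\Omega_{a_i,a_{i+1}}$, namely $\pi_y[y_{a_i},y_{a_{i+1}}]\circ p_{a_{i+1}}\circ \pi_x[x_{a_{i+1}},x_{a_i}]\circ p_{a_i}$ (with the obvious modifications for $a_1$ using $l$ and $a_z$ using $r$); (ii) detect the maximal common subpath $q=p_{a_i}\cap p_{a_{i+1}}$ — since both are $U$-tree paths this is again an NCA-type computation — and mark it to be contracted into a single shortcut edge of the correct total weight; (iii) collect the edges of $D$ strictly inside this boundary cycle. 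Step (iii) is the delicate one and is where I expect the main obstacle to lie: one must walk the faces of $D$ enclosed by the boundary cycle without revisiting a face across multiple slices, and the natural way is a face-traversal of the planar embedding that is seeded along the boundary arcs and stopped at boundary edges. Because the interiors of distinct slices are genuinely disjoint regions of the plane, such a traversal touches each interior edge $O(1)$ times; the only edges touched by several traversals are the boundary edges, and the size bound from the first part guarantees that the total work over all $i$ is $O(n)$.

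The main obstacle, then, is making step (iii) run in linear \emph{total} time rather than $O(n)$ per slice: a priori each boundary cycle could be long, and a naive interior-exploration of each $\widetilde{\Omega}_{a_i,a_{i+1}}$ would cost $\sum_i |E(\widetilde{\Omega}_{a_i,a_{i+1}})|$ only if we never re-scan an already-assigned interior edge — which is true — plus the cost of handling boundary edges, which by the first part is $O(n)$. So the size bound is not merely a corollary of the construction; it is the engine that makes the time bound work, and the two parts of the lemma must be proved in tandem: one first shows that the regions bounded by the $\Omega_{a_i,a_{i+1}}$ partition (the interior of) $D$ up to shared boundary, then uses that partition both to sum the edge counts and to argue that a single embedding-guided sweep, contracting shared $p$-subpaths on the fly, produces every $\widetilde{\Omega}_{a_i,a_{i+1}}$ in $O(n)$ total time.
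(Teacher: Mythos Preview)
Your proposal is correct and follows essentially the same approach as the paper: the size bound is obtained by observing that an edge shared by more than two consecutive $\Omega$'s lies in the intersection $p_{a_i}\cap p_{a_{i+1}}$ and hence is contracted to a shortcut in all but the two extreme $\widetilde{\Omega}$'s, giving at most two charges per edge of $D$ plus $O(z)$ shortcut edges; the construction uses NCA/LCA preprocessing on the forest $U$ (the paper invokes Gabow--Tarjan) to find each $p_{a_i}\cap p_{a_{i+1}}$ in $O(1)$, and then a traversal (you say face-traversal, the paper says BFS excluding the contracted path) that runs in time proportional to $|E(\widetilde{\Omega}_{a_i,a_{i+1}})|$, so the total time is bounded by the size bound just proved. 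The only cosmetic slip is your sentence ``by the single-touch property $e$ can in fact lie in $p_{a_i}\cap p_{a_{i+1}}$'': what you need there is the observation that an edge lying in \emph{three} consecutive slices must lie on both bounding paths of the middle one, hence in their intersection, which is exactly the paper's one-line argument.
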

\begin{proof}
For convenience, we denote by ${\Omega}_i$ the set ${\Omega}_{a_i,a_{i+1}}$, for all $i\in[z-1]$. We note that if $e\in {\Omega}_i\cap {\Omega}_{i+1}$, then $e\in p_{i+1}$. Thus, if $e$ belongs to more than two $\Omega_i$'s, then $e$ belongs to exactly two $\widetilde{\Omega}$'s because it is contracted in all other $\Omega_i$'s by definition of the $\widetilde{\Omega}_i$'s. Thus $\sum_{i\in[z-1]}|E(\widetilde{\Omega}_i)|=O(n)+O(z)=O(n)$ because $z\leq k\leq n$.

To obtain all $\widetilde{\Omega}_i$'s, we compute $U_z=\bigcup_{a\in A}p_a$ in $O(n)$ time by Theorem~\ref{th:U_italiano+err_giappo}. Then we preprocess all trees in $U_z$ in $O(n)$ time by using Gabow and Tarjan's result~\cite{gabow-tarjan} in order to obtain the intersection path $p_{a_i}\cap p_{a_{i+1}}$ via lowest common ancestor queries, and its length in $O(1)$ time with a similar approach. Finally, we build $\widetilde{\Omega}_i$ in $O(|E(\widetilde{\Omega}_i)|)$, for all $i\in[z-1]$, with a BFS visit of $\Omega_i$ that excludes vertices of $p_{a_i}\cap p_{a_{i+1}}$.\qed
\end{proof}

\section{Computing edge vitality}\label{sec:complexity_edge}


Now we can give our main result about edge vitality stated in Theorem~\ref{th:main_real_edge}. We need the following preliminary lemma that is an easy consequence of Lemma~\ref{lemma:Omega_i} and Lemma~\ref{lemma:Henzinger_in_Omega_i}.

\begin{lemma}\label{lemma:cost_L_r}
Let $r\in\mathbb{N}$, given $U$, we compute a value $\alpha_r(e)\in[\min_{i\in L_r}\{d_i(e)\}$, $\min_{i\in L_r}\{d_i(e)\}+\delta)$ for all  $e\in E(D)$ in $O(n)$ time.
\end{lemma}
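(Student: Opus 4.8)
The plan is to combine the geometric slicing estimate of Lemma~\ref{lemma:Omega_i} with the linear-time construction of the contracted regions $\widetilde{\Omega}_{i,j}$ from Lemma~\ref{lemma:Henzinger_in_Omega_i}, and then run a single batched shortest-path computation over all the slices associated with the list $L_r$. First I would recall the setup: fix $r\in\mathbb{N}$ and write $L_r=(\ell_1,\ell_2,\ldots,\ell_z)$ in increasing order (if $L_r=\emptyset$ there is nothing to do, so assume $z\geq 1$). The slices we care about are $\widetilde{\Omega}_{\ell_1,\ell_2},\widetilde{\Omega}_{\ell_2,\ell_3},\ldots,\widetilde{\Omega}_{\ell_{z-1},\ell_z}$, together with the two end pieces $\Left_{\ell_1}$ and $\Right_{\ell_z}$; by Lemma~\ref{lemma:Henzinger_in_Omega_i} (applied with $A=L_r$) these can all be built, as an implicit/contracted representation, in $O(n)$ total time and with $\sum|E(\widetilde{\Omega}_{\ell_i,\ell_{i+1}})|=O(n)$; the two end pieces are subgraphs of $D$ and add only $O(n)$ more.

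Next I would handle the distance computation. For each slice $\widetilde{\Omega}_{\ell_i,\ell_{i+1}}$ I compute, for both of its two ``vertical'' boundary paths $p_{\ell_i}$ and $p_{\ell_{i+1}}$, the quantities $\dist(x_{\ell_i},e)$, $\dist(y_{\ell_i},e)$, $\dist(x_{\ell_{i+1}},e)$, $\dist(y_{\ell_{i+1}},e)$ for every edge $e$ inside that slice — here $\dist$ to an edge means the distance to its nearer endpoint. Since $D$ has nonnegative weights, each such single-source distance array is obtained by one SSSP computation restricted to the slice; because the slice is planar one can use the linear-time planar SSSP algorithm of~\cite{henzinger} (or simply Dijkstra — the point is only that the cost per slice is near-linear in its size). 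Summing over all slices, and using $\sum|E(\widetilde{\Omega}_{\ell_i,\ell_{i+1}})|=O(n)$ together with the two $O(n)$-size end pieces, the total cost is $O(n)$ (up to the at most logarithmic overhead of the chosen SSSP routine, which I would absorb; if one insists on strictly $O(n)$ one invokes the linear-time planar SSSP of~\cite{henzinger}). From these arrays I set, for each $e$, $\alpha_r(e)$ to be the minimum over the (at most two) slices containing $e$ — and over the end pieces when $e$ lies in $\Left_{\ell_1}$ or $\Right_{\ell_z}$ — of $d_{\ell}(e)=\min\{d_\ell,\dist(x_\ell,e)+\dist(y_\ell,e)\}$ for the relevant endpoints $\ell\in\{\ell_i,\ell_{i+1}\}$; the values $d_\ell=\dist_D(x_\ell,y_\ell)$ are already available from $U$ via Theorem~\ref{th:U_italiano+err_giappo}.

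Finally I would verify correctness, i.e.\ that $\alpha_r(e)\in[\min_{i\in L_r}d_i(e),\,\min_{i\in L_r}d_i(e)+\delta)$. The lower bound is immediate: $\alpha_r(e)$ is a minimum of a subcollection of the values $\{d_\ell(e)\}_{\ell\in L_r}$, hence $\alpha_r(e)\geq\min_{i\in L_r}d_i(e)$. For the upper bound: since the $p_i$'s are pairwise non-crossing and $\bigcup_{j}\Left_j\cup\Right_j$ covers $D$ in the obvious nested way, every edge $e$ of $D$ lies in $\Left_{\ell_1}$, or in $\Right_{\ell_z}$, or in some slice $\Omega_{\ell_i,\ell_{i+1}}$ (hence in $\widetilde{\Omega}_{\ell_i,\ell_{i+1}}$, noting the contraction of $p_{\ell_i}\cap p_{\ell_{i+1}}$ does not change distances). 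In the slice case Lemma~\ref{lemma:Omega_i} gives $\min_{\ell\in L_r}d_\ell(e)>\min\{d_{\ell_i}(e),d_{\ell_{i+1}}(e)\}-\delta=\alpha_r(e)-\delta$, i.e.\ $\alpha_r(e)<\min_{\ell\in L_r}d_\ell(e)+\delta$; the two end-piece cases use the corresponding ``Moreover'' clause of Lemma~\ref{lemma:Omega_i}. The main obstacle I anticipate is bookkeeping rather than mathematics: one must be careful that distances computed inside a slice (which is a subgraph of $D$, and in the $\widetilde{\Omega}$ case a contracted subgraph) equal the corresponding distances needed in Lemma~\ref{lemma:Omega_i} — this is exactly why the lemma is stated for $S$ entirely contained in one slice and why $\widetilde{\Omega}_{i,j}$ was defined to preserve internal distances — and that the $O(n)$ size bound of Lemma~\ref{lemma:Henzinger_in_Omega_i} is what keeps the total SSSP cost linear despite slices sharing boundary paths.
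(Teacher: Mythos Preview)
Your approach is essentially the paper's: slice $D$ along the $p_{\ell_j}$'s for $j\in L_r$, invoke Lemma~\ref{lemma:Henzinger_in_Omega_i} to build the $\widetilde{\Omega}_{\ell_i,\ell_{i+1}}$'s in total $O(n)$ size, run linear-time planar SSSP from $x_{\ell_i},y_{\ell_i}$ into the two adjacent slices, and set $\alpha_r(e)=\min\{d_{\ell_i}(e),d_{\ell_{i+1}}(e)\}$ (or $d_{\ell_1}(e)$, $d_{\ell_z}(e)$ on the end pieces), with correctness supplied by Lemma~\ref{lemma:Omega_i}. One small point you gloss over that the paper handles explicitly: for an edge $e\in U_r=\bigcup_{j\in L_r}p_{\ell_j}$ the paper bypasses the slice machinery entirely and sets $\alpha_r(e)=\MF+\delta(r+1)-w(e)$ directly, which works since $e\in p_{\ell_j}$ forces $d_{\ell_j}-w(e)\ge d_{\ell_j}(e)\ge d_{\ell_j}-w(e)$ via the triangle inequality and shortest-subpath property; this sidesteps exactly the bookkeeping worry you raise about edges that get contracted away in $\widetilde{\Omega}_{\ell_i,\ell_{i+1}}$.
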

\begin{proof} 
We compute $U_{r}=\bigcup_{i\in L_r}p_i$ in $O(n)$ time by Theorem~\ref{th:U_italiano+err_giappo}. Let $e\in E(D)$. If $e\in U_r$, we set $\alpha_r(e)=\MF+\delta(r+1)-w(e)$.  If $e\not\in U_r$, then either $e\in\widetilde{\Omega}_{\ell_i,\ell_{i+1}}$ for some $i\in[z_r-1]$, or $e\in\Left_{\ell_1}$, or $e\in\Right_{\ell_{z}}$.

For all $e\subseteq\Left_{\ell_1}$, we set $\alpha_r(e)=d_{\ell_1}(e)$, similarly, for all $e\subseteq\Right_{\ell_z}$, we set $\alpha_r(e)=d_{\ell_z}(e)$. Finally, if $e\in \widetilde{\Omega}_{\ell_i,\ell_{i+1}}$, then we set $\alpha_r(e)=\min\{d_{\ell_i}(e),d_{\ell_{i+1}}(e)\}$. All these choices satisfy the required estimation by Lemma~\ref{lemma:Omega_i}.

To compute required distances, it suffices to solve two SSSP instances with sources $x_i$ and $y_i$ to vertices of $\widetilde{\Omega}_i\cup \widetilde{\Omega}_{i+1}$, for every $i\in L_r$. In total we spend $O(n)$ time by Lemma~\ref{lemma:Henzinger_in_Omega_i} by using algorithm in~\cite{henzinger} for SSSP instances.\qed
\end{proof}

{
\renewcommand{\thetheorem}{\ref{th:main_real_edge}}
\begin{theorem}
Let $G$ be a planar graph with positive edge capacities. Then for any $c,\delta>0$, we compute a value $vit^\delta(e)\in(vit(e)-\delta,vit(e)]$ for all $e\in E(G)$ satisfying $c(e)\leq c$, in $O(\frac{c}{\delta}n+n\log\log n)$  time.
\end{theorem}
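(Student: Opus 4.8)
The plan is to combine Proposition~\ref{prop:vitality_edges} with Lemma~\ref{lemma:cost_L_r}. Since $vit(e)=\MF-\MF_e$ and, for $e^*\notin\pi$, $\MF_e=\min_{i\in[k]}d_i(e^D)$ (and $\MF_e=\min_{i\in[k]}\min\{d_i(e^D_x),d_i(e^D_y)\}$ when $e^*\in\pi$), it suffices to produce, for every edge of $D$, a value in $[\MF_e,\MF_e+\delta)$. Lemma~\ref{lemma:cost_L_r} already gives, for a single class $L_r$ and in $O(n)$ time, an approximation $\alpha_r(g)\in[\min_{i\in L_r}d_i(g),\min_{i\in L_r}d_i(g)+\delta)$ for all $g\in E(D)$ at once, so in principle one just takes the minimum of $\alpha_r(g)$ over all $r$. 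The obstacle is that capacities are unbounded, hence there can be far more than $c/\delta$ nonempty classes and calling Lemma~\ref{lemma:cost_L_r} for each of them is too expensive; the key point is that for an edge $e$ with $c(e)\le c$ only the classes $L_0,\ldots,L_R$ with $R=\lfloor c/\delta\rfloor$ can contain the minimizer.

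Concretely I would: (i) build $D$ from $G$ in $O(n)$ time and compute $U=\bigcup_{i\in[k]}p_i$, the lengths $d_i=w(p_i)$, and $\MF=\min_{i\in[k]}d_i$ in $O(n\log\log n)$ time by Theorem~\ref{th:U_italiano+err_giappo}; (ii) in one left-to-right scan of $[k]$, place every index $i$ with $d_i<\MF+\delta(R+1)$ into the list $L_{\lfloor (d_i-\MF)/\delta\rfloor}$, which builds $L_0,\ldots,L_R$ (sorted) in $O(n)$ time; (iii) for $r=0,\ldots,R$ run Lemma~\ref{lemma:cost_L_r} to obtain $\alpha_r(g)$ for all $g\in E(D)$ in $O(n)$ time, maintaining a running minimum $\beta(g)=\min_{0\le r\le R}\alpha_r(g)$; (iv) for each $e\in G$ with $c(e)\le c$, output $vit^\delta(e)=\MF-\beta(e^D)$ if $e^*\notin\pi$ and $vit^\delta(e)=\MF-\min\{\beta(e^D_x),\beta(e^D_y)\}$ if $e^*\in\pi$. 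Each of $D$, $U$, the buckets, the array $\beta$, and the scratch used inside Lemma~\ref{lemma:cost_L_r} (via Lemma~\ref{lemma:Henzinger_in_Omega_i} and the $O(n)$-space SSSP algorithm of~\cite{henzinger}) has size $O(n)$ and can be reused across the $R+1$ rounds, so the space is $O(n)$, and the time is $O(n\log\log n)+(R+1)\,O(n)=O(\frac{c}{\delta}n+n\log\log n)$.

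It remains to verify $\beta(e^D)\in[\MF_e,\MF_e+\delta)$ (and the analogue over the two copies when $e^*\in\pi$). The lower bound is immediate since $\alpha_r(g)\ge\min_{i\in L_r}d_i(g)$, hence $\beta(e^D)\ge\min_{i\in\bigcup_{r\le R}L_r}d_i(e^D)\ge\min_{i\in[k]}d_i(e^D)=\MF_e$, which gives $vit^\delta(e)\le vit(e)$. For the upper bound I would first prove the elementary inequality $d_i(e^D)\ge d_i-c(e)$ for all $i$: it is trivial when the minimum defining $d_i(e^D)$ equals $d_i$, and otherwise $\dist_{D}(x_i,e^D)+\dist_{D}(y_i,e^D)$ is either a sum of two distances to the same endpoint of $e^D$, hence $\ge d_i$ by the triangle inequality, or equals $\dist_{D}(x_i,a)+\dist_{D}(y_i,b)$ for the two endpoints $a,b$ of $e^D$, in which case appending the edge $e^D$ (of weight $c(e)$) yields an $x_iy_i$ walk of length $d_i(e^D)+c(e)\ge d_i$. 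Consequently, if $i^*$ realizes $\MF_e=\min_i d_i(e^D)$ then $d_{i^*}\le\MF_e+c(e)\le\MF+c$ (using $\MF_e\le\MF$), so $i^*$ lies in a class $L_{r^*}$ with $\MF+\delta r^*\le d_{i^*}\le\MF+c$, i.e.\ $r^*\le c/\delta$ and thus $i^*\in\bigcup_{r\le R}L_r$; therefore $\beta(e^D)\le\alpha_{r^*}(e^D)<\min_{i\in L_{r^*}}d_i(e^D)+\delta\le d_{i^*}(e^D)+\delta=\MF_e+\delta$, giving $vit^\delta(e)>vit(e)-\delta$. The case $e^*\in\pi$ is identical with $e^D_x,e^D_y$ (both of weight $c(e)$) in place of $e^D$. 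The only genuinely non-routine ingredient is this inequality together with the observation that it confines the relevant classes to the $O(c/\delta)$-length prefix $L_0,\ldots,L_R$; everything else is bookkeeping within the stated time and space budgets.
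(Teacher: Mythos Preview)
Your proposal is correct and follows essentially the same route as the paper: compute $U$ via Theorem~\ref{th:U_italiano+err_giappo}, invoke Lemma~\ref{lemma:cost_L_r} for the $O(c/\delta)$ classes $L_0,\ldots,L_R$, take minima, and appeal to Proposition~\ref{prop:vitality_edges}. Your justification of the key restriction to these classes via the inequality $d_i(e^D)\ge d_i-c(e)$ is exactly what the paper compresses into the one-line observation ``if $d_i>\MF+c(e)$, then $d_i(e^D)>\MF$''.
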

\addtocounter{theorem}{-1}
}
\begin{proof}
We compute $U$ in $O(n\log\log n)$ time by Theorem~\ref{th:U_italiano+err_giappo}. If $d_i>\MF+c(e)$, then $d_i(e^D)>\MF$, so we are only interested in computing (approximate) values of $d_i(e^D)$ for all $i\in[k]$ satisfying $d_i<\MF+c$. By Lemma~\ref{lemma:cost_L_r}, for each $r\in\{0,1,\ldots,\lceil\frac{c}{\delta}\rceil\}$, we compute $\alpha_r(e^D)\in[\min_{i\in L_r}d_i(e^D),$ $\min_{i\in L_r}d_i(e^D)+\delta)$, for all $e^D\in E({D})$, in $O(n)$ time. Then, for each $e^D\in E({D})$, we compute $\alpha(e^D)=\min_{r\in\{0,1,\ldots,\frac{c}{\delta}\}}\alpha_r(e^D)$; it holds that $\alpha(e^D)\in[\min_{i\in[k]}\{d_i(e^D)\}$, $\min_{i\in[k]}\{d_i(e^D)\}+\delta)$. Then, by Proposition~\ref{prop:vitality_edges}, for each $e\in E(G)$ satisfying $c(e)\leq c$, we compute a value $vit^\delta(e)\in(vit(e)-\delta,vit(e)]$ in $O(1)$ time.\qed
\end{proof}

\section{Computing vertex vitality}\label{sec:complexity_vertex}

In this section we show how to compute vertex vitality by computing an additive guaranteed approximation of distances required in Proposition~\ref{prop:vitality_vertices}.

Let us denote by $F$ the set of faces of $D$. By Proposition~\ref{prop:vitality_vertices}, for every face $f\in F$ we need $\min_{i\in[k]}\{d_i(f)\}$, this is  discussed in Lemma~\ref{lemma:cost_L_r_face}. For faces $f\in F^x=\{f\in F \, |\,$ $f$ and $\pi_x$ have common vertices$\}$ we need also $\min_{i\in[k]}\{d_i(q^y_f)\}$ and $\dist_{D}(f_v,q^y_{f})$. Similarly, for faces $f\in F^y=\{f\in F \, |\,$ $f$ and $\pi_y$ have common vertices$\}$ we need also $\min_{i\in[k]}\{d_i(q^x_f)\}$ and $\dist_{D}(f_v,q^x_{f})$.

We observe that there is symmetry between $q^x_f$ and $q^y_f$. Thus we restrict some definitions and results to the ``$y$ case'' and then we use the same results for the ``$x$ case''. In this way, we have to show only how to compute $\dist_{D}(f,q^y_f)$ (it is done in Subsection~\ref{subsection:dist_D(f,q^y_f)}) and $\min_{i\in[k]}\{d_i(q^y_{f})\}$ (see Subsection~\ref{subsection:d_i(q^y_f)}) for every face $f\in F$ that intersects $\pi_y$ on vertices.

By using the same procedure of Lemma~\ref{lemma:cost_L_r}, we can also computing $d_i(f)$ for $f\in F$. Thus we can state the following lemma.  

\begin{lemma}\label{lemma:cost_L_r_face}
Let $r\in\mathbb{N}$, given $U$, we compute a value $\alpha_r(f)\in[\min_{i\in L_r}\{d_i(f)\},$ $\min_{i\in L_r}\{d_i(f)\}+\delta)$ for all $f\in  F$ in $O(n)$  time.
\end{lemma}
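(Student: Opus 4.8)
The plan is to mimic the structure of the proof of Lemma~\ref{lemma:cost_L_r}, replacing edges by faces. The key observation is that for every face $f\in F$, the vertex set $V(f)$ is contained in a single slice of the kind used in Lemma~\ref{lemma:Omega_i}: indeed $f$ is a face of $D$, so its boundary is a cycle, and since the paths $\{p_i\}_{i\in[k]}$ are pairwise non-crossing and single-touch, $f$ cannot be ``split'' by any $p_i$; it lies in $\Left_{\ell_1}$, in $\Right_{\ell_z}$, or in $\widetilde{\Omega}_{\ell_i,\ell_{i+1}}$ for some $i\in[z_r-1]$ (after contraction, in $\Omega_{\ell_i,\ell_{i+1}}$). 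This is precisely the hypothesis $S\subseteq\Omega_{\ell_i,\ell_{i+1}}$ of Lemma~\ref{lemma:Omega_i} with $S=V(f)$, as already remarked in the sentence preceding that lemma.

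Concretely, I would first compute $U_r=\bigcup_{i\in L_r}p_i$ in $O(n)$ time using Theorem~\ref{th:U_italiano+err_giappo}, and then compute the slices $\widetilde{\Omega}_{\ell_i,\ell_{i+1}}$ for all $i\in[z_r-1]$ in $O(n)$ total time by Lemma~\ref{lemma:Henzinger_in_Omega_i}. For each face $f$ that lies entirely on $U_r$ (i.e.\ every vertex of $f$ is on some $p_i$, $i\in L_r$, which forces $f$ to be bounded by a sub-cycle of $U_r$ and hence contributes nothing new to the min-cut), I would set $\alpha_r(f)=\MF+\delta(r+1)-w(f)$ exactly as in Lemma~\ref{lemma:cost_L_r}. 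Otherwise $V(f)$ lies in $\Left_{\ell_1}$, $\Right_{\ell_z}$, or in exactly one $\widetilde{\Omega}_{\ell_i,\ell_{i+1}}$; in the three cases I set $\alpha_r(f)$ equal to $d_{\ell_1}(f)$, $d_{\ell_z}(f)$, or $\min\{d_{\ell_i}(f),d_{\ell_{i+1}}(f)\}$ respectively. By Lemma~\ref{lemma:Omega_i} applied to $S=V(f)$, each such choice lies in $[\min_{i\in L_r}\{d_i(f)\},\min_{i\in L_r}\{d_i(f)\}+\delta)$.

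For the running time, the quantities $d_{\ell_i}(f)=\min\{d_{\ell_i},\dist_D(x_{\ell_i},V(f))+\dist_D(y_{\ell_i},V(f))\}$ for all faces $f$ inside $\widetilde{\Omega}_{\ell_i,\ell_{i+1}}$ can be read off from two SSSP computations with sources $x_{\ell_i}$ and $y_{\ell_i}$ restricted to $\widetilde{\Omega}_{\ell_{i-1},\ell_i}\cup\widetilde{\Omega}_{\ell_i,\ell_{i+1}}$ (taking, for each face, the minimum over its incident vertices). Since $\sum_i |E(\widetilde{\Omega}_{\ell_i,\ell_{i+1}})|=O(n)$ by Lemma~\ref{lemma:Henzinger_in_Omega_i}, and each SSSP instance in a planar graph is solved in time linear in its size by the algorithm of~\cite{henzinger}, the total cost is $O(n)$. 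The left/right boundary cases are handled symmetrically with sources $x_{\ell_1},y_{\ell_1}$ and $x_{\ell_z},y_{\ell_z}$.

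The main point requiring care — the only place where faces genuinely differ from edges — is verifying that $V(f)$ really does fit in a single slice, i.e.\ that no $p_i$ crosses the interior of $f$ and thereby partitions $V(f)$ between two non-adjacent $\Omega$'s. This follows from the single-touch, non-crossing property of the $p_i$'s: a path $p_i$ that touched two boundary vertices of $f$ without its intersection with $\partial f$ being a single arc would have to re-enter $f$, contradicting single-touch; hence for any two consecutive slices $\widetilde{\Omega}_{\ell_{i-1},\ell_i},\widetilde{\Omega}_{\ell_i,\ell_{i+1}}$, a face meeting both must have all its vertices on $p_{\ell_i}$, the case already absorbed into the ``$f\subseteq U_r$'' branch. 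Everything else is the verbatim analogue of Lemma~\ref{lemma:cost_L_r}.
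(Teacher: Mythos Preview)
Your approach is exactly the paper's: its entire proof of this lemma is the one-line remark ``By using the same procedure of Lemma~\ref{lemma:cost_L_r}'', and you have faithfully expanded that procedure.

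One small correction, though. Your special branch ``$V(f)\subseteq U_r$'' with $\alpha_r(f)=\MF+\delta(r+1)-w(f)$ should simply be dropped. For an edge $e\in p_j$ one has $d_j(e)\le d_j-w(e)$ because $p_j$ actually traverses $e$; there is no analogous inequality $d_j(f)\le d_j-w(f)$ for a face, so that formula is not justified. Fortunately the branch is also unnecessary: since each $p_{\ell_i}$ is a path in $D$ it cannot meet the interior of any face, hence the interior of every face lies in exactly one of $\Left_{\ell_1},\Omega_{\ell_1,\ell_2},\ldots,\Right_{\ell_z}$, and $V(f)$ is contained in that slice. This is purely a planarity argument --- the single-touch property of the $p_i$'s among themselves plays no role here --- and it covers all faces, so your main branch via Lemma~\ref{lemma:Omega_i} already handles everything.
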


\subsection{Computing $\dist_D(f,q^y_f)$}\label{subsection:dist_D(f,q^y_f)}

The unique our result of this subsection is stated in Lemma~\ref{lemma:centro_faccia_nlogn}. To obtain it, we use the following result that easily derives from Klein's algorithm about the multiple source shortest path problem~\cite{klein2005}.

\begin{theorem}[\cite{klein2005}]\label{th:klein}
Given an $n$ vertices undirected planar graph $G$ with nonnegative edge-lengths, given $r$ pairs $\{(a_i,b_i)\}_{i\in[r]}$ where the $b_i$'s are on the boundary of the infinite face and the $a_i$'s are anywhere, it is possible to compute $dist_G(a_i,b_i)$, for all $i\in[r]$, in $O(r\log n+n\log n)$ time and $O(n)$ space.
\end{theorem}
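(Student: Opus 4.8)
The plan is to obtain this as a black-box application of Klein's multiple-source shortest path (MSSP) algorithm \cite{klein2005}, plus one extra idea to bring the space down from the $O(n\log n)$ of the standard persistent structure to $O(n)$: answer the $r$ queries \emph{offline}. Since $G$ is undirected, $\dist_G(a_i,b_i)=\dist_G(b_i,a_i)$, so it suffices to compute, for every $i$, the distance from the boundary vertex $b_i$ to the arbitrary vertex $a_i$. I may also assume $G$ is connected (otherwise handle each component separately) and, after triangulating the internal faces with edges of length exceeding $\sum_{e\in E(G)}w(e)$ — a change that affects neither planarity, nor the value of any finite distance, nor the asymptotic size of $G$ — I may work with a triangulation whose infinite face is still bounded by the same closed walk $C$, which contains all the $b_i$'s.

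Recall what Klein's algorithm does when the source is restricted to one distinguished face $f_\infty$: it first builds a shortest-path tree $T$ rooted at some vertex of $C$ (an $O(n\log n)$-time single-source computation, well within budget), then moves the source along $C$ one vertex at a time, at each step performing a sequence of pivots (one edge leaves $T$, one enters), realized by link--cut tree operations on $T$ and on the interdigitating cotree in the dual. The two facts I will use are: (i) at any moment, $\dist(\text{current source},v)$ for an \emph{arbitrary} vertex $v$ of $G$ is returned by an $O(\log n)$-time path query in the link--cut tree representing $T$; and (ii) the total cost of all pivots performed while the source makes one full circuit of $C$ is $O(n\log n)$.

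With this in hand I would proceed as follows. Bucket the pairs $(a_i,b_i)$ by the position of $b_i$ along $C$; since these positions are integers in a range of size $|C|=O(n)$, bucketing costs $O(n+r)$ time. Then run Klein's sweep once around $C$, and each time the source reaches a vertex equal to some $b_i$, answer query $i$ by the $O(\log n)$-time distance query of (i). The sweep costs $O(n\log n)$ by (ii), the $r$ queries cost $O(r\log n)$, and the bucketing $O(n+r)$, for a total of $O(r\log n+n\log n)$. For the space bound: since every query is answered against the \emph{current} tree and then forgotten, the persistent version of the structure is never needed; a single ephemeral link--cut tree on $T$ together with its dual cotree has size $O(n)$, so the working space is $O(n)$.

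The step that I expect to require the most care is the interplay between the offline reorganization and Klein's amortized pivot analysis: one must check that a single monotone traversal of $C$ simultaneously (a) places the source at each $b_i$ at the moment query $i$ is asked and (b) keeps the total pivot cost at $O(n\log n)$ — and this is fine precisely because the distance queries are read-only and leave $T$ untouched, hence do not disturb the potential-function argument behind (ii). The remaining technicalities — that $C$ may visit a cut vertex several times, and that triangulating may create parallel edges in the dual — are handled exactly as in \cite{klein2005}: the boundary is treated as a closed walk, and a vertex with several occurrences on $C$ may serve as the source from any of them, giving the same distances.
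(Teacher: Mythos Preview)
The paper does not give its own proof of this theorem; it is stated as a black-box consequence of Klein's multiple-source shortest path algorithm~\cite{klein2005}, with only the remark that it ``easily derives'' from that work. Your derivation is correct and is exactly the standard offline variant of Klein's sweep: bucketing the queries by boundary position and answering them against the current ephemeral link--cut tree during a single traversal of the outer face is precisely what brings the space from $O(n\log n)$ down to $O(n)$ while keeping the $O(r\log n+n\log n)$ time bound.
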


\begin{lemma}\label{lemma:centro_faccia_nlogn}
We compute $\dist_{D}(f,q^y_f)$, for all $f\in F^y$, in $O(n\log n)$  time.
\end{lemma}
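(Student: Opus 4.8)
The goal is to compute $\dist_D(f,q^y_f)$ for every face $f \in F^y$ within the budget $O(n\log n)$. The natural idea is to reduce all these distance computations to a single invocation of Klein's multiple-source shortest path machinery (Theorem~\ref{th:klein}). Recall that for $f\in F^y$, the set $q^y_f = (\bigcup_{i\in I_f}\{y_i\})\setminus V(f^D_v)$ consists of vertices $y_i$ lying on $\pi_y$, and $\pi_y$ is part of the boundary of the infinite face of $D$. So the "$b$-side" of each query is already on the infinite face, which is exactly the hypothesis Theorem~\ref{th:klein} requires. The "$a$-side" is the face $f$, which is anywhere in $D$ — also fine for Theorem~\ref{th:klein}. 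The only mismatch is that $q^y_f$ is a \emph{set} of boundary vertices rather than a single vertex, and $f$ is a \emph{face} rather than a single vertex.

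First I would handle the "face" side: for each $f\in F^y$ introduce (conceptually, not physically, to keep the graph of size $O(n)$) a single representative. Concretely, since $\dist_D(f,q^y_f) = \min_{u\in V(f),\, y\in q^y_f}\dist_D(u,y)$, and the vertices of $f=f^D_v$ form the boundary of a single face, I can contract $V(f^D_v)$ to one vertex $z_f$ — but doing this simultaneously for all faces would destroy planarity or blow up the graph. Instead I would process faces in a way that only needs $O(1)$ added vertices at a time, OR — cleaner — observe that each vertex of $D$ lies on $O(1)$ faces on average (Euler), so I can afford to issue, for each face $f$, one query per boundary vertex: a query $(\,u,\, \text{nearest element of } q^y_f\,)$ for each $u\in V(f)$. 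Summed over all faces this is $O(n)$ queries. The remaining issue is that the target is still a \emph{set} $q^y_f$, not a point.

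To collapse the target set, I would use the standard trick of reducing "distance to a set of boundary vertices" to "distance to a single boundary vertex" by adding, in the infinite face, a new apex vertex for each relevant index block, joined by zero-length edges to the consecutive boundary vertices $y_i$ that form $q^y_f$. Since $q^y_f$ is always a set of \emph{consecutive} vertices of $\pi_y$ minus those on $f$ (by the structure in Figure~\ref{fig:xy_face}, $q^y_f$ is an interval of $\pi_y$), and each $y_i$ belongs to at most one such interval per face, the total number of apex vertices and zero-length edges added is $O(n)$; moreover they can all be placed in the infinite face while preserving planarity because the intervals are nested/disjoint along $\pi_y$. The apex vertices themselves then sit on the boundary of the (new) infinite face, so Theorem~\ref{th:klein} applies with $r = O(n)$ pairs $(u, \text{apex}_f)$, giving total time $O(r\log n + n\log n) = O(n\log n)$ and $O(n)$ space. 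Taking, for each $f$, the minimum over its $O(|V(f)|)$ returned values yields $\dist_D(f,q^y_f)$, and $\sum_f |V(f)| = O(n)$.

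The main obstacle I anticipate is the planarity bookkeeping for the apex vertices: one must check that the zero-length "star" edges connecting an apex to the consecutive $y_i$'s in $q^y_f$ can be routed in the infinite face without crossings, simultaneously for all faces $f\in F^y$. This works because, for distinct faces $f,f'\in F^y$ sharing no structure, the index intervals $I_f\setminus(\text{indices on }f)$ and $I_{f'}\setminus(\text{indices on }f')$ along $\pi_y$ are either disjoint or nested, so the stars can be drawn as non-crossing arcs in the outer face (a standard "balanced parenthesis" routing argument). Once that is established, the reduction to Theorem~\ref{th:klein} is immediate and the time bound follows.
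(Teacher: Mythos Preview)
Your overall plan---reducing everything to Klein's MSSP (Theorem~\ref{th:klein})---is the right one, but you take a detour around the simple construction and land on a genuine gap.

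The worry that made you abandon the face-side reduction is unfounded. You say that handling all faces simultaneously ``would destroy planarity or blow up the graph,'' but that is only true for \emph{contraction}. What is actually needed is to \emph{add} one new vertex $u_f$ inside each face $f\in F^y$ and join it by (large-length) edges to every vertex on the boundary of $f$. Placing a vertex inside a face and fanning out to its boundary is always planar, all faces can be treated at once, and the total number of new edges is $\sum_{f}|V(f)|=O(n)$. This is exactly the paper's construction. With the $u_f$'s in hand one simply issues the queries $(u_f,y_i)$ for every $y_i\in q^y_f$; the $y_i$'s already sit on the infinite face, so the boundary side needs no modification at all. The number of such queries is $\sum_{i\in[k]}|\{f:x_i\in V(f)\}|\le\sum_i\deg_D(x_i)=O(n)$, and Theorem~\ref{th:klein} yields the $O(n\log n)$ bound directly.

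Your alternative---an apex vertex in the outer face for each $q^y_f$---does not go through as written. The planarity of the simultaneous star embedding rests on two claims: that each $q^y_f$ is an interval of $\pi_y$, and that these intervals form a laminar family. Neither is established. For the first, $\pi$ is merely a shortest path in $G^*$ and can meet the boundary cycle of $f^*_v$ in several disjoint arcs, so $I_v$ (hence $q^y_f$) need not be contiguous. For the second, even among intervals laminarity fails: two faces $f,f'$ sharing $x_i$ on $\pi_x$ but extending in opposite directions (say $x_{i-1},x_i\in V(f)$ and $x_i,x_{i+1}\in V(f')$) give $y_i\in q^y_f\cap q^y_{f'}$ with neither set containing the other. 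Without laminarity you cannot route all apex stars in the outer face while keeping every apex on the infinite boundary, and the hypothesis of Theorem~\ref{th:klein} is lost. The fix is simply to revert to the face-vertex trick you dismissed.
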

\begin{proof}
For every $i\in[k]$ let $F_i\subseteq F^y$ be the set of faces such that $x_i\in f$, for all $f\in F_i$. We observe that if $|F_i|=m$, then $deg_D(x_i)\geq m+1$, where $deg_D(x_i)$ is the degree of $x_i$ in $D$.

Let $D'$ be the graph obtained by adding a new vertex  $u_f$ for each face $f\in F^y$ and connecting  $u_f$ to all vertices of $f$ by an edge of length $L$, where $L=\sum_{e\in E({D})}w(e)$ (see Figure~\ref{fig:face_nlogn} for an example of construction of graph $D'$). Thus $\dist_{D}(y_i,f)=\dist_{D'}(y_i,u_f)-L$.

We compute $d_{D'}(y_i,u_f)$, for $i\in[k]$ and  $f\in F_i$, by using the result stated in Theorem~\ref{th:klein}. Being $|V(D')|=O(n)$, we spend $O\big(\log n \sum_{i\in[k]}|F_i|+n\log n\big)\leq O\big(\log n\sum_{i\in[k]}(deg_{{D}}(x_i)-1)+O(n\log n)\big)=O(n\log n+n\log n)=O(n\log n)$ time. Finally, for all $f\in F^y$
$$\dist_{D}(f,q^y_f)=\min_{\{i\in[k]\,|\,x_i\in f\}}\dist_{D}(y_i,f)=\min_{\{i\in[k]\,|\,x_i\in f\}}\dist_{D'}(y_i,u_f)-L.$$ 
Thus we obtain what we need adding no more time than $\sum_{\{i\in[k]\,|\,x_i\in f\}}O(1)\leq O(\sum_{f\in F^y}|V(f)|)\leq O(\sum_{f\in  F}|V(f)|)=O(n)$.\qed
\end{proof}

\begin{figure}[h]
\captionsetup[subfigure]{justification=centering}
\centering
\begin{subfigure}{5cm}
\begin{overpic}[width=5cm,percent]{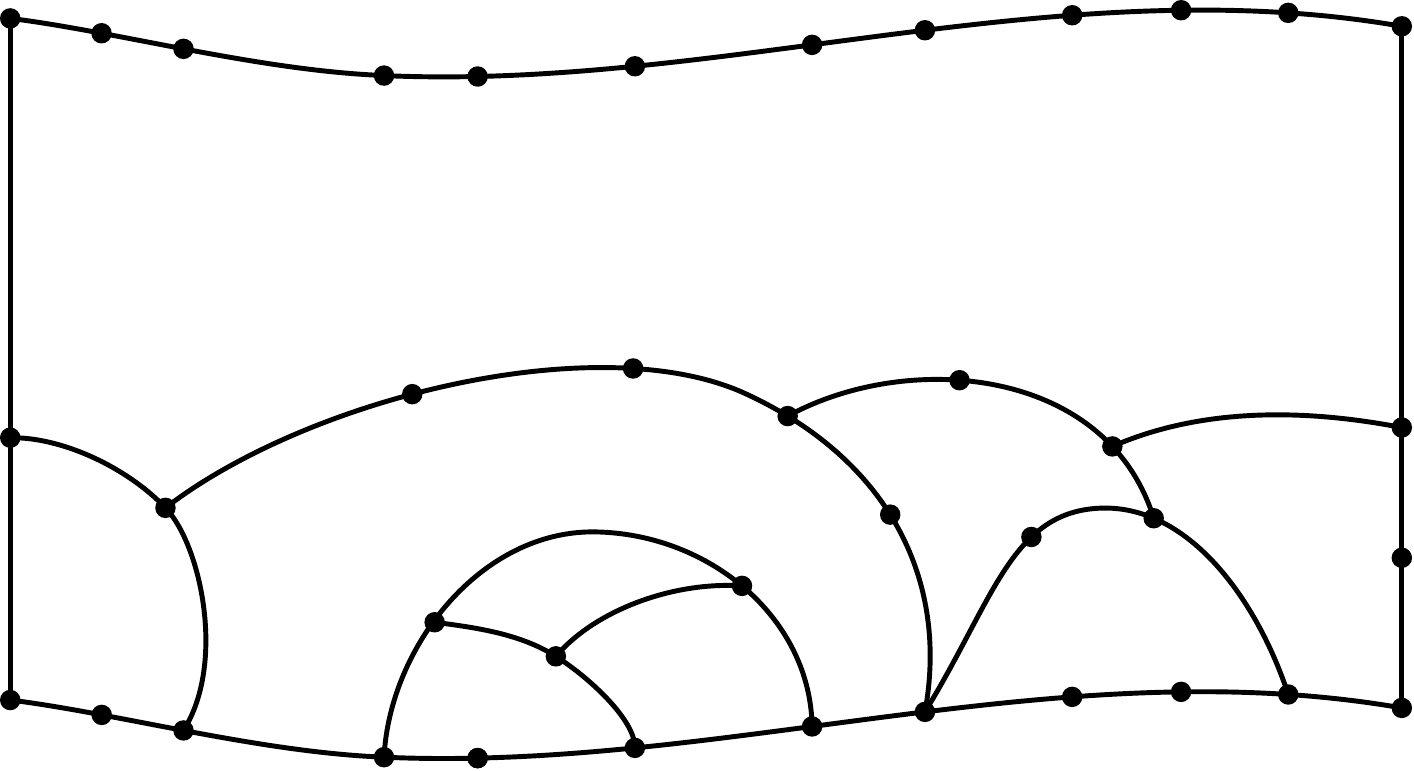}
\put(5,11){$a$}
\put(39,20){$b$}
\put(33,3){$c$}
\put(48,5){$d$}
\put(67,20){$e$}
\put(78,9){$f$}
\put(91,16){$g$}
\put(46,38){$D$}

\end{overpic}
\end{subfigure}		
\qquad
\begin{subfigure}{5cm}
\begin{overpic}[width=5cm,percent]{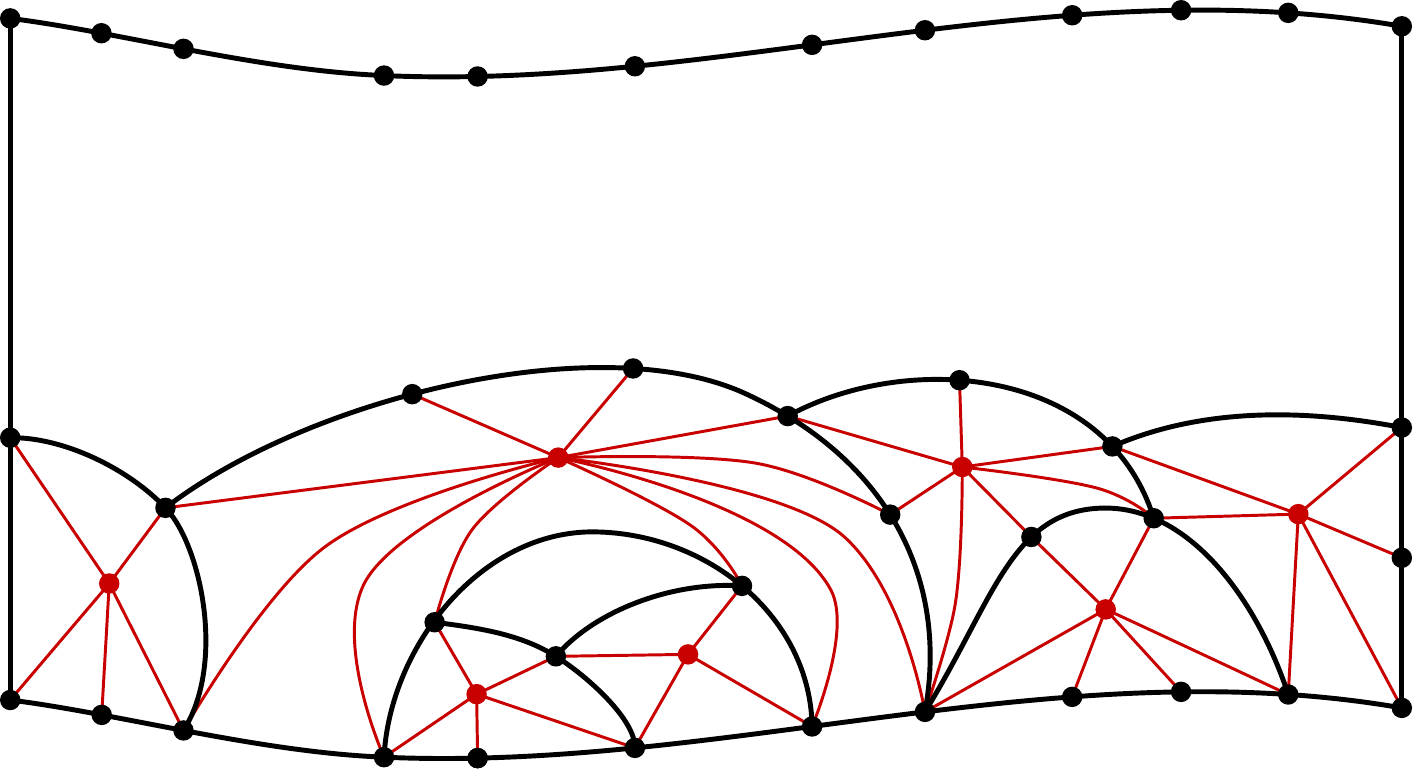}
\put(46,38){$D'$}
\end{overpic}
\end{subfigure}		
 \caption{graph $D$, faces in $F^y$ and graph $D'$ used in the proof of Lemma~\ref{lemma:centro_faccia_nlogn}.}
\label{fig:face_nlogn}
\end{figure}

\subsection{Computing $d_i(q^y_f)$}\label{subsection:d_i(q^y_f)}

We note that for computing the $d_i(q^y_f)$'s we can not directly use  Lemma~\ref{lemma:Omega_i} as we have done for the $d_i(e)$'s and the $d_i(f)$'s.  Indeed, it is possible that vertices in $q^y_f$ are not contained in any slice $\Omega_{i,j}$, with $i,j$ consecutive indices in $L_r$. To overcome this, we have to introduce a partial order on faces of $D$.

For all $f\in F^y$, we define $f^-$ and $f^+$ as the minimum and maximum indices in $[k]$, respectively, such that $x_{f^-},x_{f^+}\in V(f)$. Now we introduce the concept of \emph{maximal face}. Let $f\in F^y$ and let $p_f$ and $q_f$ be the two subpaths of the border cycle of $f$ from $x_{f^-}$ to $x_{f^+}$. We say that $g\prec f$ if $g$ is contained in the region $R$ bounded by $\pi_x[x_{f^-},x_{f^+}]\circ p_f$, this implies that $g$ is also contained in the region $R'$ bounded by $\pi_x[x_{f^-},x_{f^+}]\circ q_f$, thus the definition does not depend on the choice of $p_f$ and $q_f$. Finally, we say that $f$ is \emph{maximal} if it does not exist any face $g\in F^y$ satisfying $f\prec g$, and we define $F_{max}=\{f\in F^y \,|\, f \text{ is maximal}\}$, see the left part of  Figure~\ref{fig:maximal_faces}. We find $F_{max}$ in $O(n)$  time.

Given $r\in\mathbb{N}$ and $f\in F^y$, we define $f^+_r$ as the smallest index in $L_r$ such that $f^+< f^+_r$ (if $f^+>\ell^r_{z_r}$, then we define $f^+_r=\ell^r_{z_r}$). Similarly, we define $f^-_r$ as the largest index in $L_r$ such that $f^-_r> f^-$ (if $f^-<\ell^r_1$, then we define $f^-_r=\ell^r_1$), see the right part of Figure~\ref{fig:maximal_faces}.

\begin{figure}[h]
\captionsetup[subfigure]{justification=centering}
\centering
	\begin{subfigure}{5cm}
\begin{overpic}[width=5cm,percent]{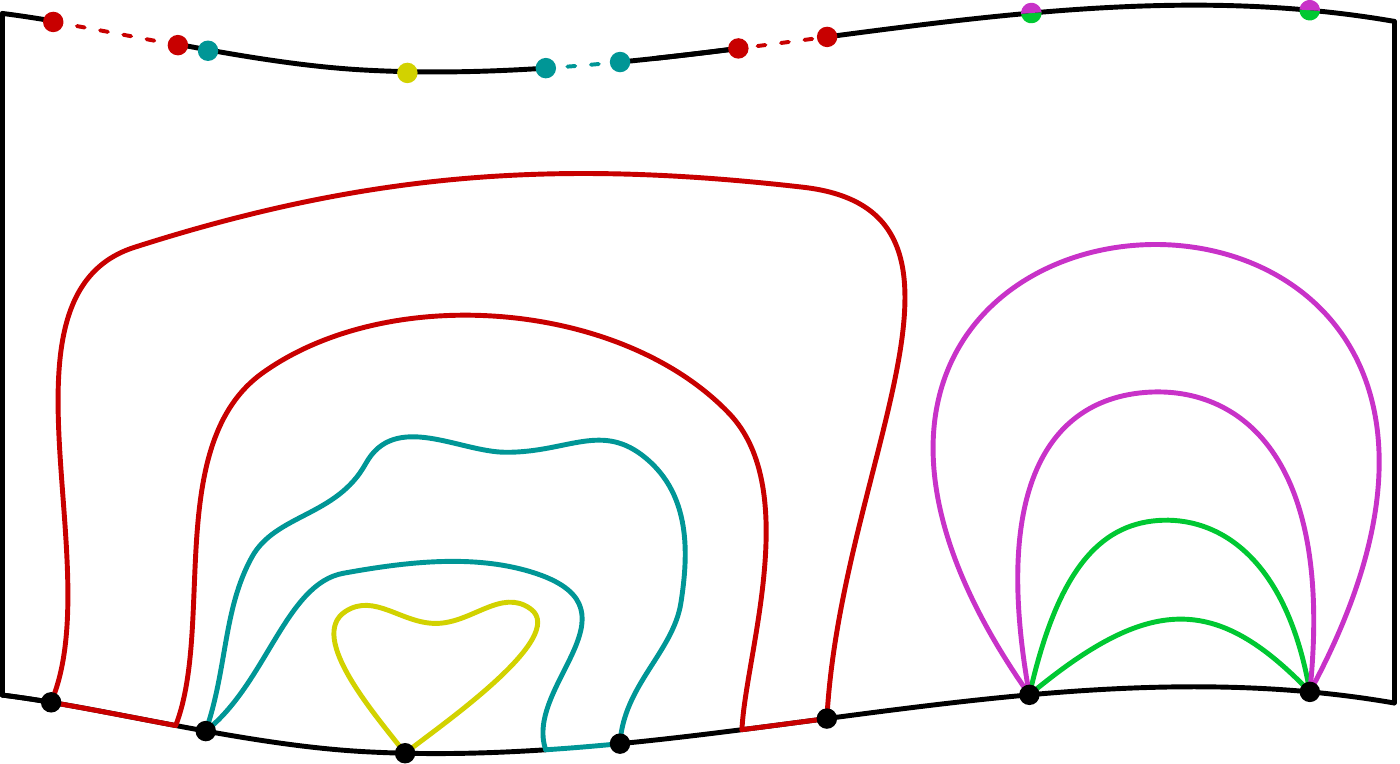}
\put(33,35){$a$}
\put(36,16){$b$}
\put(29,5){$c$}
\put(80,30){$d$}
\put(82,12.5){$e$}

\put(1,-0.5){$x_{a^-}$}
\put(58,-1.5){$x_{a^+}$}

\put(13,-2.7){$x_{b^-}$}
\put(43,-3.2){$x_{b^+}$}

\put(27,-4){$x_\alpha$}

\put(73,0){$x_\beta$}
\put(92,0.5){$x_\gamma$}
\end{overpic}
\end{subfigure}
\qquad
	\begin{subfigure}{5cm}
\begin{overpic}[width=5cm,percent]{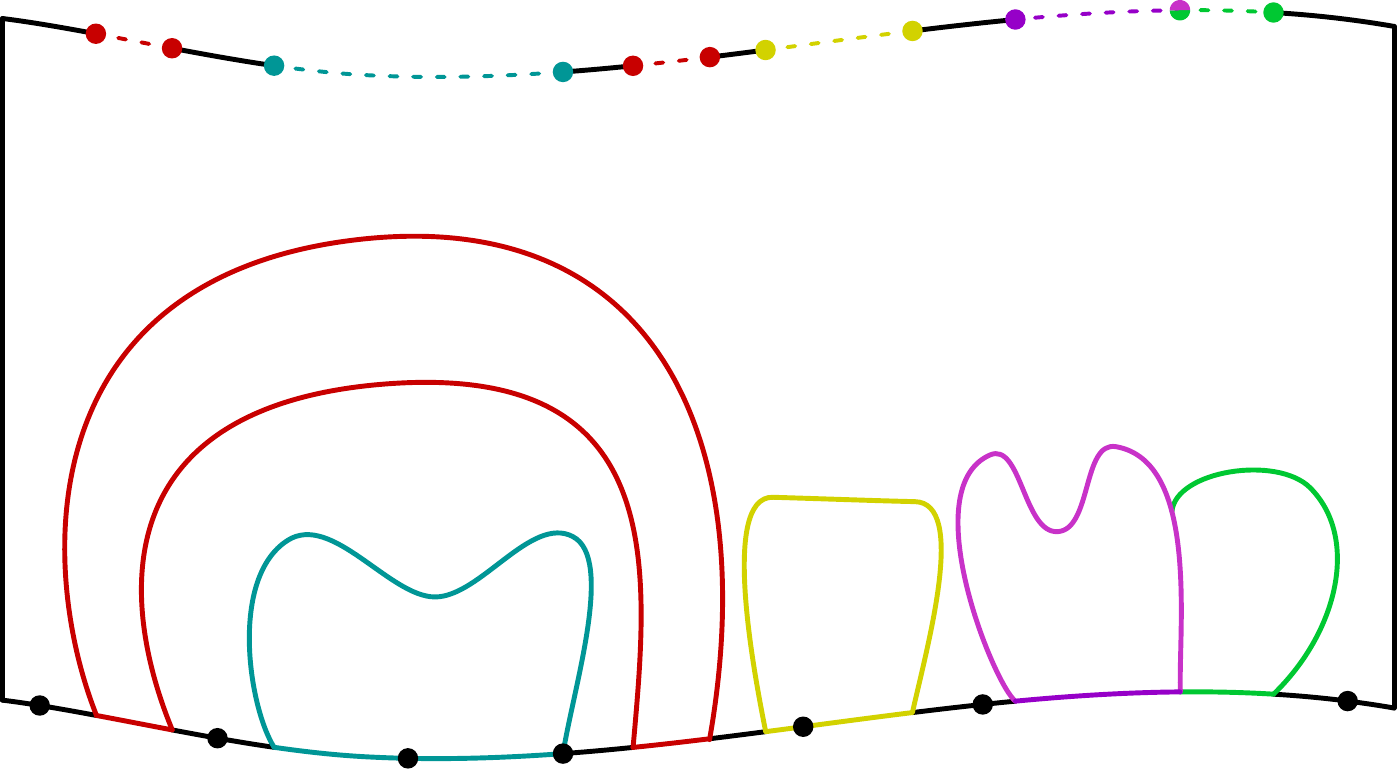}
\put(28,31){$f$}
\put(30,5){$g$}
\put(59,10){$h$}
\put(75,10){$i$}
\put(87,12){$j$}

\put(1,-0.5){$x_{\ell_1}$}
\put(13.7,-2.6){$x_{\ell_2}$}
\put(27,-3.8){$x_{\ell_3}$}
\put(38.5,-3.5){$x_{\ell_4}$}
\put(55.5,-1.8){$x_{\ell_5}$}
\put(68.5,0){$x_{\ell_6}$}
\put(94.5,0.4){$x_{\ell_7}$}
\end{overpic}
\end{subfigure}
  \caption{on the left $c\prec b\prec a$ and $e\prec d$; $a$ and $d$ are the only maximal faces; it holds that $c^-=c^+=\alpha$, $[d^-,d^+]=[e^-,e^+]=[\beta,\gamma]$. On the right let $L_r=(\ell_1,\ell_2,\ldots,\ell_7)$, it holds that: $[f^-_r,f^+_r]=[\ell_1,\ell_5]$, $[g^-_r,g^+_r]=[\ell_2,\ell_5]$, $[h^-_r,h^+_r]=[\ell_4,\ell_6]$, $[i^-_r,i^+_r]=[j^-_r,j^+_r]=[\ell_6,\ell_7]$.}
\label{fig:maximal_faces}
\end{figure}

Now we deal with computing $d_i(q^y_f)$, for all $f\in F^y$.  By following Equation \eqref{eq:vitality_vertices}, we can restrict only to the easier case in which $f$ satisfies $d_i(q^y_f)<\dist_{D}(f,q^y_f)$; indeed, if $f$ does not satisfy it, then we are not interested in the value of $d_i(q^y_f)$.

\begin{lemma}\label{lemma:cost_L_r_vertex}
Let $r\in\mathbb{N}$. Given $\dist_{D}(f,q^y_f)$ and given $U$, for all $f\in F^y$ satisfying $\min_{i\in L_r}d_i(q^y_f)<\dist_{D}(f,q^y_f)$ we compute a value $\beta_r(f)\in[\min_{i\in L_r}d_i(q^y_f),$ $ \min_{i\in L_r}d_i(q^y_f)+\delta)$ in $O(n)$  total time.
\end{lemma}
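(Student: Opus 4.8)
\textbf{Proof plan for Lemma~\ref{lemma:cost_L_r_vertex}.}

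The plan is to mimic the structure of Lemma~\ref{lemma:cost_L_r}, but to replace the naive slicing — which fails here because $q^y_f$ need not sit inside a single $\Omega_{\ell_i,\ell_{i+1}}$ — with one based on the partial order $\prec$ and the quantities $f^-_r,f^+_r$ just introduced. First I would compute $U_r=\bigcup_{i\in L_r}p_i$ in $O(n)$ time by Theorem~\ref{th:U_italiano+err_giappo}. The key structural observation to nail down is that, for a face $f\in F^y$, the vertex set $q^y_f$ is ``pinned'' between the paths $p_{f^-_r}$ and $p_{f^+_r}$: since $q^y_f=(\bigcup_{i\in I_v}\{y_i\})\setminus V(f)$ with $I_v$ an interval of indices lying strictly between $f^-$ and $f^+$, and since $f^-_r\le$ (those indices) $\le f^+_r$ in $L_r$, every vertex of $q^y_f$ lies in $\Omega_{f^-_r,f^+_r}=\Right_{f^-_r}\cap\Left_{f^+_r}$ (with the obvious one-sided modifications when $f^-<\ell^r_1$ or $f^+>\ell^r_{z_r}$). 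Granting that, Lemma~\ref{lemma:Omega_i} applied with $S=q^y_f$, $\ell_i=f^-_r$, $\ell_{i+1}=f^+_r$ gives $\min_{\ell\in L_r}d_\ell(q^y_f)>\min\{d_{f^-_r}(q^y_f),d_{f^+_r}(q^y_f)\}-\delta$, so that setting $\beta_r(f)=\min\{d_{f^-_r}(q^y_f),d_{f^+_r}(q^y_f)\}$ lands in the required interval — except that we must also argue the other inequality $\min\{d_{f^-_r}(q^y_f),d_{f^+_r}(q^y_f)\}\ge\min_{\ell\in L_r}d_\ell(q^y_f)$, which is immediate since both $f^-_r,f^+_r\in L_r$.

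The real work is the time bound: I must compute $\min\{d_{f^-_r}(q^y_f),d_{f^+_r}(q^y_f)\}$ for all relevant $f\in F^y$ in $O(n)$ total time, even though distinct faces can have wildly different $(f^-_r,f^+_r)$ pairs and $\sum_{f}|V(f)|$ is only $O(n)$ but the slices $\Omega_{f^-_r,f^+_r}$ overlap heavily. Here I would exploit the maximal faces $F_{max}$ and the poset $\prec$: if $g\prec f$ then $[g^-,g^+]\subseteq[f^-,f^+]$ hence $[g^-_r,g^+_r]\subseteq[f^-_r,f^+_r]$, so $q^y_g\subseteq\Omega_{g^-_r,g^+_r}\subseteq\Omega_{f^-_r,f^+_r}$ and it suffices to run the SSSP computations from $x_{f^-_r},y_{f^-_r},x_{f^+_r},y_{f^+_r}$ inside $\widetilde\Omega_{f^-_r,f^+_r}$ only for maximal $f$, then read off $d_i(q^y_g)$ for all $g\preceq f$ from within that one computation. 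Since $\sum_{f\in F^y}|V(f)|=O(n)$, the cost of extracting all the $\min\{\dist_D(x_i,\cdot)+\dist_D(y_i,\cdot)\}$ values over $q^y_g$ is $O(n)$; the cost of the SSSP runs themselves is bounded by $\sum$ over maximal $f$ of $|E(\widetilde\Omega_{f^-_r,f^+_r})|$, which by Lemma~\ref{lemma:Henzinger_in_Omega_i} applied to the increasing sequence of distinct endpoints $\{f^\pm_r : f\in F_{max}\}$ is $O(n)$ — here one uses that the intervals $[f^-_r,f^+_r]$ for maximal faces are ``laminar enough'' (nested or disjoint up to shared boundary paths $p_{f^\pm_r}$) so that the telescoping in that lemma still applies; I would phrase the collection of needed slices as $\widetilde\Omega_{a_i,a_{i+1}}$ for a suitable single increasing sequence $A$ and invoke Lemma~\ref{lemma:Henzinger_in_Omega_i} verbatim.

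I expect the main obstacle to be precisely this last point: verifying that the family $\{\Omega_{f^-_r,f^+_r}\}_{f\in F_{max}}$ can be covered by consecutive slices of one increasing index sequence so that Lemma~\ref{lemma:Henzinger_in_Omega_i}'s $O(n)$ edge bound applies, rather than a weaker $O(n\cdot|F_{max}|)$. The fallback, if the slices are not literally consecutive, is to handle the boundary paths $p_{f^-_r}$ and $p_{f^+_r}$ separately (their edges $e$ satisfy $\alpha$-type estimates $\MF+\delta(r+1)-w(e)$ exactly as in Lemma~\ref{lemma:cost_L_r}) and to observe that an edge interior to more than two of the maximal-face slices must lie on some $p_j$ and hence is charged only $O(1)$ times — the same contraction argument used in the proof of Lemma~\ref{lemma:Henzinger_in_Omega_i}. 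Either way the SSSP calls are run via Henzinger \emph{et al.}'s linear-time algorithm~\cite{henzinger}, giving $O(n)$ total, and combining with the $O(n)$ extraction cost finishes the lemma.
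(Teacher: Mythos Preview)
Your plan has two genuine gaps, both stemming from the fact that $f^-_r$ and $f^+_r$ are in general \emph{not} consecutive in $L_r$: any $\ell\in L_r$ with $\ell\in[f^-,f^+]$ sits strictly between them. First, you invoke Lemma~\ref{lemma:Omega_i} ``with $\ell_i=f^-_r$, $\ell_{i+1}=f^+_r$'', but that lemma (via claim~\ref{claim:left_right}) only controls indices $\ell\le f^-_r$ or $\ell\ge f^+_r$; it says nothing about the intermediate ones. The hypothesis $\min_{i\in L_r}d_i(q^y_f)<\dist_D(f,q^y_f)$, which you never use, is precisely what disposes of them: for $i\in[f^-,f^+]$ every path from $x_i$ to $q^y_f$ must pass through $x_{f^-}$ or $x_{f^+}\in V(f)$, so $d_i(q^y_f)\ge\dist_D(f,q^y_f)$ and such $i$ cannot realize the minimum. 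Only after this step is $\beta_r(f)=\min\{d_{f^-_r}(q^y_f),d_{f^+_r}(q^y_f)\}$ justified.

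Second, your SSSP scheme does not recover $d_{g^-_r}(q^y_g),d_{g^+_r}(q^y_g)$ for non-maximal $g\prec f$. When some $L_r$ index lies in $[f^-,f^+]$ one can have $g^-_r\ne f^-_r$ with $g^-_r\in[f^-,f^+]$, and $\dist_D(x_{g^-_r},q^y_g)$ cannot be read off an SSSP tree rooted at $x_{f^-_r}$. The paper's remedy is to add two more SSSP sources, $x_{f^-}$ and $x_{f^+}$, per maximal face: since $f$ separates $x_{g^-_r}$ from $q^y_g$, one has $\dist_D(x_{g^-_r},q^y_g)=\min\big\{w(\pi_x[x_{g^-_r},x_{f^-}])+\dist_D(x_{f^-},q^y_g),\ w(\pi_x[x_{g^-_r},x_{f^+}])+\dist_D(x_{f^+},q^y_g)\big\}$, which is now computable in $O(|V(g)|)$. (For the obstacle you correctly anticipated, the paper thins $F_{max}$ to a set $\widetilde F$ with pairwise distinct regions $R_f=\Omega_{f^-_r,f^+_r}$ and processes $\widetilde F$ in three rounds modulo~$3$, using that $f_{i}^{+}{}_{r}\le f_{i+3}^{-}{}_{r}$, to force an increasing index sequence and invoke Lemma~\ref{lemma:Henzinger_in_Omega_i} verbatim.)
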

\begin{proof}
Let $f\in F^y$. We observe that if $i\in[f^-,f^+]$, then every path from $x_i$ to $q^y_f$ passes through either $x_{f^-}$ or $x_{f^+}$. Thus, for every $i\in[f^-,f^+]$, it holds that $d_i(q^y_f)\geq \dist_{D}(f,q^y_f)$. Hence for any $f\in F^y$ satisfying $\min_{i\in L_r}d_i(q^y_f)<\dist_{D}(f,q^y_f)$ it holds that 
\begin{equation}\label{eq:min_1}
\min_{i\in L_r}d_i(q^y_f)=\min_{i\in L_r,i\not\in[f^-,f^+]}\{d_i(q^y_f)\},
\end{equation}
%
being $q^y_f\subseteq\Right_{f^-_r}$ and $q^y_f\subseteq\Left_{f^+_r}$, then Lemma~\ref{lemma:Omega_i} and Equation \eqref{eq:min_1} imply
\begin{equation}\label{eq:min_2}
\min_{i\in L_r}d_i(q^y_f)=\min_{i\in L_r,i\not\in[f^-,f^+]}\{d_i(q^y_f)\}\geq\min\{d_{f^-_r}(q^y_f),d_{f^+_r}(q^y_f)\}-\delta.
\end{equation}

To complete the proof, we need to show how to compute $d_{f^-_r}(q^y_f)$ and $d_{f^+_r}(q^y_f)$, for all $f\in F^y$ satisfying $\min_{i\in L_r}d_i(q^y_f)<\dist_{D}(f,q^y_f)$ in $O(n)$  time. In the following claim we prove it by removing the request that every face $f\in F^y$ has to satisfy $\min_{i\in L_r}d_i(q^y_f)<\dist_{D}(f,q^y_f)$.

\begin{enumerate}[label=\alph{CONT})]\stepcounter{CONT}
\item\label{claim:long} We compute $d_{f^-_r}(q^y_f)$ and $d_{f^+_r}(q^y_f)$, for all $f\in F^y$, in $O(n)$  time.
\end{enumerate}
\claimbegin{\ref{claim:long}} we recall that $d_i(q^y_f)=\dist_{D}(x_i,q^y_f)+\dist_{D}(y_i,q^y_f)$, for all $i\in[k]$ and $f\in F^y$. Being $q_f^y\subseteq V(\pi_y)$ we compute $\dist_{D}(y_i,q^y_f)$ in $O(|V(q^y_f)|)$ time. Thus we have to compute only $\dist_{D}(x_i,q^y_f)$, for required $i\in L_r$ and $f\in F^y$.

For every $f\in F^y$, let $R_f=\Omega_{f^-_r,f^+_r}$, and let $\mathcal{R}=\bigcup_{f\in F_{max}}R_f$. We observe that, given two maximal faces $f$ and $g$, it is possible that $R_f=R_g$. This happens if and only if $f^-_r=g^-_r$ and $f^+_r=g^+_r$ (see face $i$ and face $j$ in Figure~\ref{fig:f-}). We overcome this abundance by introducing $\widetilde{F}$ as a minimal set of faces such that $\mathcal{R}=\bigcup_{f\in\widetilde{F}}R_f$ and $R_f\neq R_g$, for all distinct $f,g\in\widetilde{F}$ (see Figure~\ref{fig:f-} for an example of $\widetilde{F}$).

For every $f\in\widetilde{F}$, it holds that $\pi_y[{f}^-_r,{f}^+_r]\subseteq R_{f}$. Thus, by the above argument, if $g\in F^y$ and $R_g\subseteq R_{f}$, then $q^y_g\subseteq R_{f}$.
%
We solve 4 SSSP instances in $R_f$ with sources $x_j$, for all $j\in\{f^-_r,f^+_r,f^-,f^+\}$ (possibly, ${f}^-_r={f}^-$ and/or ${f}^+_r={f}^+$ and/or $f^-=f^+$). Now we have to prove that this suffices to compute $d_{f^-_r}(q^y_f)$ and $d_{f^+_r}(q^y_f)$, for all $f\in F^y$. In particular we show that, after solving the SSSP instances, we compute $d_{g^-_r}(q^y_g)$ and $d_{g^+_r}(q^y_g)$ in $O(|V(g)|)$ time, for each $g\in F^y$.

Let $g\in F^y$, and let $f\in\widetilde{F}$ be such that $g\subseteq R_{f}$. There are two cases: either $g^-_r={f}^-_r$ and $g^+_r={f}^+_r$, or $g^-_r\neq{f}^-_r$ and/or $g^+_r\neq{f}^+_r$.

If the first case occurs, then we compute $\dist_{D}(x_{g^-_r},q^y_g)=\dist_{D}(x_{f^-_r},q^y_g)$ and $\dist_{D}(x_{g^+_r},q^y_g)=\dist_{D}(x_{f^+_r},q^y_g)$ in $O(|V(g)|)$ time, because $q^y_g\subseteq R_f$ and $|V(q^y_g)|<|V(g)|$. Otherwise, w.l.o.g., we assume that $g^-_r\neq f^-_r$ (if $g^+_r\neq f^+_r$, then the proof is similar). By definitions of $\widetilde{F}$, $\Omega_g$, and $\Omega_f$, it holds that $g\prec f$. Thus $g^-_r\in[f^-,f^+]$, therefore every path from $x_{g^-_r}$ to $q^y_g$ passes through either $f^-$ or $f^+$ (see $g_3$ and $f_5$ in Figure~\ref{fig:f-}). By this discussion, it follows that 
\begin{align*}
\dist_{D}(x_{g^-_r},q^y_g)&=\min\left\{\begin{array}{l}
\dist_{D}(x_{g^-_r},x_{f^-})+\dist_{D}(x_{f^-},q^y_f)\\
\dist_{D}(x_{g^-_r},x_{f^+})+\dist_{D}(x_{f^+},q^y_f)
\end{array}\right\}\\
&=\min\left\{\begin{array}{l}
|\pi[x_{g^-_r},x_{f^-}]|+\dist_{D}(x_{f^-},q^y_f)\\
|\pi[x_{g^-_r},x_{f^+}]|+\dist_{D}(x_{f^+},q^y_f)
\end{array}\right\}
\end{align*}
%
%
We compute all these distances by the solutions of previous SSSP instances in $O(|V(q^y_g)|)$ time, and thus we compute $\dist_{D}(x_{g^-_r},q^y_g)$ in $O(|V(q^y_g)|)$ time. By symmetry, the same cost is required to compute $\dist_{D}(x_{g^+_r},q^y_g)$.

We have proved that, after solving the described SSSP instances, we compute $d_{f^-_r}(q^y_f)$ and $d_{f^+_r}(q^y_f)$, for all $f\in F^y$, in $O(|V(f)|)$ time for each $f\in F^y$. Being $\sum_{f\in F^y}|V(f)|=O(n)$, it remains to show that we can solve all the previous SSSP instances in $O(n)$  time. We want to use Lemma~\ref{lemma:Henzinger_in_Omega_i} (we recall that, for our purposes, distances in $\Omega_{f^-_r,f^+_r}$ are the same in $\widetilde{\Omega}_{f^-_r,f^+_r}$).

Let us fix $i\in[h]$ and let $a=f_i$, $b=f_{i+1}$, $c=f_{i+2}$ and $d=f_{i+3}$. We can not use directly Lemma~\ref{lemma:Henzinger_in_Omega_i} because it is possible that $a^-_r<b^+_r$ (see in Figure~\ref{fig:f-} $a=f_3$ and $b=f_4$, thus $b^-_r=\ell_4<\ell_5=a^+_r$) and thus we might have not an increasing set of indices. But, by definition of $\widetilde{F}$, it holds that $a_r^+\leq d_r^-$, indeed $a^+_r\in[b^-_r,c^+_r]$ otherwise $R_b=R_c$; these relations do not depend on $i$. Similarly, $d_r^-\geq a_r^+$. Thus we solve first the SSSP instances in $R_{f_i}$, for all $i\in[h]$ such that $i\equiv 0$ mod 3; then for $i\equiv 1$ mod 3 and finally for $i\equiv 2$ mod~3. By Lemma~\ref{lemma:Henzinger_in_Omega_i} it costs $O(n)$ time.
\claimend{\ref{claim:long}}

Finally, by Equation \eqref{eq:min_2}, we set $\beta_r(f)=\min\{d_{f^-_r}(q^y_f),d_{f^+_r}(q^y_f)\}$, for all $f\in F^y$ satisfying $\beta_r(f)<\dist_D(f,q^y_f)$ and we ignore faces in $F^y$ that do not satisfy it.\qed
\end{proof}

\begin{figure}[h]
\captionsetup[subfigure]{justification=centering}
\centering
	\begin{subfigure}{5cm}
\begin{overpic}[width=5cm,percent]{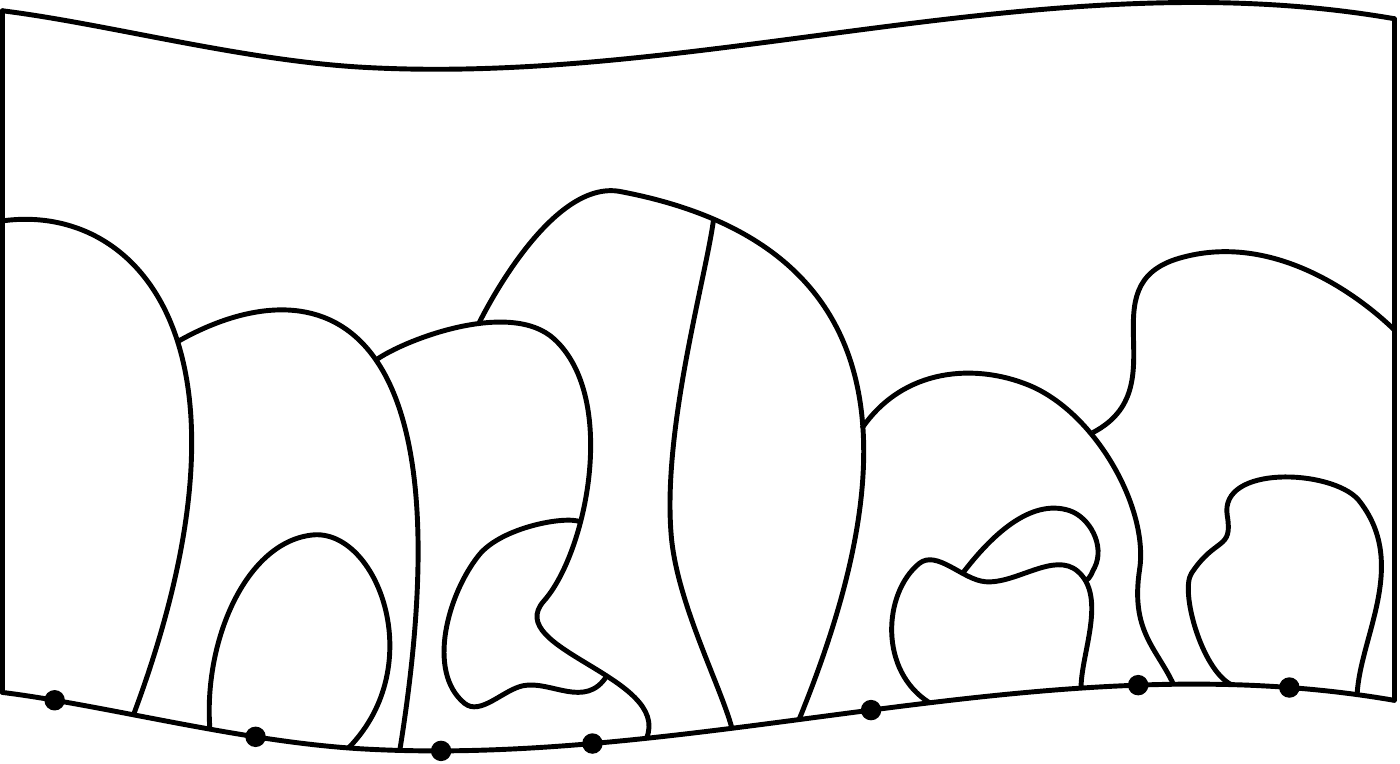}
\put(0,-0.5){$x_{\ell_1}$}
\put(16,-3.5){$x_{\ell_2}$}
\put(29.5,-4){$x_{\ell_3}$}
\put(41,-3.5){$x_{\ell_4}$}
\put(61,-1){$x_{\ell_5}$}
\put(79,0.5){$x_{\ell_6}$}
\put(90,0.5){$x_{\ell_7}$}

\put(4,17){$f_1$}
\put(18,21.5){$f_2$}
\put(32,22.5){$f_3$}
\put(50,18){$f_4$}
\put(65,19){$f_5$}
\put(87,24){$f_6$}

\put(19,8){$g_1$}
\put(41,33){$g_2$}
\put(67,8){$g_3$}
\put(89,11){$g_4$}
\end{overpic}
\end{subfigure}		
\caption{assume that $L_r=(\ell_1,\ldots,\ell_7)$. A possible $\widetilde{F}$ is $\widetilde{F}=\{f_1,\ldots,f_6\}$. Moreover, $g_1,g_3,g_4$ are not in $F_{max}$, $g_2\in F_{max}$ and $R_{g_2}=R_{f_4}$ thus $g_2\not\in\widetilde{F}$.}
\label{fig:f-}
\end{figure}

\subsection{Computational complexity of vertex vitality}

Now we give our theorems about vertex vitality. To prove Theorem~\ref{th:main_real_vertex} we follow the same approach used in Theorem~\ref{th:main_real_edge}, by referring to Proposition~\ref{prop:vitality_vertices} in place of Proposition~\ref{prop:vitality_edges}.

We recall that the result stated in Theorem~\ref{th:brutto} is more efficient than the result in Theorem~\ref{th:italiano_dynamic} if either $|S|<\log n$ and $E_S>|S|\log n$ or $|S|\geq\log n$ and $E_S>\frac{|S|n^{1/3}}{\log^{8/3}}$, where $E_S=\sum_{v\in S}deg(v)$.

{
\renewcommand{\thetheorem}{\ref{th:main_real_vertex}}
\begin{theorem}
Let $G$ be a planar graph with positive edge capacities. Then for any $c,\delta>0$, we compute a value $vit^\delta(v)\in(vit(v)-\delta,vit(v)]$ for all $v\in V(G)$ satisfying $c(v)\leq c$, in $O(\frac{c}{\delta}n+n\log n)$  time.
\end{theorem}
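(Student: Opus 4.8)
The plan is to mirror the proof of Theorem~\ref{th:main_real_edge} exactly, substituting Proposition~\ref{prop:vitality_vertices} for Proposition~\ref{prop:vitality_edges} and using the vertex-specific machinery developed in Section~\ref{sec:complexity_vertex}. First I would invoke Theorem~\ref{th:U_italiano+err_giappo} to compute $U=\bigcup_{i\in[k]}p_i$ together with all lengths $w(p_i)$ in $O(n\log\log n)$ time, and also compute $\dist_D(f,q^y_f)$ and $\dist_D(f,q^x_f)$ for all relevant faces $f$ in $O(n\log n)$ time by Lemma~\ref{lemma:centro_faccia_nlogn} (applied once in the ``$y$ case'' and once in the symmetric ``$x$ case''). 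As in the edge case, note that for a face $f$ we only care about those indices $i$ with $d_i<\MF+c(v)$, hence it suffices to range $r$ over $\{0,1,\ldots,\lceil c/\delta\rceil\}$.

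Next, for each such $r$ I would apply Lemma~\ref{lemma:cost_L_r_face} to obtain $\alpha_r(f)\in[\min_{i\in L_r}d_i(f),\min_{i\in L_r}d_i(f)+\delta)$ for all $f\in F$ in $O(n)$ time, and apply Lemma~\ref{lemma:cost_L_r_vertex} (and its $x$-symmetric version) to obtain $\beta_r(f)\in[\min_{i\in L_r}d_i(q^y_f),\min_{i\in L_r}d_i(q^y_f)+\delta)$ for all $f\in F^y$ satisfying $\min_{i\in L_r}d_i(q^y_f)<\dist_D(f,q^y_f)$, again in $O(n)$ time each, and similarly the $\beta_r^x$ values for $f\in F^x$. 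Taking minima over $r$, I would set $\alpha(f)=\min_r \alpha_r(f)$, $\beta^y(f)=\min_r\beta_r^y(f)$, $\beta^x(f)=\min_r\beta_r^x(f)$; these lie in $[\min_{i\in[k]}d_i(f),\min_{i\in[k]}d_i(f)+\delta)$ and the analogous intervals for the $q$-quantities (for faces where the relevant quantity is the binding one in Equation~\eqref{eq:vitality_vertices}; otherwise the corresponding $\dist_D(f,q^x_f)$ or $\dist_D(f,q^y_f)$ term, which we have computed exactly, dominates and no approximation is lost). Plugging these approximate values and the exact $\dist_D(f,q^x_f),\dist_D(f,q^y_f)$ into the right-hand side of Equation~\eqref{eq:vitality_vertices} of Proposition~\ref{prop:vitality_vertices}, for each vertex $v$ with $c(v)\leq c$ I recover $\MF_v$ up to additive error less than $\delta$ from above, i.e.\ a value in $[\MF_v,\MF_v+\delta)$, hence $vit^\delta(v)=\MF-(\text{that value})\in(vit(v)-\delta,vit(v)]$, in $O(1)$ time per vertex.

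For the running time: the preprocessing costs $O(n\log\log n)+O(n\log n)=O(n\log n)$; the loop over $r\in\{0,\ldots,\lceil c/\delta\rceil\}$ costs $O(c/\delta)$ iterations each $O(n)$, giving $O(\frac{c}{\delta}n)$; the final per-vertex work is $O(n)$ total. Summing yields $O(\frac{c}{\delta}n+n\log n)$, and the $O(n)$ space bound follows since every subroutine invoked (Theorems~\ref{th:U_italiano+err_giappo} and~\ref{th:klein}, Lemmas~\ref{lemma:Henzinger_in_Omega_i}, \ref{lemma:cost_L_r_face}, \ref{lemma:centro_faccia_nlogn}, \ref{lemma:cost_L_r_vertex}) works in linear space and we never store more than $O(n)$ data across iterations.

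The main obstacle I expect is the bookkeeping in combining the five quantities of Equation~\eqref{eq:vitality_vertices} correctly: one must check that replacing $\min_{i\in[k]}d_i(q^y_f)$ by its $\delta$-approximation $\beta^y(f)$ only matters when this term is actually the minimum in the equation, and that precisely in that regime Lemma~\ref{lemma:cost_L_r_vertex}'s hypothesis $\min_{i\in L_r}d_i(q^y_f)<\dist_D(f,q^y_f)$ is satisfied, so that $\beta_r(f)$ is in fact computed for the relevant faces; the faces where the hypothesis fails are exactly those where $\dist_D(f,q^y_f)$ (computed exactly) is no larger, so discarding them is harmless. A second, more minor point is verifying that the $\delta$-approximations are one-sided in the right direction so that the final $vit^\delta(v)$ lands in the half-open interval $(vit(v)-\delta,vit(v)]$ rather than straddling $vit(v)$; this follows because every $\alpha_r,\beta_r$ is an over-estimate of the corresponding true minimum by less than $\delta$, hence $\MF_v$ is over-estimated by less than $\delta$ and $vit(v)=\MF-\MF_v$ is under-estimated by less than $\delta$.

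\begin{proof}
We follow the scheme of the proof of Theorem~\ref{th:main_real_edge}, using Proposition~\ref{prop:vitality_vertices} in place of Proposition~\ref{prop:vitality_edges}.

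We first compute $U$ in $O(n\log\log n)$ time by Theorem~\ref{th:U_italiano+err_giappo}, and $\dist_D(f,q^y_f)$ for all $f\in F^y$ and $\dist_D(f,q^x_f)$ for all $f\in F^x$ in $O(n\log n)$ time by Lemma~\ref{lemma:centro_faccia_nlogn} applied to the ``$y$ case'' and, symmetrically, to the ``$x$ case''. As in Theorem~\ref{th:main_real_edge}, if $d_i>\MF+c(v)$ then all of $d_i(f^D_v)$, $d_i(q^x_{f^D_v})$ and $d_i(q^y_{f^D_v})$ exceed $\MF$, so the corresponding index $i$ is irrelevant for vertices $v$ with $c(v)\le c$; hence it suffices to range $r$ over $\{0,1,\ldots,\lceil\frac{c}{\delta}\rceil\}$.

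For each such $r$ we apply Lemma~\ref{lemma:cost_L_r_face} to compute $\alpha_r(f)\in[\min_{i\in L_r}d_i(f),\min_{i\in L_r}d_i(f)+\delta)$ for all $f\in F$, and Lemma~\ref{lemma:cost_L_r_vertex} (together with its $x$-symmetric version) to compute $\beta^y_r(f)\in[\min_{i\in L_r}d_i(q^y_f),\min_{i\in L_r}d_i(q^y_f)+\delta)$ for all $f\in F^y$ with $\min_{i\in L_r}d_i(q^y_f)<\dist_D(f,q^y_f)$, and likewise $\beta^x_r(f)$ for $f\in F^x$ with $\min_{i\in L_r}d_i(q^x_f)<\dist_D(f,q^x_f)$. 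Each of these computations takes $O(n)$ time. We then set
\[
\alpha(f)=\min_{0\le r\le\lceil c/\delta\rceil}\alpha_r(f),\qquad
\beta^y(f)=\min_{0\le r\le\lceil c/\delta\rceil}\beta^y_r(f),\qquad
\beta^x(f)=\min_{0\le r\le\lceil c/\delta\rceil}\beta^x_r(f),
\]
so that $\alpha(f)\in[\min_{i\in[k]}d_i(f),\min_{i\in[k]}d_i(f)+\delta)$, and similarly $\beta^y(f)$ (resp. $\beta^x(f)$) lies in the corresponding interval for $\min_{i\in[k]}d_i(q^y_f)$ (resp. $\min_{i\in[k]}d_i(q^x_f)$) whenever this quantity is smaller than $\dist_D(f,q^y_f)$ (resp. $\dist_D(f,q^x_f)$).

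Now fix $v\in V(G)$ with $c(v)\le c$ and write $f=f^D_v$. If $f^*_v$ and $\pi$ have no common vertices, Proposition~\ref{prop:vitality_vertices} gives $\MF_v=\min_{i\in[k]}d_i(f)$, which $\alpha(f)$ approximates from above with error $<\delta$. Otherwise, plug $\alpha(f)$, $\beta^x(f)$, $\beta^y(f)$, and the exactly computed $\dist_D(f,q^x_f)$, $\dist_D(f,q^y_f)$ into the right-hand side of Equation~\eqref{eq:vitality_vertices}: whenever one of the first three lines attains the minimum in~\eqref{eq:vitality_vertices}, the relevant $\min_{i\in[k]}d_i(\cdot)$ is strictly less than the corresponding $\dist_D(f,\cdot)$ term, so by Lemma~\ref{lemma:cost_L_r_vertex} the corresponding $\beta^x_r$ or $\beta^y_r$ was indeed computed for every relevant $r$; whenever one of the last two lines attains the minimum, it is computed exactly. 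In all cases the value obtained lies in $[\MF_v,\MF_v+\delta)$, so $vit^\delta(v):=\MF-(\text{this value})$ lies in $(vit(v)-\delta,vit(v)]$, and it is computed in $O(1)$ time per vertex.

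The preprocessing costs $O(n\log\log n)+O(n\log n)=O(n\log n)$; the loop over $r$ has $O(\frac{c}{\delta})$ iterations, each costing $O(n)$; the final per-vertex step costs $O(n)$ in total. Hence the overall running time is $O(\frac{c}{\delta}n+n\log n)$, and since every subroutine invoked works in $O(n)$ space and we carry only $O(n)$ data across iterations, the algorithm uses $O(n)$ space.\qed
\end{proof}
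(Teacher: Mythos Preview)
Your proposal is correct and follows essentially the same approach as the paper's own proof: compute $U$, compute the exact distances $\dist_D(f,q^x_f)$ and $\dist_D(f,q^y_f)$ via Lemma~\ref{lemma:centro_faccia_nlogn}, loop over $r\in\{0,\ldots,\lceil c/\delta\rceil\}$ applying Lemma~\ref{lemma:cost_L_r_face} and Lemma~\ref{lemma:cost_L_r_vertex} (plus its $x$-symmetric version), take minima over $r$, and plug into Proposition~\ref{prop:vitality_vertices}. Your write-up is in fact more careful than the paper's about the bookkeeping needed to combine the five terms of Equation~\eqref{eq:vitality_vertices}, in particular the observation that Lemma~\ref{lemma:cost_L_r_vertex}'s hypothesis is guaranteed exactly when the corresponding $q$-term is the binding one, and that otherwise the exactly computed $\dist_D(f,q^{x}_f)$ or $\dist_D(f,q^{y}_f)$ takes over with no loss.
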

\addtocounter{theorem}{-1}
}
\begin{proof}
We compute $D$ and $U$ in $O(n\log\log n)$ time by Theorem~\ref{th:U_italiano+err_giappo}. If $c(v)<c$, then $w(f^D_v)<c$. For convenience, we denote $f^D_v$ by $f$. By Lemma~\ref{lemma:centro_faccia_nlogn}, we compute $\dist_{D}(f,q^y_f)$ (resp., $\dist_{D}(f,q^x_f)$)  in $O(n\log n)$ time, for all $f\in F^y$ (resp., for all $f\in F^x$). 
 Now we have to show  how to obtain $\min_{i\in[k]}\{d_i(f)\}$, $\min_{i\in[k]}\{d_i(q^y_f)\}$ and $\min_{i\in[k]}\{d_i(q^x_f)\}$ that we may compute with an error depending on $\delta$.

We note that $\min_{i\in[k]}d_i(f)=\min_{i\in[k],d_i<\MF+w(f)}d_i(f)$. Indeed, if $d_i(f)=\MF-z$, for some $z>0$, then $d_i$ is at most $\MF-z+w(f)$. For the same reason, $\min_{i\in[k]}d_i(q^x_f)=\min_{i\in[k],d_i<\MF+w(f)}d_i(q^x_f)$ and, similarly, $\min_{i\in[k]}d_i(q^y_f)=\min_{i\in[k],d_i<\MF+w(f)}d_i(q^y_f)$.

By using Lemma~\ref{lemma:cost_L_r}, for each $r\in\{0,1,\ldots,\lceil\frac{c}{\delta}\rceil\}$, we compute a value $\alpha_r(f)\in[\min_{i\in L_r}d_i(f),$ $\min_{i\in L_r}d_i(f)+\delta)$, for all $f\in  F$, in $O(n)$ time. Then, for each $f\in  F$, we compute $\alpha(f)=\min_{r\in\{0,1,\ldots,\frac{c}{\delta}\}}\alpha_r(f)$. By above, for any $f\in F$ satisfying $w(f)<c$, it holds that $\alpha(f)$ satisfies $\alpha(f)\in[\min_{i\in[k]}\{d_i(f)\},$ $\min_{i\in[k]}\{d_i(f)\}+\delta)$. 



With a similar strategy, by replacing Lemma~\ref{lemma:cost_L_r} with Lemma~\ref{lemma:cost_L_r_vertex}, for each $f\in F^y$ satisfying $w(f)<c$ and $\min_{i\in L_r}d_i(q^y_f)<\dist_{D}(f,q^y_f)$, we compute a value $\beta(f)\in[\min_{i\in[k]}\{d_i(q^y_f)\},$ $\min_{i\in[k]}\{d_i(q^y_f)\}+\delta)$. The same results hold for the ``$x$ case'': for each $f\in F^x$ satisfying $w(f)<c$ and $\min_{i\in L_r}d_i(q^x_f)<\dist_{D}(f,q^y_f)$, we compute a value $\gamma(f)\in[\min_{i\in[k]}\{d_i(q^x_f)\},$ $\min_{i\in[k]}\{d_i(q^x_f)\}+\delta)$.

Then, by Proposition~\ref{prop:vitality_vertices}, for each $v\in V(G)$ satisfying $c(v)\leq c$ ($w(f)<c$) we compute a value $vit^\delta(v)$ satisfying $vit^\delta(v)\in(vit(v)-\delta,vit(v)]$ in $O(1)$ time by using $\dist_{D}(f,q^x_f)$, $\dist_{D}(f,q^y_f)$, $\alpha(f)$, $\beta(f)$ and $\gamma(f)$.\qed
\end{proof}

{
\renewcommand{\thetheorem}{\ref{th:brutto}}
\begin{theorem}
Let $G$ be a planar graph with positive edge capacities. Then for any $S\subseteq V(G)$, we compute $vit(v)$ for all $v\in S$ in $O(|S|n+n\log\log n)$  time.
\end{theorem}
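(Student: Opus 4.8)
The plan is to combine Proposition~\ref{prop:vitality_vertices} with the linear-time planar SSSP algorithm of~\cite{henzinger}, simply paying $O(n)$ time per vertex of $S$ (this is the crude, ``brute force'' counterpart of the amortized slicing argument used for the approximation theorems). First I would compute $D$ together with $U=\bigcup_{i\in[k]}p_i$ and the lengths $d_i=w(p_i)$ for all $i\in[k]$ in $O(n\log\log n)$ time by Theorem~\ref{th:U_italiano+err_giappo}; this also gives $\MF=\min_{i\in[k]}d_i$. Since $vit(v)=\MF-\MF_v$, it then suffices to compute $\MF_v$ for each $v\in S$ within $O(n)$ time.

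Fix $v\in S$ and set $f=f^D_v$. By Proposition~\ref{prop:vitality_vertices} (with the convention $d_i(\emptyset)=+\infty$), $\MF_v$ is determined by the five quantities $\min_{i\in[k]}d_i(f)$, $\min_{i\in[k]}d_i(q^x_f)$, $\min_{i\in[k]}d_i(q^y_f)$, $\dist_{D}(f,q^x_f)$ and $\dist_{D}(f,q^y_f)$ (only the first is needed when $f^*_v$ and $\pi$ share no vertex). Recalling $d_i(\mathcal S)=\min\{d_i,\dist_{D}(x_i,\mathcal S)+\dist_{D}(y_i,\mathcal S)\}$, I would obtain all the needed set-to-vertex distances by attaching, for each $\mathcal S\in\{V(f),q^x_f,q^y_f\}$, a new super-source joined to every vertex of $\mathcal S$ by zero-weight edges and running a single SSSP from that super-source. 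Attaching the super-source for $V(f)$ inside the face $f$ preserves planarity, and the super-sources for $q^x_f$ and $q^y_f$ can be placed in the infinite face of $D$, because $q^x_f\subseteq V(\pi_x)$ and $q^y_f\subseteq V(\pi_y)$ lie on the boundary of the infinite face of $D$ (the same structural fact exploited in Lemma~\ref{lemma:centro_faccia_nlogn}). A single SSSP from the $V(f)$-source simultaneously yields $\dist_{D}(x_i,f)$ and $\dist_{D}(y_i,f)$ for all $i$, as well as $\dist_{D}(f,q^x_f)$ and $\dist_{D}(f,q^y_f)$ (the minimum label over $q^x_f$, resp.\ over $q^y_f$); the $q^x_f$- and $q^y_f$-sources analogously give $\dist_{D}(x_i,q^x_f),\dist_{D}(y_i,q^x_f)$ and $\dist_{D}(x_i,q^y_f),\dist_{D}(y_i,q^y_f)$ for all $i$. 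Hence $O(1)$ SSSP instances on planar graphs with $O(n)$ vertices suffice; each costs $O(n)$ time and $O(n)$ space by~\cite{henzinger}. Combining the resulting $d_i(\cdot)$ values, minimized over $i\in[k]$ using the $d_i$ from the preprocessing, and substituting into Proposition~\ref{prop:vitality_vertices} yields $\MF_v$ and thus $vit(v)$ in an additional $O(n)$ time.

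Summing over $v\in S$ gives total time $O(n\log\log n)+|S|\cdot O(n)=O(|S|n+n\log\log n)$, and the space remains $O(n)$ because each auxiliary graph and its SSSP output are discarded before the next vertex is processed. The only point requiring care is the planarity of the super-source gadgets, which reduces to the observation that $\pi_x$ and $\pi_y$ — hence all of $q^x_f$ and $q^y_f$ — lie on the infinite face of $D$, together with checking that every term in Equation~\eqref{eq:vitality_vertices} is recoverable from these $O(1)$ SSSP computations; the rest is routine.
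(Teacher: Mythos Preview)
Your proposal is correct and follows essentially the same approach as the paper: preprocess $D$ and $U$ in $O(n\log\log n)$ time, then for each $v\in S$ attach a super-source to $V(f^D_v)$ (and, with the same idea, to $q^x_f$ and $q^y_f$) and run the linear-time planar SSSP of~\cite{henzinger} to obtain all quantities in Proposition~\ref{prop:vitality_vertices}. The paper is terser (it just says ``with a similar strategy'' for $q^x_f$ and $q^y_f$), whereas you spell out the planarity justification for those super-sources via the fact that $\pi_x,\pi_y$ lie on the infinite face of $D$; otherwise the arguments coincide.
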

\addtocounter{theorem}{-1}
}
\begin{proof} 
We compute $D$ and $U$ in $O(n\log\log n)$ time by Theorem~\ref{th:U_italiano+err_giappo}.  For convenience, we denote $f^D_v$ by $f$. To compute $\min_{i\in[k]}\{d_i(f)\}$, $\dist_{D}(f,q^x_f)$ and $\dist_{D}(f,q^y_f)$ we put a vertex $u_f$ in the face $f$ and we connect it to all vertices of $f$ with zero weighted edges. Then we solve an SSSP instance with source $u_f$ and we compute $\min_{i\in[k]}\{d_i(f)\}$, $\dist_{D}(f,q^x_{f})$ and $\dist_{D}(f,q^y_f)$ in $O(n)$. With a similar strategy, we compute $\min_{i\in[k]}\{d_i(q^y_f)\}$ and $\min_{i\in[k]}\{d_i(q^x_f)\}$ in $O(n)$ time. Finally, by Proposition~\ref{prop:vitality_vertices}, for each $v\in S$, we compute $vit(v)$ in $O(1)$ time.\qed
\end{proof}

\section{Small integer capacities and unit capacities}\label{sec:small_integer}

If the edges capacities are integer, then we can compute the max flow in $O(n+L)$ time~\cite{eisenstat-klein} and also $U$ in $O(n+L)$ time~\cite{balzotti-franciosa_2}, where $L$ is the sum of all the edges capacities. 
{
\renewcommand{\thecorollary}{\ref{cor:integer}}
\begin{corollary}
Let $G$ be a planar graph with integer edge capacity and let $L$ be the sum of all the edges capacities. Then
\begin{itemize}\itemsep0em
\item for any $H\subseteq E(G)\cup V(G)$, we compute $vit(x)$ for all $x\in H$, in $O(|H|n+L)$ time,
\item for any $c\in\mathbb{N}$, we compute $vit(e)$ for all $e\in E(G)$ satisfying $c(e)\leq c$, in $O(cn+L)$ time.
\end{itemize}
\end{corollary}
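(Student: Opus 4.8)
The plan is to re-run, with the additive parameter effectively set to $1$, the algorithms already developed for Theorem~\ref{th:main_real_edge}, Theorem~\ref{th:brutto} and Propositions~\ref{prop:vitality_edges}--\ref{prop:vitality_vertices}, and to replace every $O(n\log\log n)$ bottleneck by its integer-capacity counterpart. Concretely, the preprocessing is: $\MF$ in $O(n+L)$ time by~\cite{eisenstat-klein}; a shortest path $\pi$ in $G^*$ from $v^*_s$ to $v^*_t$ (one SSSP, $O(n)$ time by~\cite{henzinger}) and from it the graph $D$ in $O(n)$ time; and $U=\bigcup_{i\in[k]}p_i$ together with the lengths $d_i$ in $O(n+L)$ time by~\cite{balzotti-franciosa_2}; this is $O(n+L)$ overall. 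The one genuinely new remark is that with integer capacities $\MF_x\in\mathbb{Z}$ for every $x\in E(G)\cup V(G)$, so $vit(x)=\MF-\MF_x\in\mathbb{Z}$; hence an additive error strictly less than $1$ already determines $vit(x)$ exactly, it being the unique integer in an interval of the form $(vit(x)-1,vit(x)]$.

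For the second item I would invoke Theorem~\ref{th:main_real_edge} with $\delta=1$, but using the $O(n+L)$-time preprocessing above instead of the $O(n\log\log n)$ one. Then for each $r\in\{0,1,\dots,c\}$ Lemma~\ref{lemma:cost_L_r} costs $O(n)$ time, and for every $e$ with $c(e)\le c$ the resulting value $\alpha(e^D)=\min_r\alpha_r(e^D)$ approximates $\MF_e=\min_{i\in[k]}d_i(e^D)$ within additive $1$: indices with $d_i\ge\MF+c+1$ are irrelevant, because contracting $e^D$ (likewise $e^D_x$, $e^D_y$) decreases no distance by more than $c(e)$, so for such $e$ they satisfy $d_i(e^D)\ge d_i-c(e)\ge\MF+1>\MF\ge\MF_e$. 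By Proposition~\ref{prop:vitality_edges} this yields $vit^{1}(e)\in(vit(e)-1,vit(e)]$, whence $vit(e)=\lceil vit^{1}(e)\rceil$. The total time is $O(n+L)+(c+1)\cdot O(n)=O(cn+L)$.

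For the first item I would compute $vit(x)$ one element of $H$ at a time, each with $O(1)$ SSSP calls in a graph of size $O(n)$. For a vertex $v\in H$ this is exactly the argument in the proof of Theorem~\ref{th:brutto}: a zero-weight star centred at a fresh vertex inside $f=f^D_v$, plus the two symmetric stars spanning $q^x_f$ and $q^y_f$, and one SSSP instance for each, produce all quantities occurring in Proposition~\ref{prop:vitality_vertices}, hence $\MF_v$ and $vit(v)$, in $O(n)$ time. For an edge $e\in H$ with $e^*\notin\pi$, letting $a,b$ be the endpoints of $e^D$, I would add a new vertex joined to $a$ and to $b$ by zero-weight edges and run a single SSSP from it, obtaining $\dist_{D}(x_i,e^D)$ and $\dist_{D}(y_i,e^D)$ for all $i\in[k]$; by Proposition~\ref{prop:vitality_edges},
\[
\MF_e=\min\Bigl\{\,\MF,\ \min_{i\in[k]}\bigl(\dist_{D}(x_i,e^D)+\dist_{D}(y_i,e^D)\bigr)\Bigr\},
\]
and when $e^*\in\pi$ the same is done separately for $e^D_x$ and $e^D_y$ (two SSSP calls, still $O(n)$). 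Summing the $O(n)$ per-element cost over $H$ and adding the $O(n+L)$ preprocessing gives $O(|H|n+L)$ (using $L\ge n-1$ to absorb the preprocessing when $|H|$ is small).

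The statement is a routine consequence of the earlier development, so I do not expect a real obstacle; the two points requiring a little care are verifying that the whole preprocessing can be carried out in $O(n+L)$ rather than $O(n\log\log n)$ --- which is precisely what~\cite{eisenstat-klein,balzotti-franciosa_2} provide --- and noticing that for individual edges a single SSSP call (via the displayed formula) replaces the full slicing of $D$ used in Theorem~\ref{th:main_real_edge}; the rest is transcription with $\delta=1$ together with the integrality remark.
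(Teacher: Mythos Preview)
Your proposal is correct and, for the second item and the integrality/rounding remark, matches the paper's proof exactly: set $\delta=1$, observe vitalities are integers, and replace the $O(n\log\log n)$ preprocessing for $U$ by the $O(n+L)$ algorithm of~\cite{balzotti-franciosa_2}.

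For the first item there is a small divergence worth flagging. The paper's proof tersely cites ``the proof of Theorem~\ref{th:main_real_edge} and Theorem~\ref{th:main_real_vertex}'', whereas you (correctly and more transparently) handle each $x\in H$ by a constant number of SSSP calls in $D$, exactly as in the proof of Theorem~\ref{th:brutto} for vertices and with the obvious zero-weight-star trick for edges. Your route is the natural one and makes the $O(n)$-per-element bound evident; the paper's citation to Theorems~\ref{th:main_real_edge}--\ref{th:main_real_vertex} alone does not literally yield $O(|H|n)$ (those give $O(\tfrac{c}{\delta}n)$, not per-element linear cost), so one should read it as pointing to the same SSSP machinery you spell out. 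Substantively the two arguments coincide.
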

\addtocounter{corollary}{-1}
}
\begin{proof}
Note that, being all the edge capacities integer, then every edge or vertex vitality is an integer. Thus, by taking $\delta=1$ in Theorem~\ref{th:main_real_edge} and Theorem~\ref{th:main_real_vertex}, we obtain all the vitalities without error. The two statements follow from the proof of Theorem~\ref{th:main_real_edge} and Theorem~\ref{th:main_real_vertex} by taking $\delta=1$ and by computing $U$ in $O(n+L)$ time instead of $O(n\log\log n)$ time by using algorithm in~\cite{balzotti-franciosa_2}.\qed
\end{proof}

{
\renewcommand{\thecorollary}{\ref{cor:unweighted}}
\begin{corollary}
Let $G$ be a planar graph with unit edge capacity. Let $n_{>d}$ be the number of vertices whose degree is greater than $d$. We can compute the vitality of all edges in $O(n)$ time and the vitality of all vertices in $O(\min\{n^{3/2},n(n_{>d}+d+\log n)\})$ time.
\end{corollary}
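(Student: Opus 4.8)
The plan is to treat edges and vertices separately, reducing each to results already established. For edges: because every edge has unit capacity, the total capacity $L=\sum_{e\in E(G)}c(e)=|E(G)|$ is $O(n)$ by planarity, and every edge trivially satisfies $c(e)\le 1$. Applying the second item of Corollary~\ref{cor:integer} with $c=1$ then computes $vit(e)$ for all $e\in E(G)$ in $O(1\cdot n+L)=O(n)$ time, improving the $O(n\log n)$ bound of~\cite{ausiello-franciosa_2}.

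For vertices I would run two algorithms and keep the faster output. Fix a degree threshold $\tau$ and split $V(G)$ into $V_{\le\tau}=\{v\in V(G):deg(v)\le\tau\}$ and $V_{>\tau}=\{v\in V(G):deg(v)>\tau\}$. As in the proof of Corollary~\ref{cor:integer}, integrality of the max flow forces every vitality to be an integer, so running the approximation algorithms with $\delta=1$ recovers exact values. For $v\in V_{\le\tau}$ we have $c(v)=deg(v)\le\tau$, so Theorem~\ref{th:main_real_vertex} with $c=\tau$ and $\delta=1$ returns $vit(v)$ for all such $v$ in $O(\tau n+n\log n)$ time. For $v\in V_{>\tau}$ we invoke Theorem~\ref{th:brutto} with $S=V_{>\tau}$, at cost $O(|V_{>\tau}|\,n+n\log\log n)$. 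Summing, all vertex vitalities are obtained in $O\bigl(n(|V_{>\tau}|+\tau+\log n)\bigr)$ time.

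Instantiating $\tau=d$ gives the bound $O(n(n_{>d}+d+\log n))$, since $|V_{>d}|=n_{>d}$. Instantiating instead $\tau=\sqrt{n}$, independently of the parameter $d$, gives $O(n^{3/2})$: since $\sum_{v\in V(G)}deg(v)=2|E(G)|=O(n)$, at most $O(n/\sqrt n)=O(\sqrt n)$ vertices have degree exceeding $\sqrt n$, so the bound becomes $O(n(\sqrt n+\sqrt n+\log n))=O(n^{3/2})$. Running both instantiations and reporting the cheaper answer proves the claimed $O(\min\{n^{3/2},n(n_{>d}+d+\log n)\})$ time. The whole argument is bookkeeping over Theorems~\ref{th:main_real_vertex} and~\ref{th:brutto} and Corollary~\ref{cor:integer}, so I expect no genuine obstacle; the only points needing care are that integrality of the flow turns the $\delta$-approximation into an exact computation, and the elementary degree-sum count bounding $n_{>\sqrt n}=O(\sqrt n)$ in any sparse (in particular planar) graph---note in particular that the $n^{3/2}$ term is obtained with an internally chosen threshold $\sqrt n$, unrelated to the $d$ appearing in the other term.
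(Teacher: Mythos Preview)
Your proof is correct and follows essentially the same approach as the paper: edges via Corollary~\ref{cor:integer} with $c=1$ and $L=O(n)$, vertices by splitting on a degree threshold and combining Theorem~\ref{th:main_real_vertex} (with $\delta=1$ so integrality gives exact values) for low-degree vertices with Theorem~\ref{th:brutto} for high-degree ones, and the $O(n^{3/2})$ bound by taking threshold $\sqrt{n}$. Your write-up is in fact slightly more explicit than the paper's, which cites only Theorem~\ref{th:main_real_vertex} for the $O((n_{>d}+d)n+n\log n)$ bound but clearly intends the same two-algorithm split; your handshaking-lemma count $n_{>\sqrt{n}}=O(\sqrt{n})$ is equivalent to the paper's appeal to Euler's formula.
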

\addtocounter{corollary}{-1}
}
\begin{proof}
The complexity of edge vitality is implied by Corollary~\ref{cor:integer} by taking $c=1$ and because $L=O(n)$. Being the vitality integers, then we compute the vitality of all vertices in $O((n_{>d}+d)n+n\log n)$ time by Theorem~\ref{th:main_real_vertex}.

To compute the vitality of all vertices in $O(n^{3/2})$  time we note that in a planar graph, by Euler formula, there are at most $6\sqrt{n}$ vertices whose degree is greater than $\sqrt{n}$. Thus it suffices to take $d=\sqrt{n}$, that implies $n_{>d}\leq 6\sqrt{n}$ and $O((n_{>d}+d)n+n\log n)=O(n^{3/2})$.\qed
\end{proof}

Kowalik and Kurowski~\cite{kowalik-kurowski} described an algorithm that, given an unweighted planar graph $G$ and a constant $d$, with a $O(n)$ time preprocessing can establish in $O(1)$ time if the distance between two vertices in $G$ is at most $d$ and, if so, computes it in $O(1)$ time.
{
\renewcommand{\thecorollary}{\ref{cor:bounded}}
\begin{corollary}
Let $G$ be a planar graph with unit edge capacity where only a constant number of vertices have degree greater than a fixed constant $d$. Then we can compute the  vitality of all vertices in $O(n)$ time.
\end{corollary}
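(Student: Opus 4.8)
The edges are already handled by Corollary~\ref{cor:unweighted}, so we only need to treat vertex vitality. Since all capacities equal one we have $L=O(n)$, hence by the machinery behind Corollary~\ref{cor:integer} (the algorithms of~\cite{eisenstat-klein,balzotti-franciosa_2}) we compute $\MF$, the graph $D$, and the forest $U$ in $O(n)$ time. Let $S\subseteq V(G)$ be the set of vertices of degree larger than $d$; by hypothesis $|S|=O(1)$. For each $v\in S$ we compute $vit(v)$ exactly following Theorem~\ref{th:brutto}: insert a vertex $u_f$ inside the face $f=f^D_v$, join it to all vertices of $f$ by zero-weight edges, run one SSSP from $u_f$ in $D$, and read off $\min_{i\in[k]}\{d_i(f)\}$, $\min_{i\in[k]}\{d_i(q^x_f)\}$, $\min_{i\in[k]}\{d_i(q^y_f)\}$, $\dist_D(f,q^x_f)$ and $\dist_D(f,q^y_f)$, hence $\MF_v$ by Proposition~\ref{prop:vitality_vertices}. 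This is $O(n)$ per vertex, so $O(|S|n)=O(n)$ overall.

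It remains to compute $vit(v)$ for every $v\in R:=V(G)\setminus S$, where $\deg(v)\le d$. With unit capacities $vit(v)\le\deg(v)\le d$, so $\MF_v\in[\MF-d,\MF]$, and since vitalities are integral it suffices to run the algorithm of Theorem~\ref{th:main_real_vertex} with $c=d$ and $\delta=1$, which returns exact values. Inspecting that algorithm on this input: computing $D$ and $U$ is $O(n)$ as above; the approximations $\alpha(f)$ (Lemma~\ref{lemma:cost_L_r}), $\beta(f)$ and $\gamma(f)$ (Lemma~\ref{lemma:cost_L_r_vertex}) cost $O(n)$ per class $L_r$ and there are only $O(c/\delta)=O(d)=O(1)$ classes, hence $O(n)$ in total; the final assembly through Proposition~\ref{prop:vitality_vertices} is $O(n)$. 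The unique super-linear ingredient is Lemma~\ref{lemma:centro_faccia_nlogn}, which outputs $\dist_D(f,q^x_f)$ and $\dist_D(f,q^y_f)$ for the relevant faces in $O(n\log n)$ time via Klein's multiple-source shortest paths~\cite{klein2005}. Shaving this logarithm is the entire content of the corollary.

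For this, note that for every $v\in R$ the face $f=f^D_v$ has $|V(f)|=O(d)=O(1)$ and $|q^x_f|,|q^y_f|\le|I_v|\le\deg(v)\le d$, so $\dist_D(f,q^y_f)=\min\{\dist_D(u,y)\,:\,u\in V(f),\,y\in q^y_f\}$ is the minimum of $O(1)$ point-to-point distances in the unweighted planar graph $D$, and similarly for $\dist_D(f,q^x_f)$; altogether only $O(n)$ distance queries in $D$ arise. The plan is to answer each of them in $O(1)$ time, after an $O(n)$-time preprocessing of $D$, by the algorithm of Kowalik and Kurowski~\cite{kowalik-kurowski}.

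The main obstacle is that the Kowalik--Kurowski structure reports a distance only when it does not exceed an a priori fixed constant, whereas $\dist_D(f,q^y_f)\ge\MF_v\ge\MF-d$ can be arbitrarily large; a direct application therefore fails. The resolution I would pursue is to reduce each such query to a point-to-point query in a graph where the answer is forced to be $O(d)$: intuitively, one contracts the already-computed shortest $x_iy_i$ paths along which $D$ was sliced, charging their lengths to the precomputed values $d_i$, so that within the slice $\Omega_{f^-_r,f^+_r}$ of Lemma~\ref{lemma:cost_L_r_vertex}, which contains $q^y_f$ and (one checks) $V(f)$, the residual distance from $V(f)$ to $q^y_f$ collapses to a quantity bounded by a function of $d$ alone; this residual graph can then be searched by Kowalik--Kurowski, the bookkeeping of the contracted lengths adding only $O(n)$ total work by Lemmas~\ref{lemma:Omega_i} and~\ref{lemma:Henzinger_in_Omega_i}. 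Proving rigorously that this residual distance really is $O(d)$ — so that the Kowalik--Kurowski threshold can be chosen independently of $n$ — is the technical heart; granting it, $\dist_D(f,q^x_f)$ and $\dist_D(f,q^y_f)$ for all $v\in R$ are obtained in $O(n)$ time, Lemma~\ref{lemma:centro_faccia_nlogn} is bypassed, and the entire computation of all vertex vitalities runs in $O(n)$.
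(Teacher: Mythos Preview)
Your overall reduction matches the paper's exactly: treat the $O(1)$ high-degree vertices via Theorem~\ref{th:brutto}, run the machinery of Theorem~\ref{th:main_real_vertex} with $c=d$ and $\delta=1$ for the remaining vertices, and observe that Lemma~\ref{lemma:centro_faccia_nlogn} is the only super-linear step. The paper then replaces that lemma by Kowalik--Kurowski queries for the $O(n)$ distances $\dist_D(y_i,z)$ with $z\in V(f)$ and $x_i\in V(f)$, and the counting $\sum_{i}\sum_{f\in F_i}|V(f)|\le\sum_i |F_i|\,d=O(n)$ is precisely what you also arrive at.

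Where you diverge is that you explicitly flag the threshold issue: \cite{kowalik-kurowski} only returns a distance when it is below a preset constant, whereas each $\dist_D(y_i,z)$ is at least $d_i-d\ge\MF-d$, hence unbounded in $n$. The paper's proof simply writes ``if we use the algorithm in~\cite{kowalik-kurowski}'' and counts queries, without saying what threshold is used or why it suffices; so on this specific point you are being \emph{more} careful than the paper, not less.

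The genuine gap in your write-up is that your proposed resolution is only a sketch, and you say so yourself. You assert that after contracting the boundary paths the ``residual'' distance inside a slice drops to $O(d)$, but you do not prove it, and the construction as described has further issues: the slices $\widetilde\Omega_{\cdot,\cdot}$ of Lemma~\ref{lemma:Henzinger_in_Omega_i} replace shared boundary subpaths by single \emph{weighted} edges, which leaves the unweighted regime that \cite{kowalik-kurowski} requires; and you would still need to argue that whatever bounded quantity you compute actually recovers $\min\{\dist_D(f,q^y_f),\MF\}$ (which is all Proposition~\ref{prop:vitality_vertices} and the hypothesis of Lemma~\ref{lemma:cost_L_r_vertex} truly need). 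In short, you have correctly isolated the crux and reproduced the paper's strategy, but neither your sketch nor the paper's one-line invocation of \cite{kowalik-kurowski} supplies the missing justification.
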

\addtocounter{corollary}{-1}
}
\begin{proof}
By above discussion, Corollary~\ref{cor:unweighted} and the proof of Theorem~\ref{th:main_real_vertex}, it suffices to show that we can compute $\dist_{D}(f_v,q^y_f)$ in $O(n)$ worst-time for all $f\in F^y$.
For every $i\in[k]$ let $F_i\subseteq F^y$ be the set of faces such that $x_i\in f$, for all $f\in F_i$. Note that if $|F_i|=m$, then $deg_{D}(x_i)=m+1$.

Let $d$ be the maximum degree of $G$. 
We need $d_{{D}}(y_i,z)$, for all $i\in[k]$ and $z\in V(f)$, for all $f\in F_i$. If we use the algorithm in~\cite{kowalik-kurowski}, then we spend $\sum_{i\in[k]}\left(\sum_{f\in F_i}|V(f)|\right)\leq\sum_{i\in[k]}|F_i|d =\sum_{i\in[k]}(deg_{{D}}(x_i)+1)d=O(n)$ total time.\qed
\end{proof}

\section{Conclusions and open problems}\label{sec:conclusions}

We proposed algorithms for computing an additive guaranteed approximation of the vitality of all edges or vertices with bounded capacity with respect to the max flow from $s$ to $t$ in undirected planar graphs. These results are relevant for determining the vulnerability of real world networks, under various capacity distributions.

It is still open the problem of computing the exact vitality of all edges of an undirected planar graph within the same time bound as computing the max flow value, as is already known for the $st$-planar case.


\end{document}